\renewcommand{\leq}{\leqslant}
\renewcommand{\geq}{\geqslant}
\renewcommand{\emptyset}{\varnothing}
\newcommand{\regle}[1]{(#1)}
\crefname{thm}{Theorem}{Theorems}
\crefname{lem}{Lemma}{Lemmas}
\crefname{cor}{Corollary}{Corollaries}
\crefname{defi}{Definition}{Definitions}
\crefname{rem}{Remark}{Remarks}
\crefname{exa}{Example}{Examples}
\crefname{conj}{Conjecture}{Conjectures}
\crefname{nota}{Notation}{Notations}
\renewcommand{\bibfont}{\small}
\definecolor{R}{RGB}{244,7,110}
\definecolor{L}{RGB}{6,170,162}
\newcommand{\ie}{\emph{i.e.}\xspace}
\newcommand{\NN}{\mathbf{N}}
\newcommand{\MRel}{\mathbf{MRel}}
\newcommand{\Dd}{\mathcal{D}}
\newcommand{\Mm}{\mathcal{M}}
\newcommand{\Pp}{\mathcal{P}}
\newcommand{\Ss}{\mathcal{S}}
\newcommand{\Tt}{\mathcal{T}}
\newcommand{\Vv}{\mathcal{V}}
\newcommand{\sS}{\mathfrak{S}}
\newcommand{\relcomp}{\mathop{\text{\textbf{;}}}}
\newcommand{\flt}{\mathrel{\widetilde{\longrightarrow}}}
\newcommand{\defeq}{\coloneqq}
\newcommand{\ptnewline}{\\[4ex]}
\NewDocumentEnvironment{prooftabular}{o}{%
	\vskip 2ex
	\begin{center}
	\begin{tabular}{c@{\IfValueTF{#1}{#1}{\qquad}}c}
}{%
	\end{tabular}
	\end{center}
	\vskip 2ex
}
\newcommand{\ax}{\mathrm{ax}}
\newcommand{\coI}{\mathrm{coI}}
\newcommand{\later}{\mathop\triangleright}
\newcommand{\linf}[1]{\texorpdfstring%
	{\Lambda_{\,\infty}^{#1}}%
	{lambda-infinity-#1}
}
\newcommand{\bisim}{=}
\newcommand{\nbisim}{\neq}
\newcommand{\lbinf}[1]{\Lambda_{\,\infty\bot}^{#1}}
\newcommand{\BT}[1]{\mathrm{BT}\left(#1\right)}
\newcommand{\eqbohm}{\mathrel{=_{\mathcal{B}}}}
\newcommand{\FV}[1]{\mathrm{FV}\left(#1\right)}
\DeclareDocumentCommand{\subst}{mmo}{%
	#1 \!\left[ #2 / \IfValueTF{#3}{#3}{x} \right]
}
\newcommand{\flb}{\mathrel{\longrightarrow_\beta}}
\newcommand{\flbs}{\mathrel{\longrightarrow_\beta^*}}
\newcommand{\flbi}{\mathrel{\longrightarrow_\beta^\infty}}
\newcommand{\flbg}[1]{\mathrel{\longrightarrow_{\beta\geq #1}}}
\newcommand{\flbgs}[1]{\mathrel{\longrightarrow_{\beta\geq #1}^*}}
\newcommand{\flbe}[1]{\mathrel{\longrightarrow_{\beta=#1}}}
\newcommand{\flbes}[1]{\mathrel{\longrightarrow_{\beta=#1}^*}}
\newcommand{\flbig}[1]{\mathrel{\longrightarrow_{\beta\geq #1}^\infty}}
\newcommand{\flbb}{\mathrel{\longrightarrow_{\beta\bot}}}
\newcommand{\flbbr}{\mathrel{\longrightarrow_{\beta\bot}^?}}
\newcommand{\flbbs}{\mathrel{\longrightarrow_{\beta\bot}^*}}
\newcommand{\flbbi}{\mathrel{\longrightarrow_{\beta\bot}^\infty}}
\newcommand{\flh}{\mathrel{\longrightarrow_h}}
\newcommand{\flhs}{\mathrel{\longrightarrow_h^*}}
\newcommand{\context}[2]{#1\llparenthesis #2\rrparenthesis}
\newcommand{\hole}{*}
\newcommand{\lr}[1]{\Lambda_r^{#1}}
\newcommand{\rapp}[1]{\left\langle#1\right\rangle}
\newcommand{\Mfin}{\Mm_{\mathrm{fin}}}
\newcommand{\ms}[1]{\bar{#1}}
\newcommand{\mscard}{\text{\texttt{\#}}}
\newcommand{\freemod}[2]{#1\langle #2 \rangle}
\DeclareDocumentCommand{\lrsums}{o}{%
	\freemod{2}{\lr{\IfValueT{#1}{#1}}}%
}
\DeclareDocumentCommand{\rsubst}{mmo}{%
	#1 \!\left\langle #2 / \IfValueTF{#3}{#3}{x} \right\rangle
}
\newcommand{\flro}{\mathrel{\longmapsto_r}}
\newcommand{\flror}{\mathrel{\longmapsto_r^?}}
\newcommand{\flr}{\mathrel{\longrightarrow_r}}
\newcommand{\flrs}{\mathrel{\longrightarrow_r^*}}
\newcommand{\flrgo}[1]{\mathrel{\longmapsto_{r\geq #1}}}
\newcommand{\flrg}[1]{\mathrel{\longrightarrow_{r\geq #1}}}
\newcommand{\flrgs}[1]{\mathrel{\longrightarrow_{r\geq #1}^*}}
\newcommand{\flrt}{\mathrel{\flt_r}}
\newcommand{\flrst}{\mathrel{\widetilde{\flrs}}}
\newcommand{\flrgst}[1]{\mathrel{\widetilde{\flrgs{#1}}}}
\newcommand{\flrho}{\mathrel{\longmapsto_{rh}}}
\newcommand{\flrh}{\mathrel{\longrightarrow_{rh}}}
\newcommand{\nf}[2]{\mathrm{nf}_{#1}\!\left(#2\right)}
\newcommand{\nft}[2]{\widetilde{\mathrm{nf}}_{#1}\!\left(#2\right)}
\newcommand{\rcontext}[2]{#1\llbracket #2\rrbracket}
\newcommand{\lrpl}[2]{\Lambda_{r}^{#2+#1}}
\newcommand{\apptay}{\mathrel{\ltimes}}
\newcommand{\Tay}{\Tt}
\newcommand{\eqtay}{\mathrel{=_{\Tay}}}
\newcommand{\apptaymax}[1]{\apptay_{\hspace*{-1pt}< #1}}
\newcommand{\Taymax}[1]{{\Tt\hspace{-3pt}}_{< #1}}
\newcommand{\Taymaxbang}[1]{\Tt^!_{< #1}}
\newcommand{\myorjourn}[2]{#2}	
\begin{document} 

\title[%
	Finitary simulation of infinitary 
	\texorpdfstring{\MakeLowercase{β}}{Beta}-reduction
	via Taylor expansion
]{%
	Finitary simulation of infinitary 
	\texorpdfstring{\MakeLowercase{β}}{Beta}-reduction
	\texorpdfstring{\\}{}
	via Taylor expansion, and applications
}

\ifcsname bibfont\endcsname%
  \renewcommand*{\bibfont}{\normalfont\bibliofont}
\else
\fi

\author[R.~Cerda]{Rémy Cerda\lmcsorcid{0000-0003-0731-6211}}[a]
\author[L.~Vaux~Auclair]{Lionel Vaux Auclair\lmcsorcid{0000-0001-9466-418X}}[a]
\address{Aix-Marseille Université, CNRS, I2M}
\email{%
	\href{mailto:remy.cerda@univ-amu.fr}{\tt remy.cerda@univ-amu.fr},
	\href{mailto:lionel.vaux@univ-amu.fr}{\tt lionel.vaux@univ-amu.fr}
}

\keywords{lambda-calculus, infinitary rewriting, taylor expansion, program approximation, semantics of program languages}

\begin{abstract}
Originating in Girard's Linear logic, Ehrhard and Regnier's Taylor expansion of λ-terms has been broadly used as a tool to approximate the terms of several variants of the λ-calculus. Many results arise from a Commutation theorem relating the normal form of the Taylor expansion of a term to its Böhm tree. This led us to consider extending this formalism to the infinitary λ-calculus, since the $\linf{001}$ version of this calculus has Böhm trees as normal forms and seems to be the ideal framework to reformulate the Commutation theorem.

We give a (co-)inductive presentation of $\linf{001}$. We define a Taylor expansion on this calculus, and state that the infinitary β-reduction can be simulated through this Taylor expansion. The target language is the usual resource calculus, and in particular the resource reduction remains finite, confluent and terminating. Finally, we state the generalised Commutation theorem and use our results to provide simple proofs of some normalisation and confluence properties in the infinitary λ-calculus.
\end{abstract}

\maketitle

\tableofcontents


\section{Introduction}

The seminal idea of \emph{quantitative semantics}, introduced in the early 1980s by Girard as an alternative to traditional denotational semantics based on Scott domains, is to interpret the terms of the λ-calculus by power series \autocites{Girard88}[for a bief survey see][]{Pagani14}. In this model, each monomial of the interpretation captures a finite approximation of the execution of the interpreted term, and its degree corresponds to the number of times it uses its argument. The parallelism between the decomposition of such a power series into linear maps and the behaviour of the cut-elimination of proofs led Girard to introduce \emph{linear logic} \autocite{Girard87, Girard95}, which has been a major and fruitful refinement of the Curry-Howard correspondence.

In the early 2000s, Ehrhard reformulated Girard's quantitative semantics in a more standard algebraic framework, where terms are interpreted as analytic maps between certain vector spaces \autocite{Ehrhard05}. The notion of differentiation, that is available in this framework, was then brought back to the syntax by Ehrhard and Regnier in their \emph{differential λ-calculus} \autocite{EhrhardRegnier03}. Eventually, they defined the operation of \textit{Taylor expansion} which maps λ-terms to infinite sums of resource terms — the latter are the terms of the resource λ-calculus, which is the finitary, purely linear fragment of the differential λ-calculus. Each term of the sum thus gives a finite approximation of the operational behaviour of the original term \autocites{EhrhardRegnier08}[for a lightened presentation]{BarbarossaManzonetto20}.

The strength of this tool is the strong normalisation property of resource terms, and the fact that Taylor expansion commutes with normalisation:
the normal form of the Taylor expansion of a term is the Taylor expansion of the Böhm tree of this term \autocite{EhrhardRegnier06}.
This Commutation theorem enables one to deduce properties of some λ-terms from 
the properties of their Taylor expansion;
typically, properties of the (possibly) non-terminating execution of a λ-term,
previously characterized by coinductive objects like its Böhm tree,
are proved \emph{via} the Taylor expansion by mere induction.
This approach has been successfully applied not only to the ordinary λ-calculus \autocites{EhrhardRegnier06,Olimpieri18,Olimpieri20,BarbarossaManzonetto20}, but also to nondeterministic \autocite{BucciarelliAl12,Vaux19}, probabilistic \autocite{DalLagoZorzi12,DalLagoLeventis19}, call-by-value \autocite{KerinecAl20}, and call-by-push-value \autocite{ChouquetTasson20} calculi, as well as Parigot's λμ-calculus \autocite{Barbarossa22}.

\medskip

In this paper, we aim to extend this formalism
to the \emph{infinitary} λ-calculus. 
Böhm trees \autocite{Bohm68, Barendregt77} 
were already a kind of infinitary λ-terms, 
but an infinitary calculus (having infinite terms and infinite reductions) 
was first introduced in the 1990s by Kennaway, Klop, Sleep 
and de~Vries \autocite{KennawayAl95, KennawayAl97} 
and by Berarducci \autocite{Berarducci96}. 
Initially presented as the metric completion 
of the set of λ-terms (considered as finite syntactic trees), 
the set of infinite λ-terms has been reformulated
as an ideal completion \autocite{Bahr18}, 
and maybe more crucially as the \enquote{coinductive version} 
of the λ-calculus \autocite{Joachimski04,EndrullisAl13, Czajka14}.

Even if the \enquote{plain} infinitary λ-calculus does not enjoy confluence, several results of confluence and of normalisation modulo \enquote{meaningless} terms have been established \autocite{KennawayAl97, Czajka14, Czajka20}, as well as a standardisation theorem using coinductive techniques \autocite{EndrullisAl13}. Some normalisation properties have also been characterised using non-idem\-potent intersection types \autocite{Vial17, Vial21}.

These results are often only established in one of the different variants of the infinitary λ-calculus. Indeed, Kennaway \emph{et al.} identify eight variants depending on the metric one chooses on syntactic trees (each of the three constructors of λ-terms can \enquote{add depth} to the term), among which only three enjoy reasonable properties (in addition to the finitary variant). Following the authors, they are called $\linf{001}$, $\linf{101}$ and $\linf{111}$. In the following, we will concentrate on the $\linf{001}$ variant, that is the one where a term can have an infinite branch only if its right applicative depth tends to infinity. 

The motivation for choosing $\linf{001}$ is that the normal forms of this calculus are the Böhm trees which are, as we recalled above, strongly related to the Taylor expansion.
In some sense, $\linf{001}$ is the \enquote{natural} setting to define and manipulate the Taylor expansion of ordinary λ-terms, as we hope to advocate for. Indeed, it enables us to state Ehrhard and Regnier's Commutation theorem without any particular definition of the Taylor expansion of Böhm trees, and then to prove some classical results that are preserved in $\linf{001}$, like characterisations of head- and β-normalisation or the Genericity lemma.

\medskip

Since we want to take advantage of the modern, coinductive approach of \autocite{EndrullisAl13}, we will provide a definition of $\linf{001}$ using coinduction. However, some technicalities arise from the fact that this is not the \enquote{fully coinductive version} of λ-calculus and that one has to mix induction and coinduction to manipulate terms and reductions in $\linf{001}$.

Such mixings are not new and have been appearing in various areas for several decades. In particular, type systems featuring inductive and coinductive types have been presented in the late 1980s by Hagino and Mendler \autocite{Hagino87,Mendler91}, and even Eratosthenes' sieve can be seen as an inductive-coinductive structure \autocite{Bertot05}. A wide range of examples is provided by Basold's PhD thesis, which builds a whole type-theoretic framework for inductive-coinductive reasoning \autocite{Basold18}. Several previous formalisations of mixed induction and coinduction had been proposed, in particular in \autocites{DanielssonAltenkirch09}{Czajka19}{DalLago16}. We will provide a mixed formal system inspired by the latter.

\paragraph{Contributions and structure of the paper.}

In \cref{linf}, we recall the definition of the infinitary λ-calculus
$\linf{001}$ and discuss the setting we choose for this mixed
inductive-coinductive construction.

In \cref{taylor}, we extend the Taylor expansion of λ-terms to this calculus,
and we show our main result in \cref{simul}: the reduction of the Taylor
expansion provides a simulation of the infinitary β-reduction
(\cref{simul:thm:simul_flbi}) by a (possibly infinite) superposition of finite
reductions.
This adapts the results of the second author for the ordinary β-reduction
\autocite{Vaux17,Vaux19}, with three important differences, induced by the infinitary
nature of our setting:
\begin{enumerate}
	\item In \autocite{Vaux19}, one step of β-reduction is simulated by a
		superposition of single steps of \emph{parallel} reduction on
		resource terms (reducing all the copies of the fired redex
		simultaneously in each component).
		Here, since one step of infinitary β-reduction amounts to a possibly
		infinite sequence of single β-reductions, we can no longer bound the
		number of reductions to be performed on resource terms in the
		simulation.
		This forces us to consider the reflexive and transitive closure of
		reduction, rather than parallel reduction, as the underlying dynamics
		on resource terms.
	\item Due to the previous point, establishing the simulation is more
		demanding.
		We first obtain the simulation of finite sequences of reductions
		exactly as in \autocite{Vaux19}.
		Then we exploit the fact that the depth of fired redexes in an infinite
		β-reduction must tend to infinity, together with the strictly finitary
		nature of the reduction of resource terms (in particular, the fact 
		that the size of resource terms is nonincreasing under reduction):
		we deduce the general simulation result by a kind of diagonal
		argument (see \cref{simul:sub:diag}).
	\item Also due to the first point, we can no longer consider arbitrary
		infinite weighted sums of resource terms as the target of Taylor
		expansion: reducing arbitrarily the components of such a sum
		might yield infinite sums of coefficients (an example
		is given in \cref{taylor:rem:quantitative}).
		We thus choose to keep to a \emph{qualitative} setting only,
		\emph{i.e.} use boolean coefficients:
		equivalently, we consider sets of resource terms, rather than sums.
\end{enumerate}

Finally, in \cref{norm}, we use this framework to prove the
Commutation theorem (\cref{norm:thm:commut}).
We show that this provides new proofs of the classical results of normalisation
(\cref{norm:lem:bbot_wn}) and confluence (\cref{norm:cor:bbot_confl}),
as well as characterisations of solvability and normalisation in $\linf{001}$
similar to the ones known for the ordinary λ-calculus.
We also show an infinitary Genericity lemma (\cref{norm:thm:genericity}),
adapting the technique introduced by \autocites{BarbarossaManzonetto20}.

\section{The infinitary \texorpdfstring{{\upshape λ}}{lambda}-calculus}
\label{linf}

\subsection{The set $\linf{001}$ of 001-infinitary \texorpdfstring{λ}{lambda}-terms}


The original definition of the infinitary λ-calulus by Kennaway \emph{et al.} \autocite*{KennawayAl97} was topological. Finite terms were represented by their syntactic tree, and the usual distance $d$ on trees was defined on them by:
	\[ d(M,N) = 2^{-(\text{the smallest depth at which $M$ and $N$ differ})}. \]
The space of infinitary λ-terms was obtained by taking the metric completion. 

One can notice that this definition is dependent on the notion of \emph{depth}. Indeed, the authors defined eight variants of $\linf{}$, each one of them using a different notion of depth of (an occurrence of) a subterm $N$ in a term $M$:
	\begin{flalign*}
	&& \mathrm{depth}^{abc}_N(N) &= 0, \\
	&& \mathrm{depth}^{abc}_{\lambda x.M}(N) &= a+\mathrm{depth}^{abc}_M(N), \\
    && \mathrm{depth}^{abc}_{(M)M'}(N) &= b+\mathrm{depth}^{abc}_M(N)
    	& \text{if the occurrence is in $M$,}\\
    && \mathrm{depth}^{abc}_{(M)M'}(N) &= c+\mathrm{depth}^{abc}_{M'}(N)
    	& \text{otherwise,}
	\end{flalign*}
where $a,b,c\in\{0,1\}$. 
This gives rise to eight spaces $\linf{abc}$, 
where $\linf{000}$ is the set of finite λ-terms $\Lambda$ 
and $\linf{111}$ contains all infinitary λ-terms 
(notice that \emph{infinitary} terms are not necessarily \emph{infinite}, 
since all finite terms also belong to the spaces defined). 
The depth of all infinite branches in a term of $\linf{abc}$ 
must go to infinity, that is to say 
such a branch must cross infinitely often a node increasing the depth. 
In particular, for the $\linf{001}$ version we are interested in, 
the only infinite branches allowed 
are those crossing infinitely often the right side of an application. 
In \cref{linf:fig:001oupas}, the left term is in $\linf{001}$ 
whereas the right one is not
(notice that in the former term, the infinite branch also
crosses infinitely many lambdas;
this is \emph{not} forbidden, provided this infinite branch
crosses infinitely many right sides of applications).

\begin{figure}\begin{center}
	\raisebox{-0.5\height}{
	\begin{tikzpicture}
	\path	( 0, 0)	node (lf)	{ $\lambda f$ }
	++		( 0,-1)	node (l0)	{ $\lambda x_0$ }
	++		( 0,-1)	node (a0)	{ $@$ }
	+		(-1,-1)	node (f0)	{ $f$ }
	++		( 1,-1)	node (l1)	{ $\lambda x_1$ }
	++		( 0,-1)	node (a1)	{ $@$ }
	+		(-1,-1)	node (f1)	{ $f$ }
	++		( 1,-1)	node (l2)	{ $\lambda x_2$ }
	++		( 0,-1)	node (a2)	{ $@$ }
	+		(-1,-1)	node (f2)	{ $f$ }
	+		( 1,-1)	node (inf)	{};
	\draw (lf) -- (l0);
	\draw (l0) -- (a0);
	\draw (a0) -- (f0);
	\draw (a0) -- (l1);
	\draw (l1) -- (a1);
	\draw (a1) -- (f1);
	\draw (a1) -- (l2);
	\draw (l2) -- (a2);
	\draw (a2) -- (f2);
	\draw [dotted] (a2) -- (inf);
	\end{tikzpicture}
	}
	\quad
	\raisebox{-0.5\height}{
	\begin{tikzpicture}
	\path 	( 0, 0)	node (l0)	{ $\lambda x_0$ }
	++		( 0,-1)	node (a0)	{ $@$ }
	+		( 1,-1)	node (x0)	{ $x_0$ }
	++		(-1,-1)	node (l1)	{ $\lambda x_1$ }
	++		( 0,-1)	node (a1)	{ $@$ }
	+		( 1,-1)	node (x1)	{ $x_1$ }
	++		(-1,-1)	node (l2)	{ $\lambda x_2$ }
	++		( 0,-1)	node (a2)	{ $@$ }
	+		( 1,-1)	node (x2)	{ $x_2$ }
	++		(-1,-1)	node (inf)	{};
	\draw (l0) -- (a0);
	\draw (a0) -- (x0);
	\draw (a0) -- (l1);
	\draw (l1) -- (a1);
	\draw (a1) -- (x1);
	\draw (a1) -- (l2);
	\draw (l2) -- (a2);
	\draw (a2) -- (x2);
	\draw [dotted] (a2) -- (inf);
	\end{tikzpicture}
	}
	\caption{Two infinitary terms, only the left one of which is 
	001-infinitary.}
	\label{linf:fig:001oupas}
\end{center}\end{figure}

All versions enjoy weak normalisation, 
provided one identifies all \enquote{0-active} terms 
(\emph{i.e.} those terms such that every reduct contains a redex at depth~0) 
with a single constant~$\bot$.
For instance, in $\linf{000}$, this means identifying all non normalising terms;
and in $\linf{001}$ this means identifying all non head-normalising terms.
With this extended reduction, only three versions enjoy confluence, and thus unicity of normal forms: $\linf{001}$, $\linf{101}$ and $\linf{111}$.
Their respective normal forms are three already known notions of 
infinite expansions of a term, namely 
Böhm trees \autocites{Barendregt77}[§~2.1.13]{Barendregt84}, 
Lévy-Longo trees \autocite{Levy75,Longo83,Ong88} 
and Berarducci trees \autocite{Berarducci96}.
The two latter equate less terms than the unsolvable ones, and thus provide a more fine-grained description of the computational behaviour of λ-terms. 


As an alternative, the infinitary λ-calculus can be seen as 
the \enquote{coinductive version} of the λ-calculus. 
If the set $\Lambda$ of the λ-terms is built inductively on the signature:
	\begin{equation*}
		M, N, \ldots \ \defeq \ 
		x\in\Vv \ |\ \lambda x. M \ |\ (M)N,
	\tag{$\sigma$}
	\end{equation*}
given a fixed set $\Vv$ of variables 
(that is, it is the initial algebra of the 
corresponding monotonous functor $\Vv \ +\ \lambda\Vv.- \ +\ (-)- : 
\mathbf{Set}^3 \to \mathbf{Set}$), 
then the set $\linf{}$ of all infinitary 
λ-terms is built coinductively on the same signature, 
as the terminal coalgebra 
of the same functor \autocite[for a detailed reminder of these constructions, 
see for instance][]{AdamekAl18}. This construction is summarised in the 
following notation, using fix-points:
	\[
	\Lambda = \mu X.( \Vv + \lambda\Vv.X + (X)X ) \qquad
	\linf{} = \nu X.( \Vv + \lambda\Vv.X + (X)X ).
	\]

This coinductive approach has been fruitfully exploited by Endrullis and Polonsky \autocite{EndrullisAl13} and Czajka \autocite{Czajka14,Czajka20} in the case of $\linf{111}$. We would like to use it in the case of $\linf{001}$, but this implies mixing induction and coinduction in order to distinguish between the \enquote{allowed} and \enquote{forbidden} infinite branches. Thus, using the same notation as above, we provide the following definition.


\begin{defi}[001-infinitary terms] \label{linf:def:l001}
	Given a fixed set of variables $\Vv$, the set $\linf{001}$ of 
	\emph{001-infinitary λ-terms} is defined by:
	\[ \linf{001} = \nu Y. \mu X.( \Vv + \lambda\Vv.X + (X)Y ). \]
\end{defi}

It is beyond the scope of this paper to describe 
a general framework for defining and manipulating 
such a mixed inductive-coinductive set. 
One may consider the type-theoretic system built 
by Basold in his extensive study of this question \autocite{Basold18}. 
As a somehow less technological alternative, 
we interpret the binders $\mu$ and $\nu$ as 
the usual least and greatest fix-point constructions 
in the lattice $(\Pp(\linf{}), \subseteq)$, 
or the initial algebra and terminal coalgebra of the
functors $\Vv \ +\ \lambda\Vv.- \ +\ (-)-$ 
and $\mu X.( \Vv + \lambda\Vv.X + (X)- )$ respectively.


\cref{linf:def:l001} can be unfolded using a mixed formal system (in such a system, simple bars denote inductive rules and double bars denote coinductive rules). This reformulation, inspired by \autocite{DalLago16}, provides a graphical description of terms in $\linf{001}$.

\begin{defi}[001-infinitary terms, using a mixed formal system] \label{linf:def:formsys}
	$\linf{001}$ is the set of all coinductive terms $T$ on the signature 
	$\sigma$\footnote{%
		Notice that for any $M$,
		$\later M$ cannot be a term in $\linf{001}$,
		the only valid derivations being those producing
		terms on the signature $\sigma$,
		\ie ending with one of the first three rules.
	}
	such that $\ \vdash T$ can be derived in the following system:

	\begin{center}
		\begin{prooftree}[center=false]
			\infer0[\Vv]{ \vdash x }
		\end{prooftree}
		\qquad
		\begin{prooftree}[center=false]
			\hypo{ \vdash M }
			\infer1[\lambda]{ \vdash \lambda x.M }
		\end{prooftree}
		\qquad
		\begin{prooftree}[center=false]
			\hypo{ \vdash M }
			\hypo{ \vdash \later N }
			\infer2[@]{ \vdash (M)N }
		\end{prooftree}
		\qquad
		\begin{prooftree}[center=false]
			\hypo{ \vdash M }
			\infer[double]1[\coI]{ \vdash \later M }
		\end{prooftree}
	\end{center}
\end{defi}

\begin{rem}
	To make the coinductive step explicit, we use the \emph{later} modality 
	$\later$ due to \autocite{Nakano00} and named after \autocite{AppelAl07}.
	This formalism could be condensed in the following \enquote{mixed rule} 
	$\regle{@'}$, in a rather unusual fashion:
	\begin{center}
	\begin{prooftree}
		\hypo{ \vdash M }
		\hypo{ \vdash N }
		\infer1{\hspace*{1.2cm}}
		\rewrite{
			\hspace*{\myorjourn{-2.1mm}{-2.5mm}}
			\raisebox{\myorjourn{-1.7mm}{-1.5mm}}{\box\treebox} }
		\infer2[@']{ \vdash (M)N }
	\end{prooftree}
	\end{center}
\end{rem}

\begin{exa}
	Have $Y^* \defeq \lambda f.(f)(f)(f)\dots$, 
	which can be defined coinductively as $Y^* \defeq \lambda f.f^\infty$ 
	where $f^\infty$ is the largest solution 
	of the equation $f^\infty = (f)f^\infty$. 
	This corresponds to the derivation in \cref{linf:fig:exemples:ystar}.
	Similarly, one can show that $(f^\infty) f^\infty \in \linf{001}$
	and $(f^\infty)^\infty \in \linf{001}$,
	as derived in \cref{linf:fig:exemples:finf2,linf:fig:exemples:finfinf}.
\end{exa}

\begin{figure} \centering
	\begin{subfigure}[b]{.28\linewidth} \centering
		\begin{prooftree}
			\infer0{ \vdash f }
			\hypo{\parbox[t][0pt]{0pt}{
				\begin{tikzpicture}[baseline]
				\draw [-{To[scale=1.3]},dashed] (0,0)
				to[out=90,in=180] (0.8,0.5)
				to[out=0,in=90] (1.5,-1.2)
				to[out=270,in=0] (0.9,-1.7);
				\end{tikzpicture}
			}}
			\infer1{ \vdash f^\infty }
			\infer[double]1{ \vdash \later f^\infty }
			\infer2{ \vdash f^\infty = (f)f^\infty }
		\end{prooftree} \hspace*{0.5cm}
		\caption{} \label{linf:fig:exemples:finf}
	\end{subfigure}
	\begin{subfigure}[b]{.2\linewidth} \centering
		\begin{prooftree}
			\hypo{ \text{\subref{linf:fig:exemples:finf}} }
			\infer1{ \vdash f^\infty }
			\infer1{ \vdash Y^* = \lambda f.f^\infty }
		\end{prooftree}
		\caption{} \label{linf:fig:exemples:ystar}
	\end{subfigure}
	\begin{subfigure}[b]{.2\linewidth} \centering
		\begin{prooftree}
			\hypo{ \text{\subref{linf:fig:exemples:finf}} }
			\infer1{ \vdash f^\infty }
			\hypo{ \text{\subref{linf:fig:exemples:finf}} }
			\infer1{ \vdash f^\infty }
			\infer[double]1{ \vdash \later f^\infty }
			\infer2{ \vdash (f^\infty)f^\infty }
		\end{prooftree}
		\caption{} \label{linf:fig:exemples:finf2}
	\end{subfigure}
	\begin{subfigure}[b]{.28\linewidth} \centering
		\begin{prooftree}
			\hypo{ \text{\subref{linf:fig:exemples:finf}} }
			\infer1{ \vdash f^\infty }
			\hypo{\parbox[t][0pt]{0pt}{
				\begin{tikzpicture}[baseline]
				\draw [-{To[scale=1.3]},dashed] (0,0)
				to[out=90,in=180] (0.8,0.5)
				to[out=0,in=90] (1.5,-1.2)
				to[out=270,in=0] (0.9,-1.7);
				\end{tikzpicture}
			}}
			\infer1{ \vdash (f^\infty)^\infty }
			\infer[double]1{ \vdash \later (f^\infty)^\infty }
			\infer2{ \vdash (f^\infty)^\infty }
		\end{prooftree} \hspace*{0.5cm}
		\caption{} \label{linf:fig:exemples:finfinf}
	\end{subfigure}
	\caption{Some derivations corresponding to terms in $\linf{001}$.
	Notice that the loops are correct because they cross a coinductive rule.}
	\label{linf:fig:exemples}
\end{figure}

\begin{nota} \label{linf:nota:puissances}
	Given $M, N\in\linf{001}$ two terms and $k\in\NN$ an integer, we define:
	\[
	(M)N^{(k)} \defeq (\dots((M) \underbrace{N)\dots)N}_{\text{$k$ terms}}
	\qquad
	N^k \defeq \underbrace{(N)\dots (N)}_{\text{$k-1$ terms}} N
	\]
	The corresponding trees are described in \cref{linf:fig:puissances}. Notice that the term $N^\infty$ introduced in the previous example is coherent with this notation, whereas there is no possible $(M)N^{(\infty)}$ in $\linf{001}$.
\end{nota}

\begin{figure}\begin{center}
	\begin{tikzpicture}[node distance=15mm]
	\node (ak) { $@$ };
	\node (a2) [below left of=ak] { $@$ };
	\draw [dotted] (ak) -- (a2);
	\node (a1) [below left of=a2] { $@$ };
	\draw (a2) -- (a1);
	\node (m) [below left of=a1] { $M$ };
	\draw (a1) -- (m);
	\node (n1) [below right of=a1] { $N$ };
	\draw (a1) -- (n1);
	\node (n2) [below right of=a2] { $N$ };
	\draw (a2) -- (n2);
	\node (nk) [below right of=ak] { $N$ };
	\draw (ak) -- (nk);
	\end{tikzpicture}
	\hspace{2cm}
	\begin{tikzpicture}[node distance=15mm]
	\node (a1) { $@$ };
	\node (a2) [below right of=a1] { $@$ };
	\draw (a1) -- (a2);
	\node (ak) [below right of=a2] { $@$ };
	\draw [dotted] (a2) -- (ak);
	\node (nk) [below right of=ak] { $N$ };
	\draw (ak) -- (nk);
	\node (nkm1) [below left of=ak] { $N$ };
	\draw (ak) -- (nkm1);
	\node (n2) [below left of=a2] { $N$ };
	\draw (a2) -- (n2);
	\node (n1) [below left of=a1] { $N$ };
	\draw (a1) -- (n1);
	\end{tikzpicture}
	\caption{The terms $(M)N^{(k)}$ and $N^k$.}
	\label{linf:fig:puissances}
\end{center}\end{figure}


\subsection{What about \texorpdfstring{α}{alpha}-equivalence?}

As one usually does when working with λ-terms, we consider the terms up to α-equiva\-lence (renaming of bound variables) in the following. In particular, we will define substitution using Barendregt's variable convention, that is considering that any term has disjoint bound and free variables, which is usually achieved by renaming conflictual bound variables with fresh ones \autocite[§~2.1.13]{Barendregt84}. 

However, this requires some precautions in an infinitary setting 
since we could consider an infinite term $M$ such that $\FV{M}=\Vv$,
which would prevent us from taking a fresh variable. 
This obstacle can be overcome using some tricks, 
like taking a non-countable variable set $\mathcal{V}$, 
or ordering it to be able to implement Hilbert's hotel 
— which is usually done when the proofs are formalised using De~Bruijn indices 
\autocite{deBruijn72,EndrullisAl13, Czajka20}.

One can also use nominal sets \autocite{GabbayPitts02, Pitts13} to directly define the quotient of the infinitary λ-calculus modulo α-equivalence as the terminal coalgebra for some functor. This construction yields a corecursion principle allowing to define substitution and normal forms for $\linf{111}$ \autocite{KurzPetrisanAl12}. There is hope that the same tools could be applied to the specific case of $\linf{001}$.

One more solution, which seems radical but which we believe is appropriate in practice, is to restrict ourselves to infinitary terms whose subterms all contain a finite number of free variables. This makes it easy to get fresh variables to implement Barendregt's convention, while preserving the strength of $\linf{001}$ as a tool to study the infinite behaviour of finite λ-terms. Indeed, all infinitary terms generated by reductions of finitary ones enjoy this property of having finitely many free variables \autocite[Thm.~10.1.23]{Barendregt84}.%
\footnote{Note that, writing $\Lambda(\Gamma)$ for the set of λ-terms whose free variables are in the set $\Gamma$, it is clear that $\Lambda$ is the union of the sets $\Lambda(\Gamma)$ where $\Gamma$ ranges over finite sets of variables.
By contrast, the union of the sets $\linf{abc}(\Gamma)$ --- again, for $\Gamma$ finite --- is a strict subset of $\linf{abc}$, as introduced before.}

For the sake of simplicity, we stick to the presentation using a single class of variables (instead of relying on De Bruijn indices or nominal techniques), and assume without justification that it is always possible to obtain fresh variables.
We believe this question is completely orthogonal to the main matter of the paper anyway, and that our developments could be adapted straightforwardly to any other formalisation of bound variables.

\subsection{Finitary \texorpdfstring{β}{beta}-reduction}

The finitary β-reduction is defined exactly as in the usual λ-calculus. We just have to check that our definitions are consistent with the restrictions we put on infinitary terms.


\begin{defi}[substitution]
	Given $N\in \linf{001}$ and $x\in\Vv$, the \emph{substitution} $\subst 
	{\mathord{-}} N$ of $x$ by $N$ is the operation on terms defined as follows:
	\begin{flalign*}
		&& \subst x N					&\defeq N\\
		&& \subst y N					&\defeq y
			&\text{if $y\neq x$}\\
		&& \subst {(\lambda y.M)} N	&\defeq \lambda y.\subst M N
			&\text{by choosing $y \notin \FV{N}$}\\
		&& \subst {((M)M')} N			&\defeq (\subst M N) \subst {M'} N
	\end{flalign*}
\end{defi}
Note that this definition is not merely by induction, since we consider infinitary terms.
To be formal, given a derivation of ${}\vdash M$, we define a derivation of 
some judgement ${}\vdash M'$, and then set $\subst M N \defeq M'$.
To do so, we build the derivation of ${}\vdash \subst M N $ coinductively, 
following the derivation of ${}\vdash M$;
and inside each coinductive step, we proceed by induction on the finite tree of 
rules other than $\regle{\coI}$ at the root of the derivation of ${}\vdash M$:
	
\begin{itemize}
	\item Case $\regle{\Vv}$.
		Either $M\bisim x$,
		in which case we set $\subst M N \defeq N$
		and derive ${}\vdash \subst M N $ just like ${}\vdash N$;
		or $M\bisim y$ for some $y \neq x$
		and we set $\subst M N \defeq y$
		and derive ${}\vdash \subst M N $ by $\regle{\Vv}$.
	
	\item Case $\regle{\lambda}$.
		We have $M\bisim \lambda y.M'$, where ${}\vdash M'$ and we choose $y 
		\notin \FV{N}$.
		The induction hypothesis applies to the derivation of ${}\vdash M'$, 
		which gives ${}\vdash \subst {M'} N$,
		and we derive ${}\vdash \subst M N $ by $\regle{\lambda}$,
		setting $\subst M N \defeq \lambda y.\subst {M'} N$.
	
	\item Case $\regle{@}$. We have $M\bisim (M')M''$ and the derivation:
		\begin{center}\begin{prooftree}
		\hypo{ \vdots }
		\infer1{\vdash M' }
		\hypo{ \vdots }
		\infer1{\vdash M'' }
		\infer[double]1[\coI]{\vdash \later M'' }
		\infer2[@]{\vdash M }
		\end{prooftree}\end{center}

		As in the previous case,
		the induction hypothesis applies to the derivation of ${}\vdash M'$, 
		which gives ${}\vdash \subst {M'} N$.
		Moreover, under the guard of rule $\regle{\coI}$,
		we apply the construction coinductively,
		which yields a derivation of ${}\vdash\later \subst {M''} N$
		from the derivation of ${}\vdash\later M''$.
		We then derive ${}\vdash \subst M N $ by $\regle{@}$,
		setting $\subst M N \defeq (\subst {M'} N)\subst {M''} N$.
\end{itemize}

\begin{rem}\label{linf:rem:mixed}
	The previous construction has the typical structure of the form of reasoning we use in the next sections, and follows the definition of $\linf{001} = \nu Y. \mu X.( \Vv + \lambda\Vv.X + (X)Y )$: it is \enquote{an induction wrapped into a coinduction}.
	
	Although there is no standard notion of \enquote{proof by coinduction} --- at least, one that would be as well established as reasoning by induction --- the only thing we do here is \emph{producing} coinductive objects — namely, derivation trees.
	The derivation trees we produce are \enquote{legal}, since the coinductive steps correspond to occurrences of the coinductive rule $\regle{\coI}$, the syntactic guard being materialised by the later modality $\later$.
	
	Then, each coinductive step is reached by induction from the previous one, which corresponds to the $\mu X$ in $\linf{001}$. This is just a regular induction on the derivation separating two coinductive rules. Notice that this induction has two \enquote{base cases}: when it stops on the rule $\regle{\Vv}$, and when it reaches a coinductive rule $\regle{\coI}$.
	
	This paper is about λ-calculus and not about foundations of reasoning with inductive-coinductive types, so we will forget as much as possible about reasoning technicalities: we keep a lightweight proof style, as classically done for inductive proofs and as described for instance by \autocite{KozenSilva17} and \autocite{Czajka19} for coinductive reasoning.

	In the following, whenever we claim to define some object or to establish some result \enquote{by nested coinduction and induction}, the reader should thus understand that we actually construct some possibly infinite tree (a term or a derivation), following the structure of some input which is itself a possibly infinite tree.
	We then reason by cases on the root of the input tree, assuming the result of the construction is known for immediate subtrees:
	to ensure that this defines an object in the output type, it is sufficient to check that, each time we reach a coinductive step in the input, we proceed with the construction under the guard of at least one coinductive step in the output.
\end{rem}


\begin{defi}[finitary reduction $\flb$]\label{linf:def:flb}
	The relation $\beta_0$ is defined on $\linf{001}$ by:
	\[\beta_0 \defeq \left\{
		\left((\lambda x.M)N,\subst M N \right),\ 
		M,N\in\linf{001}, x\in\mathcal{V}
	\right\}.\]
	
	The relation $\flb$ is then defined on $\linf{001}$ by induction as the \emph{contextual closure} of $\beta_0$, namely:
	\begin{prooftabular}
		\mbox{\begin{prooftree}
			\hypo{ M\ \beta_0\ N }
			\infer1[\ax_{\beta}]{ M\flb N }
		\end{prooftree}}
	&
		\mbox{\begin{prooftree}
			\hypo{ M\flb N }
			\infer1[\lambda_\beta]{ \lambda x.M \flb \lambda x.N }
		\end{prooftree}}
	\ptnewline
		\mbox{\begin{prooftree}
			\hypo{ M\flb N }
			\infer1[@l_\beta]{ (M)P \flb (N)P }
		\end{prooftree}}
	&
		\mbox{\begin{prooftree}
			\hypo{ M\flb N }
			\infer1[@r_\beta]{ (P)M \flb (P)N }
		\end{prooftree}}
	\end{prooftabular}
\end{defi}

\subsection{Infinitary \texorpdfstring{β}{beta}-reduction}

We extend our calculus with an infinitary β-reduction. As already mentioned, an infinite reduction must \enquote{go to infinity}, that is to say that the depth of fired redexes tends to infinity.

\begin{nota}
	Given a relation $\longrightarrow$, we denote $\longrightarrow^?$ its reflexive closure and $\longrightarrow^*$ its reflexive and transitive closure.
\end{nota}

\begin{defi}[001-infinitary reduction $\flbi$]\label{linf:def:flbi}
	The infinitary reduction $\flbi$ is defined on $\linf{001}$ as the 
	\emph{001-strongly convergent closure} of $\flb$, that is to say by the 
	following mixed formal system:
	\begin{prooftabular}[]
		\mbox{\begin{prooftree}
			\hypo{ M \flbs x }
			\infer1[\ax_\beta^\infty]{ M \flbi x }
		\end{prooftree}}
	&
		\mbox{\begin{prooftree}
			\hypo{ M \flbs \lambda x.P }
			\hypo{ P \flbi P' }
			\infer2[\lambda_\beta^\infty]{ M \flbi \lambda x.P' }
		\end{prooftree}}
	\ptnewline
		\mbox{\begin{prooftree}
			\hypo{ M \flbs (P)Q }{depth}
			\hypo{ P \flbi P' }{depth}
			\hypo{ \later Q \flbi Q' }
			\infer3[@_\beta^\infty]{ M \flbi (P')Q' }
		\end{prooftree}}
	&
		\mbox{\begin{prooftree}
			\hypo{ M \flbi M' }
			\infer[double]1[\coI_\beta^\infty]{ \later M \flbi M' }
		\end{prooftree}}
	\end{prooftabular}
\end{defi}

\cref{linf:def:flbi} provides an inductive-coinductive presentation of the notion of strongly convergent reduction sequences defined by \autocite{KennawayAl97}, in the specific setting of $\linf{001}$: the only coinductive step occurs in argument position in the application rule, which is the position where $\mathrm{depth}^{001}$ is incremented.
In that we follow Dal Lago \autocite{DalLago16}, whereas the fully coinductive approach of Endrullis and Polonsky \autocite{EndrullisAl13} is limited to $\linf{111}$.

\begin{exa}
	The well-known $Y = \lambda f.(\Delta_f)\Delta_f$, with $\Delta_f=\lambda x.(f)(x)x$, satisfies $Y\flbi Y^*$. Indeed:
	\begin{flushleft}\begin{prooftree}
		\hypo{ Y \flbs \lambda f.(\Delta_f)\Delta_f }
		\hypo{ (\Delta_f)\Delta_f \flbs (f)(\Delta_f)\Delta_f }
		\hypo{ f \flbs f }
		\infer1{ f \flbi f }
		\hypo{\parbox[t][0pt]{0pt}{
			\begin{tikzpicture}[baseline]
			\draw [-{To[scale=1.3]},dashed] (0,0)
			to[out=90,in=180] (1.3,0.5)
			to[out=0, in=90] (2.6,-0.5)
			to[out=270,in=90] (2.6,-1.3)
			to[out=270,in=0] (0,-2.1);
			\end{tikzpicture}
		}}
		\infer1{ (\Delta_f)\Delta_f \flbi f^\infty }
		\infer[double]1{ \later (\Delta_f)\Delta_f \flbi f^\infty}
		\infer3{ (\Delta_f)\Delta_f \flbi f^\infty=(f)f^\infty}
		\infer2{ Y \flbi Y^* = \lambda f.f^\infty }
	\end{prooftree}\end{flushleft}
\end{exa}

\begin{rem}
	Definitions~\ref{linf:def:flb} and~\ref{linf:def:flbi} could, again, be formulated in terms of fix-points:
	\begin{align*}
	\flb\quad 
		&\defeq \quad \nu Y.\mu X. \left(\beta_0\ +\ \lambda\Vv.X\ +\ (X)\linf{001}\ +\ (\linf{001})Y\right) \\
	\flbi\quad 
		&\defeq \quad \nu Y.\mu X. \left(\flbs\ +\ \flbs \!\relcomp\, 
		(\lambda\Vv.X)\ +\ \flbs \!\relcomp\, (X)Y \right)
	\end{align*}
	where the functors act on relations, for instance 
	$\lambda\Vv.X = 
	\{ (\lambda v.x_1, \lambda v.x_2)\ |\ v\in\Vv, (x_1,x_2)\in X \}$, 
	and the symbol $\relcomp$ denotes the composition of relations.
\end{rem}



\begin{lem}\label{linf:lem:flbi_transitive}
	\hfill
	\begin{enumerate}
	\item $\flbi$ is reflexive.
	\item $\mathord{\flbs}\subseteq\mathord{\flbi}$.
	\item $\flbi$ is transitive.
	\end{enumerate}
\end{lem}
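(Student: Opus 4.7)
For (1), I proceed by nested coinduction and induction on the derivation of ${}\vdash M$, following the definition of $\linf{001}$: I apply $\ax_\beta^\infty$ (for $M=x$), $\lambda_\beta^\infty$ (for $M = \lambda z.P$), or $@_\beta^\infty$ (for $M = (P)Q$) to conclude, using the trivial reflexivity of $\flbs$ for the finitary premise, the inner inductive hypothesis for the $\flbi$ premise on $P$, and the coinductive hypothesis (wrapped via $\coI_\beta^\infty$ into $\later Q \flbi Q$) for the $\flbi$ premise on $Q$. For (2), given $M \flbs N$, I case on the head form of $N$ and apply the matching rule, using (1) to supply reflexive $\flbi$-premises for the sub-terms of $N$; no induction on $N$ is needed here, only a single unfolding.

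For transitivity (3), my plan is to first establish an auxiliary \emph{left composition} lemma: $M \flbs M' \land M' \flbi N \Rightarrow M \flbi N$. This is immediate by case analysis on the last rule deriving $M' \flbi N$, since in each case the reduction $M \flbs M'$ can simply be prepended to the finitary prefix already present in the premise. Transitivity itself is then proved by nested coinduction and induction on the derivation of $M \flbi N$, combined with a case analysis on the last rule of $N \flbi N'$. When $M \flbi N$ ends with $\ax_\beta^\infty$ or $\lambda_\beta^\infty$, the head form of $N$ is fixed (a variable or an abstraction), which heavily restricts the last rule of $N \flbi N'$, and the conclusion follows by combining left composition with the inductive or coinductive hypothesis on sub-derivations.

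The main obstacle is the case where $M \flbi N$ ends with $@_\beta^\infty$, so $N = (P_1)Q_1$: here the derivation of $(P_1)Q_1 \flbi N'$ may use any of the three axiom rules, since an application may β-reduce to any head form via head β-steps. The hardest sub-case is when $(P_1)Q_1 \flbs x$ and $N' = x$: this finitary reduction must be lifted through the infinitary reductions $P \flbi P_1$ and $\later Q \flbi Q_1$. I expect to handle this by first establishing a \emph{right composition} lemma $M \flbi N \land N \flbs N' \Rightarrow M \flbi N'$, reduced by induction on the length of $N \flbs N'$ to the single-step case, which is then proved by induction on the contextual closure of $\flb$; the delicate step is the contraction of a head β-redex $(\lambda z.R)S \flb \subst R S$, which in turn requires a substitution lemma $R_1 \flbi R_2 \land S_1 \flbi S_2 \Rightarrow \subst{R_1}{S_1} \flbi \subst{R_2}{S_2}$, itself established by nested coinduction and induction. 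These auxiliary results are somewhat technical but follow the ``induction wrapped into a coinduction'' pattern described earlier in the paper, and I expect them to compose cleanly to finish the proof.
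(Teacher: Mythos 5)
Parts (1) and (2) match the paper's proof, and your inventory of auxiliary lemmas for (3) is exactly the paper's: left composition ($M \flbs M' \flbi M'' \Rightarrow M \flbi M''$), right composition ($M \flbi M' \flbs M'' \Rightarrow M \flbi M''$, reduced to single $\flb$-steps), and the substitution lemma ($M\flbi M'$ and $N\flbi N'$ imply $\subst M N \flbi \subst{M'}{N'}$), each proved essentially as in the paper. The gap is in the top-level induction for transitivity itself: you propose nested coinduction and induction on the \emph{first} derivation $M \flbi N$, and this measure does not survive the application case. Concretely, suppose $M \flbi N$ ends with $\regle{@_\beta^\infty}$, so $N = (P_1)Q_1$ with subderivations $P \flbi P_1$ and $\later Q \flbi Q_1$, and suppose the last rule of $N \flbi N'$ carries a finitary prefix $(P_1)Q_1 \flbs (P_2)Q_2$ that fires a root redex. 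Your plan is to apply right composition to get $M \flbi (P_2)Q_2$ and then invert; but right composition in the root-redex case rebuilds the derivation from scratch (via the substitution lemma and left composition), so inversion yields fresh subreductions $P_3 \flbi P_2$ and $Q_3 \flbi Q_2$ that are in general \emph{not} subderivations of the original $M \flbi N$. Your inductive hypothesis therefore does not apply to them, and the argument stalls precisely in the case you identified as the main obstacle.

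The fix is to run the induction (and the guarded coinduction) on the \emph{second} derivation $N \flbi N'$, which is what the paper does. Its subderivations $P_2 \flbi P'$ and $\later Q_2 \flbi Q'$ are fixed data on which the inductive and coinductive hypotheses legitimately apply, while the first argument of the transitivity statement is universally quantified and can therefore absorb whatever derivation of $M \flbi (P_2)Q_2$ right composition and inversion happen to produce. (Note also that the sub-case $(P_1)Q_1 \flbs x$ is not special: once right composition is available, every sub-case where the finitary prefix fires a root redex is handled uniformly by absorbing that prefix into the left-hand reduction and then matching head forms.) With the induction turned around in this way, your plan coincides with the paper's proof.
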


	\begin{proof}
	\begin{enumerate}
	\item For any $M\in\linf{001}$, a derivation of $M\flbi M$ is built 
	straightforwardly by nested coinduction and induction,%
	\footnote{We recall that this proof scheme is discussed in
	\cref{linf:rem:mixed} --- especially in its last paragraph.}
	following the structure of the derivation of ${}\vdash M$.
	
	\item Immediate from the rules of \cref{linf:def:flbi}
	and from the reflexivity of $\flbi$,
	by cases on the reduct of $\flbs$.
	For instance, in the case of an abstraction:
		\begin{center}\begin{prooftree}
			\hypo{ M\flbs \lambda x.P }
			\hypo{ \text{(1)} }
			\infer1{ P \flbi P }
			\infer2{ M \flbi \lambda x.P }
		\end{prooftree}\end{center}
	
	\item To prove transitivity, we have to show a series of sublemmas:
		\begin{align}
		\text{if } M \flbs M',
			&\text{ then } \subst M N \flbs \subst {M'} N 		\tag{i}		\\
		\text{if } M\flbs M'\flbi M'',
			&\text{ then } M\flbi M'' 							\tag{ii}	\\
		\text{if } M \flbi M' \text{ and } N\flbi N',
			&\text{ then } \subst M N \flbi \subst {M'} {N'}	\tag{iii}	\\
		\text{if } M \flbi M'\flb M'',
			&\text{ then } M\flbi M'' 							\tag{iv}	\\
		\text{if } M \flbi M'\flbs M'',
			&\text{ then } M\flbi M'' 							\tag{v}		\\
		\text{if } M \flbi M'\flbi M'',
			&\text{ then } M\flbi M'' 							\tag{vi}
		\end{align}
		(i) and (ii) are immediate, respectively by nested coinduction and induction on $M$ and by case analysis on $M'\flbi M''$.
		\smallskip
		
		To prove (iii), proceed by nested coinduction and induction on $M\flbi M'$. 
		\begin{itemize}
		\item If $M'\bisim x$ and $M\flbs x$, use (i) to get $\subst M N \flbs 
		\subst x N \bisim N \flbi N'$, and conclude with (ii).
		\item If $M'\bisim y$ and $M\flbs y$, use (i) to get $\subst M N \flbs 
		\subst y N \bisim y \bisim \subst y {N'}$ and conclude with (2).
		\item If $M'\bisim \lambda y.P'$, $M\flbs \lambda y.P$ and $P\flbi P'$, 
		use (i) to get $\subst M N \flbs \lambda y.\subst P N$, use the 
		induction hypothesis to get a derivation $\subst P N\flbi 
		\subst{P'}{N'}$ and conclude with (ii).
		\item If $M'\bisim (P')Q'$, $M\flbs (P)Q$, $P\flbi P'$ and $Q\flbi Q'$, 
		use (i) to get $\subst M N \flbs\left(\subst P N\right)\subst Q N$, get 
		a derivation $\subst P N\flbi \subst{P'}{N'}$ by induction, and build 
		$\subst Q N\flbi \subst{Q'}{N'}$ coinductively using 
		$\regle{\coI_\beta^\infty}$ as a guard. Conclude with (ii).
		\end{itemize}
		\smallskip
		
		To prove (iv), proceed by induction on $M'\flb M''$.
		\begin{itemize}
		\item If $M' \beta_0 M''$, that is $M'\bisim (\lambda x.Q')R'$ and 
		$M''\bisim \subst{Q'}{R'}$, the last rules applied in $M\flbi M'$ are 
		the following:
		\begin{center}\begin{prooftree}
			\hypo{ M\flbs (P)R }
			\hypo{ P \flbs \lambda x.Q }
			\hypo{ \vdots }
			\infer1{ Q\flbi Q' }
			\infer2{ P \flbi \lambda x.Q'}
			\hypo{ \vdots }
			\infer1{ R\flbi R' }
			\infer[double]1{ \later R \flbi R' }
			\infer3{ M \flbi M'\bisim (\lambda x.Q')R' }
		\end{prooftree}\end{center}
		so $M \flbs (P)R \flbs (\lambda x.Q)R \flb \subst Q R \flbi 
		\subst{Q'}{R'} \bisim M''$ using (iii), and we can conclude with (ii).
		\item If $M'\bisim \lambda x.P'$ and $M''\bisim \lambda x.P''$ with 
		$P'\flb P''$, then the last rule applied in $M\flbi M'$ is the 
		following:
		\begin{center}\begin{prooftree}
			\hypo{ M \flbs \lambda x.P }
			\hypo{ \vdots }
			\infer1{ P\flbi P' }
			\infer2{ M \flbi M'\bisim \lambda x.P' }
		\end{prooftree}\end{center}
		By induction, $P\flbi P''$, and apply the same rule to obtain $M\flbi M''$.
		\item The two remaining cases $\regle{@l_\beta}$ and $\regle{@r_\beta}$ are similar to the previous one.
		\end{itemize}
		\smallskip
		
		(v) is obtained from (iv) by an easy induction.
		\smallskip
		
		Finally, we show (vi) by nested coinduction and induction on $M'\flbi M''$.
		\begin{itemize}
		\item If $M''\bisim x$ and $M'\flbs x$, the result is immediate from 
		(v).
		\item If $M''\bisim \lambda x.P''$ with $M'\flbs \lambda x.P'$ and 
		$P'\flbi P''$, then from $M\flbi M'$ and $M'\flbs \lambda x.P'$, use 
		(v) to get $M\flbi \lambda x.P'$. This means that there is a $P$ such 
		that $M\flbs \lambda x.P$ and $P\flbi P'$. By induction, $P\flbi P''$, 
		and we can derive:
		\begin{center}\begin{prooftree}
			\hypo{ M \flbs \lambda x.P }
			\hypo{ P\flbi P'' }
			\infer2{ M \flbi M''\bisim \lambda x.P'' }
		\end{prooftree}\end{center}
		\item If $M'\flbi M''$ is derived by rule $\regle{@_\beta^\infty}$ with premises $M'\flbs (P')Q'$, $P'\flbi P''$ and $\later Q'\flbi Q''$, then:
			from $M\flbi M'$ and $M'\flbs (P')Q'$ we obtain $M\flbi (P')Q'$ using (v);
			in particular we obtain terms $P$ and $Q$ such that $M\flbs (P)Q$, $P\flbi P'$ and $Q\flbi Q'$;
			applying the induction hypothesis to $P'\flbi P''$ yields $P\flbi P''$;
			to derive $M\flbi M''$ by rule $\regle{@_\beta^\infty}$, it remains only to build a derivation 
			of $Q\flbi Q''$ coinductively, under the guard of $\regle{\coI}$.
		\qedhere
		\end{itemize}
	\end{enumerate}
	\end{proof}

\begin{rem} \label{linf:rem:nf}
	A consequence of the reflexivity of $\flbi$
	is that it makes no sense to consider \enquote{β$\infty$-normal forms}.
	Thus, as in the finitary calculus, we will call
	\emph{β-normal forms} the terms that cannot be reduced
	through $\flb$.
	Then, a β-normal form of a term $M$ (for the infinitary λ-calculus)
	is a term $N$ in β-normal form such that $M \flbi N$.
\end{rem}

\section{The Taylor expansion of \texorpdfstring{{\upshape λ}}{lambda}-terms}
\label{taylor}

Introduced by Ehrhard and Regnier as a particular case of the \emph{differential λ-calculus} \autocite{EhrhardRegnier03}, the \emph{resource λ-calculus} \autocite{EhrhardRegnier08} is the target language of the Taylor expansion of finite λ-terms:
a λ-term is translated as a set of resource terms ---
or, in a quantitative setting, as a (possibly infinite) weighted sum of resource terms.

In this section, we extend the definition of Taylor expansion to infinite λ-terms.
Note that this generalisation is very straightforward, and it does not require to extend the target of the translation.
Indeed, Ehrhard and Regnier have defined Taylor expansion not only on finite λ-terms but also on Böhm trees,
and our generalisation boils down to observe that there is already enough ``room'' to accommodate all terms in $\linf{001}$.

\subsection{The resource \texorpdfstring{λ}{lambda}-calculus}

First, let us recall the definition of the resource λ-calculus. A more detailed presentation can be found in \autocite{Vaux19,BarbarossaManzonetto20}.

\begin{defi}[resource λ-terms]
	The set $\lr{}$ of \emph{resource terms} on a set of variables $\Vv$ is 
	defined inductively by:
		\begin{align*}
		\lr{} &\defeq \Vv \ |\ \lambda\Vv.\lr{} \ |\ \rapp{\lr{}}\lr{!} \\
		\lr{!} &\defeq \Mfin(\lr{})
		\end{align*}
	where $\Mfin(X)$ is the set of finite multisets on $X$.
	
	We call \emph{resource monomials} the elements of $\lr{!}$.
\end{defi}

To denote indistinctly $\lr{}$ or $\lr{!}$, we write $\lr{(!)}$. The multisets are denoted $\ms{t}=[t_1,\dots,t_n]$, in an arbitrary order. Union of multisets is denoted multiplicatively, and terms are identified to the corresponding singleton: for example, $s\cdot [t,u]= [u,s,t]$. In particular, the empty multiset is denoted $1$. The cardinality of a multiset $\ms t$ is denoted $\mscard \ms t$.

Let $(2,\vee,\wedge)$ be the semi-ring of boolean values, and $\lrsums[(!)]$ the free $2$-module generated by $\lr{(!)}$.
We denote by capital $S, T$ (resp. $\ms{S}, \ms{T}$) the elements of $\lrsums $ (resp. $\lrsums[!]$).
By construction, an element of $\lrsums[(!)]$ is nothing but a finite set of 
resource terms (resp. monomials), so that we find it more practical to stick to 
the additive notation:
\emph{e.g.}, we will write $s+S$ instead of $\{s\}\cup S$, and we write $0$ for 
the empty set of terms or monomials.
In addition, we extend the constructors of $\lr{(!)}$ to $\lrsums[(!)]$ by linearity:
\[ \lambda x.\sum_i s_i \defeq \sum_i (\lambda x.s_i) \qquad
 \rapp{\sum_i s_i}\sum_j\ms{t_j} \defeq \sum_{i,j}\rapp{s_i}\ms{t_j} \qquad
 \left(\sum_i s_i\right)\cdot\ms{T} \defeq \sum_i s_i\cdot\ms{T}. \]

\begin{rem}
	We work in a \emph{qualitative} setting, where $s+s=s$, in opposition with
    the original \emph{quantitative} setting where the semi-ring $(\NN,+,\times)$
    allows to count occurrences of a resource term (for instance, $s+s=2s$).
	This is similar to what is done by, e.g., Barbarossa and Manzonetto
	\cite{BarbarossaManzonetto20}.
	Our choice is motivated by the fact that, as discussed in the introduction,
	the treatment of infinitary reduction forbids us to consider 
	Taylor expansion with coefficients in an arbitrary semi-ring
	(see \cref{taylor:rem:quantitative} for further details):
	we thus restrict to the qualitative version of Taylor expansion, which sends a
	λ-term to a (possibly infinite) set of resource terms.
	We then find natural to consider a qualitative variant of the resource
	calculus itself as well.
\end{rem}

\begin{defi}[substitution of resource terms]
\label{taylor:def:rsubst}
	If $s\in\lr{}$, $x\in\Vv$ and $\ms{t}=[t_1,\dots,t_n]\in\lr{!}$, we define:
	\[ \rsubst{s}{\ms t} \defeq \left\{\begin{array}{cl}
		\displaystyle\sum_{\sigma\in\sS_n} \subst{s}{t_{\sigma(i)}}[x_i] & \text{if $\mathrm{deg}_x(s)=n$} \\
		0 & \text{otherwise}
	\end{array}\right. \]
	where $\mathrm{deg}_x(s)$ is the number of free occurrences of $x$ in $s$, $x_1,\dots,x_n$ is an arbitrary enumeration of these occurrences, and $\subst{s}{t_{\sigma(i)}}[x_i]$ is the term obtained by formally substituting $t_{\sigma(i)}$ to each corresponding occurrence $x_i$.
\end{defi}

A more fine-grained definition can be found in 
\autocite{EhrhardRegnier03,EhrhardRegnier08},
where substitution is built as the result of a \emph{differentiation} operation:
$\rsubst s {\ms{t}}
\defeq \left(\frac{\partial^n s}{\partial x^n}\cdot\ms{t}\right)[0/x]$.

\begin{defi}[resource reduction]\label{taylor:def:flr}
	The \emph{simple resource reduction} $\flro\ \subset 
	\lr{(!)}\times\lrsums[(!)]$ is the smallest relation such that for every 
	$s$, $x$ and $\ms{t}$, $\rapp{\lambda x.s}\ms{t} \flro \rsubst s {\ms{t}}$ 
	holds, and closed under:
	
	\begin{prooftabular}
		\mbox{\begin{prooftree}
			\hypo{ s \flro S }
			\infer1[\lambda_r]{ \lambda x.s \flro \lambda x.S }
		\end{prooftree}}
	&
		\mbox{\begin{prooftree}
			\hypo{ s \flro S }
			\infer1[@l_r]{ \rapp{s}\ms{t} \flro \rapp{S}\ms{t} }
		\end{prooftree}}
	\ptnewline
		\mbox{\begin{prooftree}
			\hypo{ \ms{t} \flro \ms{T} }
			\infer1[@r_r]{ \rapp{s}\ms{t} \flro \rapp{s}\ms{T} }
		\end{prooftree}}
	&
		\mbox{\begin{prooftree}
			\hypo{ s \flro S }
			\infer1[!_r]{ s\cdot\ms{t} \flro S\cdot\ms{t} }
		\end{prooftree}}
	\end{prooftabular}
	
	This relation is extended to $\flr\ \subset\ \lrsums \times \lrsums$ by the rule:

	\begin{center}\begin{prooftree}
		\hypo{ s_0 \flro T_0 }
		\hypo{ \left(s_i \flror T_i\right)_{i=1}^n }
		\infer2[\Sigma_r]{ \sum_{i=0}^{n} s_i \flr \sum_{i=0}^{n} T_i }
	\end{prooftree}\end{center}
\end{defi}

\begin{rem} \label{taylor:rem:sigma_r}
	Some authors, like \autocite{BarbarossaManzonetto20}, prefer the following alternative:
	\begin{center}\begin{prooftree}
		\hypo{ s \flro S }
		\hypo{ s \notin T }
		\infer2[\Sigma_r']{ s+T \flr S+T }
	\end{prooftree}\end{center}
	
	Both versions define the same normal forms, but do not induce the same dynamics. In particular, $\regle{\Sigma_r'}$ preserves the termination of $\flrs$ even in the qualitative setting, whereas $\regle{\Sigma_r}$ allows to reduce $s$ to $s+S$ whenever $s \flro S$, which obviously prevents termination.
	
	However, the assumption $s \notin T$ in $\regle{\Sigma_r'}$ forbids to
	reduce \emph{contextually} in a sum, meaning that with this rule,
	$S \flrs S'$ and $T\flrs T'$ do not straightforwardly imply
	$S + T \flrs S' + T'$.\footnote{
                We could find no counterexample to this implication, but no proof either.
                Our best effort allowed us to prove that $S \flrs S'$ and $S\cap T = \emptyset$ imply $S+T\flrs S'+T$.
			}
	Hence our choice to use $\regle{\Sigma_r}$ instead: ensuring
	the contextuality of $\flr$ gives rise to a strong confluence
	result, as recalled in \cref{taylor:lem:confl_term_flr}
	--- whereas the reduction defined by $\regle{\Sigma_r'}$ is \enquote{only} confluent.\footnote{
                Using the fact that $\regle{\Sigma_r'}$ is a particular case of $\regle{\Sigma_r}$,
                and the fact that the reduction using $\regle{\Sigma_r}$ is confluent and normalising,
                we do know that each $S\in\lrsums$ reduces to a unique normal form $\nf{r}{S}$, and that $\nf{r}{S+T}=\nf{r}{S}+\nf{r}{T}$,
                whatever rule we choose.
                This is sufficient to obtain confluence, but the proof is thus
                indirect for the version using rule $\regle{\Sigma_r'}$.
                By contrast, the proof sketch in \autocite{BarbarossaManzonetto20} claims to rely on a direct proof of local confluence,
				but it is not given in full: in particular the case of sums is not discussed.
				It can be worked out, but it is not straightforward, again because of the lack of contextuality.
        }
	This technical choice will play a crucial role in the following: see in 
	particular 
	\cref{taylor:rem:flrst} and \cref{taylor:lem:flst_properties}.
\end{rem}

\begin{defi}
	The \emph{size} $|-|$ of resource terms is defined inductively by:
	\begin{align*}
		|x|								&\defeq 1
		\\
		|\lambda x.s|					&\defeq 1 + |s|
		\\
		\left|\rapp{s}\ms{t}\right|		&\defeq |s| + \left|\ms{t}\right|
		\\
		\left|[t_1,\dotsc,t_n]\right|	&\defeq 1+\sum_{i=1}^n |t_i|.
	\end{align*}
	The size of a finite sum $S \in \lrsums{}$ is given by
	$|S| \defeq \max_{s\in S} |s|$. By convention, $|0| \defeq 0$.
\end{defi}

\begin{lem}\label{taylor:lem:size}
	Given $s\in\lr{}$ and $S \in \lrsums{}$,
	if $s \flro S$ then $|S| < |s|$.
\end{lem}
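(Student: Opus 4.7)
The statement reduces to the analogous property for the simple reduction $\flro$: since $s \in \lr{}$ is a single term and the canonical representation of a sum in $\lrsums{}$ has pairwise distinct summands, the only way to derive $s \flr S$ using rule $\regle{\Sigma_r}$ is with $n=0$ in the premise, giving $s \flro S$. I would therefore prove, by induction on the derivation of the reduction, the two statements $s \flro S \implies |S| < |s|$ (for $s \in \lr{}$, $S \in \lrsums{}$) and $\ms{t} \flro \ms{T} \implies |\ms{T}| < |\ms{t}|$ (for $\ms{t} \in \lr{!}$, $\ms{T} \in \lrsums[!]$) simultaneously, since the contextual rules $\regle{@r_r}$ and $\regle{!_r}$ traverse the boundary between terms and monomials.

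For the base case, $\rapp{\lambda x.u}\ms{t} \flro \rsubst{u}{\ms{t}}$ with $\ms{t} = [t_1,\dotsc,t_n]$: the source has size $|u| + 2 + \sum_i |t_i|$ (the two extra units coming from the $\lambda$ and from the multiset constructor). If $\mathrm{deg}_x(u) \neq n$, then $\rsubst{u}{\ms{t}} = 0$, whose size is $0$; otherwise every summand $\subst{u}{t_{\sigma(i)}}[x_i]$ is obtained by replacing the $n$ occurrences of $x$ (each contributing $1$ to $|u|$) by the $t_i$, so each has size $|u| - n + \sum_i |t_i|$. Taking the max over summands yields $|\rsubst{u}{\ms{t}}| = |u| - n + \sum_i |t_i|$, strictly smaller than $|u| + 2 + \sum_i |t_i|$.

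For the contextual cases, the key observation is that each constructor $c$ extended to sums by linearity satisfies $|c(S)| = \max_{s' \in S} |c(s')|$, and $|c(s')|$ is an affine function of $|s'|$ (with coefficient $1$ in the relevant component). For instance, $|\rapp{S}\ms{t}| = \max_{s' \in S}(|s'|+|\ms{t}|) = |S| + |\ms{t}|$, and similarly for $|\lambda x.S|$, $|\rapp{s}\ms{T}|$ and $|S \cdot \ms{t}|$. Each contextual rule therefore transfers the strict decrease from the inductive hypothesis to the outer term by a direct calculation.

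The proof is essentially a bookkeeping exercise, so I do not anticipate a real obstacle: the only point worth being careful about is the handling of rule $\regle{\Sigma_r}$ in the initial reduction to $\flro$, where one must justify that distinct representatives can be assumed — this is simply because $\lrsums{}$ is the free $2$-module on $\lr{}$ and the rule is formulated on its underlying finite-set representation.
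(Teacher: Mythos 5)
Your treatment of the simple reduction $\flro$ --- the simultaneous induction on terms and monomials, the explicit size computation for the redex case ($|u|+2+\sum_i|t_i|$ against $0$ or $|u|-n+\sum_i|t_i|$), and the observation that the contextual constructors act affinely (with coefficient $1$) on the size of the reduced subterm --- is exactly the paper's argument, so that part is sound and complete.

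The gap is in your preliminary reduction of $s\flr S$ to $s\flro S$. The claim that a derivation of $s\flr S$ by $\regle{\Sigma_r}$ must have $n=0$ is not justified by the canonicity of representations: the rule only requires the \emph{set} $\sum_{i=0}^n s_i$ to equal the source, and since $\lrsums{}$ is a $2$-module nothing prevents taking $s_0=s_1=\dots=s_n=s$. In particular, with $n=1$, $s\flro T_0$ and $T_1=s$ (allowed by the reflexive closure $\flror$), one derives $s\flr s+T_0$; the paper makes exactly this point in \cref{taylor:rem:sigma_r}, where such derivations are the reason $\flr$ fails to be strongly normalising. So your case analysis misses these derivations --- and they are not merely a missing case, since $s\in s+T_0$ gives $|s+T_0|\geq|s|$, so the strict inequality genuinely fails for them. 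To be fair, this is as much a defect of the statement as of your proof: the paper's own ``then extend it to $\flr$ straightforwardly'' is equally silent on the point, and what is actually needed downstream is only the strict bound for $\flro$ (which you do prove) together with the non-strict bound $|S'|\leq|S|$ for $S\flr S'$ (see \cref{taylor:rem:wn} and the proof of \cref{taylor:lem:confl_term_flr}). A correct strict statement for $\flr$ on a single source term requires either the extra hypothesis $s\notin S$ or a reading of $\regle{\Sigma_r}$ with pairwise distinct $s_i$; you should say explicitly which repair you are adopting rather than asserting that $n=0$ is forced.
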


\begin{proof}
	We first show that for $s\in\lr{}$, $x\in\Vv$ and $\ms{t} = [t_1,\dots,t_n] \in\lr{!}$, $\left| s\rapp{\ms{t}/x} \right| < \left| \rapp{\lambda x.s}\ms{t} \right|$.
		\begin{itemize}
		\item If $\deg_x(s) \neq n$, $\left| s\rapp{\ms{t}/x} \right| = |0| = 0$
			and $\left| \rapp{\lambda x.s}\ms{t} \right|\geq 3$.
		\item Otherwise, $\left| s\rapp{\ms{t}/x} \right| = |s| - n + \sum_{i=1}^{n}|t_i|$ and $\left| \rapp{\lambda x.s}\ms{t} \right| = |s| + 2 + \sum_{i=1}^{n}|t_i|$, which leads to the expected inequality.
		\end{itemize}
	We obtain the desired result on $\flro$ by induction.
\end{proof}

\begin{rem}\label{taylor:rem:wn}
	Observe that, as a corollary, we obtain
	$|S'|\leq|S|$ whenever $S\flr S'$ in $\lrsums{}$,
	but this inequality is not strict in general;
	indeed, if $S \flr S'$ and $|T| \geq |S|$
	then $S+T \flr S'+T$ but $|S'+T| = |S+T| = |T|$.

	Nonetheless, the previous Lemma ensures the normalisation of $\flr$,
	using the multiset order on the sizes of elements in a finite sum,
	as we now show.
\end{rem}

\Needspace{5cm}
\begin{lem}[normalisation and confluence of $\flr$] 
\label{taylor:lem:confl_term_flr} \hfill
	\begin{enumerate}
	\item The resource reduction 
		$\mathord{\flr} \subset \lrsums \times \lrsums$
		is weakly normalising.
	\item This reduction is strongly confluent in the following sense: whenever there are $S, T_1, T_2 \in \lrsums$ as below, there is a $U \in \lrsums$ such that:
		\begin{center}
		\begin{tikzpicture}[node distance=20mm]
		\node (s) { $S$ };
		\node (t1) [below left of=s] { $T_1$ };
		\node (t2) [below right of=s] { $T_2$ };
		\node (u) [below right of=t1] { $U$ };
		\draw[->] (s) -- (t1) 
			node [pos=0.95, right=3pt] { \scriptsize $r$ };
		\draw[->] (s) -- (t2)
			node [pos=0.95, left=3pt] { \scriptsize $r$ };
		\draw[->, dashed] (t1) -- (u)
			node [pos=0.95, left=3pt] { \scriptsize $r$ }
			node [pos=0.95, above=3pt] { \scriptsize $?$ }  ;
		\draw[->, dashed] (t2) -- (u)
			node [pos=0.95, right=3.5pt] { \scriptsize $r$ }
			node [pos=0.95, above=3pt] { \scriptsize $?$ }  ;
		\end{tikzpicture}
		\end{center}
	In particular, it is confluent.
	\end{enumerate}
\end{lem}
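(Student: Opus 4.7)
The plan is to establish both claims via size-based measures, exploiting the permissiveness of the $\regle{\Sigma_r}$ rule.

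For weak normalisation (1), I propose the following strategy: given a non-normal $S \in \lrsums$, pick any $s_0 \in S$ containing a redex and fire it, $s_0 \flro T_0$, instantiating $\regle{\Sigma_r}$ with no auxiliary premises so that only this summand is touched, yielding $S \flr S'$. Writing $\mu(S)$ for the multiset $\{|s| : s \in S\}$ of sizes of elements in the underlying set $S$, \cref{taylor:lem:size} ensures that every element of $T_0$ has size strictly less than $|s_0|$, so $\mu(S')$ is dominated by $(\mu(S) \setminus \{|s_0|\}) + \mu(T_0)$, which is strictly smaller than $\mu(S)$ in the Dershowitz--Manna multiset order. Since this order is well-founded on multisets of naturals, the strategy terminates at a normal form.

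For strong confluence (2), I proceed in two stages. Stage one is single-term strong confluence: for $s \in \lr{(!)}$, if $s \flro S_1$ and $s \flro S_2$, then there exists $U$ with $S_1 \flrr U$ and $S_2 \flrr U$. This is proved by induction on $s$, case-splitting on the relative positions of the two fired redexes. Disjoint positions commute; identical positions agree; the only delicate case is a root redex $\rapp{\lambda x.u}\ms{t}$ meeting a redex inside $u$ or inside $\ms{t}$. Closing this square relies on a substitution lemma: if $u \flro U$ then $\rsubst{u}{\ms{t}} \flr \rsubst{U}{\ms{t}}$ in a single $\flr$ step (and symmetrically for reductions inside $\ms{t}$). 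This lemma is proved by unfolding \cref{taylor:def:rsubst} and observing that a fixed redex of $u$ is copied once per permutation summand of $\rsubst{u}{\ms{t}}$, so that the parallel reductions can be bundled into one instance of $\regle{\Sigma_r}$.

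Stage two lifts the result to sums. Given $S \flr T_1$ and $S \flr T_2$, realised by strict reductions on index sets $I_1, I_2 \subseteq S$ with $s_i \flro R_{i,k}$ for $i \in I_k$, I construct $U$ by reducing in $S$ all summands in $I_1 \cup I_2$ simultaneously: for $i \in I_1 \cap I_2$, stage one yields a common reduct $R_i$ of $R_{i,1}$ and $R_{i,2}$; for $i$ in the symmetric difference, I retain the relevant $R_{i,k}$. Because $\regle{\Sigma_r}$ allows bundling arbitrarily many $\flro$ reductions across summands into a single $\flr$ step, both $T_1 \flrr U$ and $T_2 \flrr U$ are witnessed by one such step (or trivially, if the bundle is empty). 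Ordinary confluence then follows by iterating the diamond.

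The chief obstacle is the substitution lemma at the single-term level: the combinatorics of resource substitution forces each redex in $u$ to be copied once per permutation matching the $x$-occurrences with elements of $\ms{t}$, and merging these parallel reductions into one $\flr$ step requires exactly the flexibility of $\regle{\Sigma_r}$ over $\regle{\Sigma_r'}$ emphasised in \cref{taylor:rem:sigma_r}. Once this lemma is in hand, the remainder is a matter of bookkeeping.
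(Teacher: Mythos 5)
Your proposal is correct and follows essentially the same route as the paper: part (1) is exactly the paper's argument (reduce one summand at a time and use \cref{taylor:lem:size} together with the well-foundedness of the Dershowitz--Manna order on size multisets), and for part (2) the paper simply defers to \autocite[Lem.~3.11]{Vaux19}, whose proof is the standard single-term diamond plus substitution-lemma argument you sketch. The only nitpick is your phrase \enquote{instantiating $\regle{\Sigma_r}$ with no auxiliary premises}: to touch a single summand of a multi-element sum you do need the auxiliary premises, instantiated reflexively as $s_i \flror s_i$, but this does not affect the argument.
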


\begin{proof}
	We prove (1). For (2), the proof is exactly the same as that of \autocite[Lem.~3.11]{Vaux19}.

	To each $S=\sum_{i=1}^n s_i$ (assuming the $s_i$'s are pairwise distinct),
	we can associate the multiset $\left\Vert S\right\Vert\defeq [|s_1|,\dotsc,|s_n|]$.
	\Cref{taylor:lem:size} entails $\left\Vert S'+T\right\Vert \prec \left\Vert s+T\right\Vert$
	whenever $s\flro S'$ and $s\not\in T$
	--- where $\prec$ denotes the Dershowitz–Manna ordering
	\autocite{DershowitzManna79} induced on $\Mfin(\NN)$ by $<$, which is well-founded.
	This entails the strong normalisation property for the version of $\flr$
	restricted to rule $\regle{\Sigma'_r}$,
	hence the weak normalisation of the more general version we use.
\end{proof}

\begin{nota}
	For $s\in\lr{(!)}$, we write $\nf{r}{s}$ for its normal form.
\end{nota}

\subsection{The Taylor expansion}

Just like the Taylor expansion of a function in calculus, the Taylor expansion 
of a term is a weighted, possibly infinite sum of finite approximants. In our 
qualitative setting, the weights vanish and the Taylor expansion can be seen as 
a mere set of approximants (usually called the \emph{support} of the full 
quantitative Taylor expansion). However, we describe these sets using an 
additive formalism, to be consistent with the finite sums as defined above.

\begin{nota}
	A possibly infinite set $\{ s_i,\ i \in I \} \subset \lr{}$ will be denoted as $\sum_{i \in I} s_i$. In particular, finite sets will be assimilated to the corresponding finite sums in $\lrsums{}$. Accordingly:
	\begin{itemize}
	\item $s \in \sum_{i \in I} s_i$ means that $s$ belongs to the given set (and equivalently, in the finite case, has coefficient $1$ in the given sum),
	\item unions of sets are also denoted with the symbols $+$ and $\sum$,
	\item the singleton $\{s\}$ is assimilated to the 1-term sum $s$.
	\end{itemize}
\end{nota}

In the finitary setting, the definition of the Taylor expansion relies on the following induction:
\[\begin{array}{rcl@{\qquad}rcl}
	\Tay(x) 				&\defeq & x,
&	\Tay(\lambda x.M) 	&\defeq & \sum_{s\in\Tay(M)} \lambda x.s,
\\	\Tay((M)N) 			&\defeq & \sum_{s\in\Tay(M)} \sum_{\ms{t}\in\Tay(N)^!} 
							 \rapp{s}\ms{t},
&	\Tay(M)^! 			&\defeq & \Mfin(\Tay(M)).
\end{array}\]
In our setting, the principle of the definition is exactly the same: to collect 
the finite approximants of an infinitary term, one just has to 
\emph{inductively} scan the term. However, there is no possible 
\enquote{structural induction} on coinductive objects, so that we need to 
define explicitly an approximation relation.

\begin{defi}[Taylor expansion] \label{taylor:def:taylor}
	The relation $\apptay$ of \emph{Taylor approximation} is inductively
    defined on $\lr{} \times \linf{001}$ by:\footnote{
		The relation $\apptay$ could be equivalently defined by induction on
		resource terms, rather than by a system of derivations:
		derivations are actually directed by the syntax of resource terms.
	}
	\begin{center}
		\begin{prooftree}
		\hypo{}
		\infer1[\ax_{\apptay}]{ x \apptay x }
		\end{prooftree}
	\qquad 
		\begin{prooftree}
		\hypo{ s \apptay M }
		\infer1[\lambda_{\apptay}]{ \lambda x.s \apptay \lambda x.M }
		\end{prooftree}
	\quad 
		\begin{prooftree}
		\hypo{ s \apptay M }
		\hypo{ \ms{t} \apptay^! N }
		\infer2[@_{\apptay}]{ \rapp{s}\ms{t} \apptay (M)N }
		\end{prooftree}
	\quad
		\begin{prooftree}
		\hypo{ \left( t_i \apptay M \right)_{i=1}^n }
		\infer1[!_{\apptay}]{ [t_1,\dots,t_n] \apptay^! M }
		\end{prooftree}
	\end{center}
	
	The \emph{Taylor expansion} of a term $M\in\linf{001}$ is the set
	$\Tay(M) \defeq \sum_{s\apptay M} s$.
\end{defi}

Again, it is practical to extend $\apptay$ to sums of resource terms:
we write $\sum_i s_i \apptay M$ whenever $\forall i,\ s_i \apptay M$,
so that $\Tay(M)\apptay M$ and $\Tay(M)$ is the greatest set of resource terms with that property.
Note that, due to the shape of rule $(!_{\apptay})$, $\Tay(M)$ is infinite as soon as $M$ contains an application.

\begin{rem}\label{taylor:rem:coef}
	We could very well consider a quantitative version of Taylor expansion:
	it poses no particular problem to define the coefficient of $s$ in
	$\Tay(M)$ by induction on $s$, following the original definition for
	ordinary λ-calculus \cite{EhrhardRegnier06}.
	Establishing a quantitative version of our simulation result, taking
	coefficients into account, is another matter, because the reduction of
	infinite weighted sums of resource terms is not always well defined:
	see \cref{taylor:rem:quantitative} below.
\end{rem}

\subsection{Reducing (possibly infinite) sets of resource terms}

So far, we are only able to reduce finite sums of resource terms
(using $\flr$),
but the Taylor expansion of a term is an infinite sum in general.
The following definition enables us to lift $\flrs$ from $\lr{} \times 
\lrsums $ to $\Pp(\lr{}) \times \Pp(\lr{})$.

\begin{defi}
	Let $X$ be a set, and $\longrightarrow\ \subset\ X \times \freemod{2}{X}$ a 
	relation. We define a reduction $\flt\ \subset\ \Pp(X)\times\Pp(X)$ by 
	stating that $A\flt B$ if we can write:
	\[
		A = \sum_{i\in I} a_i						\quad\text{,}\quad
		B = \sum_{i\in I} B_i						\quad\text{and}\quad
		\forall i\in I,\ a_i \longrightarrow B_i,
	\]
	where $I$ is a (possibly infinite) set of indices and, for each $i\in I$,
	$a_i \in X$ and $B_i \in \freemod{2}{X}$.
\end{defi}

In the following, we consider the relation $\flrst$ on sets of resource terms.

\begin{rem}\label{taylor:rem:flrst}
	As a direct consequence of the definition,
	given two sets $\Ss, \Ss' \in \Pp(\lr{})$,
	we have $\Ss \flrst \Ss'$ whenever we can write:
	\begin{equation}
		\label{taylor:eqn:flrst}
		\Ss' = \sum_{s \in \Ss} S'_s				\quad\text{and}\quad
		\forall s \in \Ss,\ s \flrs S'_s.
	\end{equation}
	Note in particular that the length of each reduction $s \flrs S'_s$
	might depend on $s$, and is not bounded in general, which will be crucial
	in the following.
\end{rem}
	
It turns out that condition \labelcref{taylor:eqn:flrst} is in fact equivalent
to the definition of $\Ss\flrst \Ss'$.

\begin{lem}\label{taylor:lem:flrst}
	We have $\Ss \flrst \Ss'$ iff condition \labelcref{taylor:eqn:flrst} holds.
\end{lem}
\begin{proof}
	Due to the way we defined $\flr$ from $\flro$, 
	we have $S\flrs S'$ iff we can write
	$S=\sum_{i=1}^n s_i$ and $S'=\sum_{i=1}^n S'_i$
	with $s_i\flrs S'_i$ for $1\leq i\leq n$.
	In particular $s\flrs S'$ iff we can write $S'=\sum_{i=1}^n S'_i$
	with $n>0$ and $s\flrs S'_i$ for $1\leq i\leq n$
	--- note that this latter equivalence holds only because we consider finite
	sets of terms rather than formal sums, so that $s=\sum_{i=1}^n s$.

	Hence condition \labelcref{taylor:eqn:flrst} holds iff we can write
	$\Ss=\sum_{i\in I} s_i$ and $\Ss'=\sum_{i\in I} S'_i$
	so that for each $i\in I$, $s_i\flrs S'_i$,
	and furthermore for each $s\in\lr{}$, $\{i\in I,\ s_i=s\}$ is finite.
	But this finiteness condition is always fulfilled:
	\cref{taylor:lem:size} entails that $|S'|\le|s|$ whenever $s\flrs S'$,
	and since moreover the free variables of $S'$ are also free in $s$,
	we obtain that $\{S',\ s\flrs S'\}$ is finite.
\end{proof}

Observe that several steps in the previous proof rely implicitly on the
contextuality of $\flrs$, obtained thanks to rule $\regle{\Sigma_r}$, as
stressed in \cref{taylor:rem:sigma_r}.

\begin{rem}\label{taylor:rem:quantitative}
	The proof of \cref{taylor:lem:flrst} also shows that $\flrst$ is in fact a variant of the
	relation on (possibly infinite) linear combinations of resource terms
	introduced by the second author in a quantitative setting, in order to simulate
	the β-reduction of ordinary λ-terms
	\autocites{Vaux17}[Def.~5.4]{Vaux19}.

	The only difference is that the underlying reduction on resource terms is 
	the iterated reduction $\flrs$ rather than a parallel variant of $\flr$.
	Indeed, we want to simulate $\flbi$, which amounts to a possibly infinite
	sequence of β-reductions: as discussed in the introduction,
	we cannot bound the number of reductions to be performed on resource terms
	in the simulation of one step of $\flbi$, and we are forced to consider the
	reflexive and transitive closure of resource reduction instead
	--- the only point of parallel resource reduction was precisely to avoid
	the need to consider $\flrs$ in the simulation of a single β-reduction step.

	In particular, we do face obstacles to considering a quantitative
	version of reduction, as studied in \autocites{Vaux19}.
	For instance, observe that the Taylor expansion of the $001$-infinitary
	term $(\lambda x.x)^{\infty}$ contains each resource term of the shape
	\[
		\rapp{\lambda x.x}^k[s] \defeq \underbrace{
		\rapp{\lambda x.x}[\cdots[\rapp{\lambda x.x}[s]]\cdots]
		}_{\text{$k$ nested linear applications}}
	\]
	where $s$ is itself any fixed approximant of  $(\lambda x.x)^{\infty}$.
	Each $\rapp{\lambda x.x}^k[s]$ reduces to $s$ (in $k$ steps):
	in particular, $\sum_{k\in\NN}\rapp{\lambda x.x}^k[s]\flrst \sum_{k\in\NN}s$.
	If we were to take coefficients into account we would have to deal
	with infinite sums of coefficients.
	This is precisely why we stick to the qualitative setting.
\end{rem}

\begin{lem}\label{taylor:lem:flst_properties}\hfill
	\begin{enumerate}
	\item $\flrst$ is reflexive and transitive.
	\item $(\mathord{\flrt})^* \subseteq \mathord{\flrst}$.
	\end{enumerate}
\end{lem}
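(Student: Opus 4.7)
The plan is to exploit the equivalent characterisation of $\flrst$ exhibited in \cref{taylor:rem:flrst}: $\Ss \flrst \Ss'$ iff we can write $\Ss' = \sum_{s \in \Ss} S'_s$ with $s \flrs S'_s$ for every $s \in \Ss$. Once one reasons through this characterisation rather than the raw definition, the combinatorics of (possibly infinite) sets of resource terms becomes a matter of composing families of finitary reductions.

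For the reflexivity in~(1), I would take the trivial decomposition $\Ss = \sum_{s \in \Ss} s$ together with $s \flrs s$ for every $s \in \Ss$. For the transitivity, given $\Ss \flrst \Ss'$ and $\Ss' \flrst \Ss''$ with witnesses $\Ss' = \sum_{s \in \Ss} S'_s$ (where $s \flrs S'_s$) and $\Ss'' = \sum_{s' \in \Ss'} S''_{s'}$ (where $s' \flrs S''_{s'}$), I would set, for each $s \in \Ss$, $S''_s \defeq \sum_{s' \in S'_s} S''_{s'}$, which is a finite sum in $\lrsums$ since $S'_s$ is. The key intermediate claim is then $S'_s \flrs S''_s$, obtained by iterating rule $\regle{\Sigma_r}$ over the finite sum $S'_s$. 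By transitivity of $\flrs$ we get $s \flrs S''_s$, and since clearly $\Ss'' = \sum_{s \in \Ss} S''_s$ as sets, the pair of decompositions witnesses $\Ss \flrst \Ss''$.

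For~(2), I would first note that $\flrt \subseteq \flrst$: any decomposition $\Ss = \sum_i s_i$, $\Ss' = \sum_i T_i$ with $s_i \flr T_i$ also witnesses $\Ss \flrst \Ss'$ because $\flr \subseteq \flrs$. Combined with (1), which gives $(\flrst)^* = \flrst$, this immediately yields $(\flrt)^* \subseteq (\flrst)^* = \flrst$.

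The one step that requires care --- and the main obstacle, such as it is --- is the contextuality used in the transitivity argument: passing from the family $\{s' \flrs S''_{s'}\}_{s' \in S'_s}$ to a single reduction $S'_s \flrs \sum_{s' \in S'_s} S''_{s'}$. This is precisely where the choice of rule $\regle{\Sigma_r}$ over the alternative $\regle{\Sigma'_r}$ pays off, as already emphasised in \cref{taylor:rem:sigma_r}: within a single application of $\regle{\Sigma_r}$ one may reduce one summand while leaving the others untouched (taking $s_i \flror T_i$ with $s_i = T_i$ on those), and iterating produces the desired sum-level reduction. Everything else is bookkeeping on indexed decompositions.
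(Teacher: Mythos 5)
Your proposal is correct and follows essentially the same route as the paper's own proof: the same trivial decomposition for reflexivity, the same composition of witnessing families $S''_s \defeq \sum_{s' \in S'_s} S''_{s'}$ for transitivity (with the same appeal to the contextuality of $\flrs$ granted by rule $\regle{\Sigma_r}$ to get $S'_s \flrs S''_s$), and the same derivation of (2) from $\mathord{\flr} \subseteq \mathord{\flrs}$ together with $(\mathord{\flrst})^* = \mathord{\flrst}$. The only difference is presentational: you make explicit the reliance on the characterisation of \cref{taylor:rem:flrst} and on the choice of $\regle{\Sigma_r}$, which the paper leaves implicit at this point.
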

	
	\begin{proof}
	\begin{enumerate}
	\item Reflexivity is immediate: $A=\sum_{a\in A}\{a\}$ with $a \flrs a$. For transitivity, consider $A\flrst B\flrst C$, that is $B=\sum_{a\in A} B_a$ with $a \flrs B_a$, and $C=\sum_{b\in B} C_b$ with $b \flrs C_b$. From the latter, we have $B_a \flrs \sum_{b\in B_a} C_b$ for each $a$.
		Finally:
			\[ C = \adjustlimits \sum_{a\in A} \sum_{b\in B_a} C_b
			\quad\text{and}\quad
			\forall a\in A,\ a \flrs B_a \flrs \sum_{b\in B_a} C_b, \]
			that is $A\flrst C$.
	
	\item We have $\mathord{\flr} \subset \mathord{\flrs}$,
		from which we deduce $\mathord{\flrt} \subset \mathord{\flrst}$, 
		and finally $(\mathord{\flrt})^* \subseteq 
		\left(\mathord{\flrst}\right)^* = \mathord{\flrst}$ from (1).
	\qedhere
	\end{enumerate}
	\end{proof}

\begin{nota} \label{taylor:nota:nft}
	For every $\Ss\in\Pp(\lr{(!)})$, we write its normal form $\nft{r}{\Ss} = \sum_{s\in \Ss} \nf{r}{s}$.
	
	Observe that $\Ss\flrst\nft{r}{\Ss}$, because $s\flrs\nf{r}{s}$ for each $s\in\Ss$.
\end{nota}

\section{Simulating the infinitary reduction} \label{simul}

The goal of this part is to simulate the infinitary reduction through the Taylor expansion, that is to obtain the following result:
\begin{center}
	if $M \flbi N$, then $\Tay(M) \flrst \Tay(N)$.
\end{center}
We first show that the result holds if $M\flbs N$ 
(\cref{simul:lem:simul_flbs}). Then we decompose $\flbi$ into finite 
\enquote{min-depth} steps $\flbgs{d}$ followed by an infinite $\flbig{d}$ 
(\cref{simul:lem:niveaux}), and we refine this decomposition into a tree of 
(min-depth resource) reductions using the Taylor expansion 
(\cref{simul:cor:niveaux_split}). Finally, after having related the \emph{size} 
and \emph{height} of resource terms, we conclude with a diagonal argument that 
enables us to \enquote{skip} the part related to $\flbig{d}$ in each branch of 
the aforementioned tree (\cref{simul:thm:simul_flbi}).

\subsection{Simulation of the finite reductions}

As a first step, we want to simulate substitution and finite β-reduction 
through the Taylor expansion.
This follows a well-known path, similar to the finitary calculus
\autocite{Vaux17}.

\begin{lem}[simulation of the substitution] \label{simul:lem:simul_subst}
	Let $M,N\in\linf{001}$ be terms, and $x\in\Vv$ be a variable. Then:
	\[ \Tay\left(\subst M N \right)
	= \adjustlimits \sum_{s\in\Tay(M)} \sum_{\ \ms{t}\in\Tay(N)^!} \rsubst s {\ms{t}}. \]
\end{lem}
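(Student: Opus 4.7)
The plan is to prove the identity elementwise, treating both sides as sets of resource terms. Since the Taylor approximation relation $\apptay$ is purely inductive, every witness $s \apptay M'$ admits a finite derivation even when $M' \in \linf{001}$ is infinite, so I will argue by induction on such derivations rather than attempt a direct structural induction on terms in $\linf{001}$ (which would be ill-founded).

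For the inclusion $\supseteq$, fixing $s \apptay M$ and $\ms{t} \apptay^! N$, I will induct on the derivation of $s \apptay M$ and show that every $u \in \rsubst{s}{\ms{t}}$ satisfies $u \apptay \subst M N$. The four cases mirror the rules of $\apptay$ together with the clauses of substitution: the axiom $s \bisim x \apptay M \bisim x$ forces $\ms{t} = [t]$ and yields $\rsubst{x}{[t]} = t \apptay N \bisim \subst{x}{N}$; the axiom $s \bisim y \apptay M \bisim y$ with $y \neq x$ forces $\ms{t} = []$ and gives $y \apptay y \bisim \subst{y}{N}$; the abstraction case relies on Barendregt's convention so that substitution commutes with the binder; the application case decomposes $\rsubst{\rapp{s_0}[u_1,\ldots,u_k]}{\ms{t}}$ over partitions of $\ms{t}$ into a head part and $k$ argument parts, applying the induction hypothesis to each piece.

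For the inclusion $\subseteq$, I will induct on the derivation of $u \apptay \subst M N$ (equivalently on the size of $u$) to produce $s \apptay M$ and $\ms{t} \apptay^! N$ with $u \in \rsubst{s}{\ms{t}}$. The argument proceeds by case analysis on the root constructor of $M$, which is always well-defined for $M \in \linf{001}$: if $M \bisim x$, set $s \defeq x$ and $\ms{t} \defeq [u]$, using $u \apptay N$; if $M \bisim y \neq x$, set $s \defeq y$ and $\ms{t} \defeq []$, using $u \bisim y$; if $M \bisim \lambda y.M'$ then $u \bisim \lambda y.u'$ with $u' \apptay \subst{M'}{N}$, so the IH yields the required $s'$ and $\ms{t}$; if $M \bisim (M_0)M_1$ then $u \bisim \rapp{u_0}[v_1,\ldots,v_k]$, and applying the IH to $u_0 \apptay \subst{M_0}{N}$ and each $v_i \apptay \subst{M_1}{N}$ produces witnesses that can be concatenated into a single $\ms{t} \apptay^! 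N$.

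The main obstacle will be the combinatorial bookkeeping of the application case, where substitution distributes over partitions of a multiset: for $\supseteq$ one must verify that every such partition produces a legitimate approximant of $\subst M N$, and for $\subseteq$ one must check that concatenating per-subterm witnesses yields an admissible $\ms{t} \apptay^! N$. These verifications are routine and essentially identical to the finitary case; the only genuine novelty is that induction runs over finite derivations of $\apptay$ rather than over the structure of $M$ itself.
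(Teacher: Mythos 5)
Your proposal is correct and follows essentially the same route as the paper's proof: a double inclusion argued by induction on the (finite) derivations of the inductive relation $\apptay$, with a case analysis on the root constructor of $M$ and the usual multiset-partition bookkeeping in the application case. The only cosmetic difference is that your elementwise framing makes the degenerate variable cases (where $\rsubst{s}{\ms{t}} = 0$ because the degree does not match) vacuous, whereas the paper records them as an explicit case.
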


	\begin{proof}
	We proceed by double inclusion.
	First, we show that for all derivation $u \apptay \subst M N$,
	there exist derivations $s \apptay M$ and $\overline{t} \apptay^! N$ such 
	that $u \in \rsubst s {\ms t}$.
	We do so by induction on $u \apptay \subst M N$, considering the possible cases for $M$:
	\begin{itemize}
	\item If $M=x$ then $\subst MN=N$, hence $u\apptay N$.
		Then we can set $s \defeq x$ and $\ms t \defeq [u]$.
	\item If $M=y\not=x$ then $\subst MN=y$, hence $u=y$.
		Then we can set $s \defeq y$ and $\ms t \defeq 1$.
	\item If $M=\lambda y.M'$ then $\subst MN=\lambda y.\subst {M'}N$,
		hence we must have $u=\lambda y.u'$ with $u'\apptay \subst{M'}N$.
		The induction hypothesis yields $s'\apptay M'$ and $\ms t\apptay^! N$
		such that $u'\in\rsubst {s'}{\ms t}$.
		Then we can set $s \defeq \lambda y.s'$.
	\item If $M=(M')N'$ then $\subst MN=(\subst {M'}N)\subst{M''}N$,
		hence we must have $u=\rapp{u'}{\ms u''}$ with $u'\apptay \subst{M'}N$
		and $\ms u''\apptay^! \subst{M''}N$.
		Writing $\ms u''=[u''_1,\dotsc,u''_n]$,
		this means $u''_i\apptay \subst{M''}N$ for $1\le i\le n$.
		The induction hypothesis applied to $u'\apptay \subst{M'}N$ yields $s'\apptay M'$ and $\ms t_0\apptay^! N$
		such that $u'\in\rsubst {s'}{\ms t_0}$.
		The induction hypothesis applied to each $u''_i\apptay \subst{M''}N$ yields $s''_i\apptay M''$ and $\ms t_i\apptay^! N$
		such that $u''_i\in\rsubst {s''_i}{\ms t_i}$.
		Then we can set $s\defeq\rapp{s'}{[s''_1,\dotsc,s''_n]}$
		and $\ms t\defeq\ms t_0\cdot\cdots\cdot\ms t_n$.
	\end{itemize}
	
	This concludes the first inclusion.
	Conversely, let us show that for all derivations $s \apptay M$ and $\ms t \apptay^! N$,
	$\rsubst{s}{\ms t} \apptay \subst M N$.
	We proceed by induction on the derivation $s \apptay M$:
	\begin{itemize}
	\item If $s \bisim x \apptay x \bisim M$ and $\ms t \bisim [t] \apptay^! 
	N$, then $\rsubst s {\ms t} \bisim t \apptay N \bisim \subst M N$.
	
	\item If $s \bisim y \apptay y \bisim M$ and $\ms t \bisim 1 \apptay^! N$, 
	then $\rsubst s {\ms t} \bisim y \apptay y \bisim \subst M N$.

	\item If $s$ is a variable, but none of the previous two cases apply,
	then we have $\rsubst s{\ms t}=0\apptay \subst MN$ for free.
	
	\item If $s \bisim \lambda y.s' \bisim \lambda y.P \bisim M$ with $s' 
	\apptay P$, then by induction for any derivation $\ms t \apptay^! N$, 
	we have $\rsubst{s'}{\ms t} \apptay \subst P N$. 
	Applying the rule $\lambda_{\apptay}$ to each summand of 
	$\rsubst{s'}{\ms t}$ gives $\rsubst s {\ms t} \bisim \lambda 
	y.\rsubst{s'}{\ms t} \apptay \lambda y.\subst P N \bisim \subst M N$.
	
	\item If $s \bisim \rapp{s'} \ms s'' \apptay (P)Q \bisim M$ with
	$s' \apptay P$ and $\ms s'' \bisim [s''_1,\dots,s''_m] \apptay^! Q$.
	Fix $\ms t \apptay^! N$.
	For each $v\in\rsubst s{\ms t}$,
	there are monomials $\ms t_0,\dotsc,\ms t_m$ such that $\ms t=\ms t_0\cdot\cdots\cdot \ms t_m$
	and $v\in\rapp{\rsubst{s'}{\ms t_0}}[\rsubst{s''_1}{\ms t_1},\dotsc,\rsubst{s''_m}{\ms t_m}]$.
	Observing that $\ms t_i\apptay^! N$ for $0\le i\le m$,
	the induction hypothesis yields:
		\begin{itemize}
		\item $\rsubst{s'}{\ms t_0} \apptay \subst P N$,
		\item and, for $1\le i\le m$, $\rsubst{s''_i}{\ms t_i} \apptay \subst Q N$.
		\end{itemize}
	We obtain $v\apptay (\subst PN)\subst QN \bisim \subst MN$
	by applying the rules $\regle{@_{\apptay}}$ and $\regle{!_{\apptay}}$.
	Hence $\rsubst{s}{\ms t}\apptay \subst{M}{N}$.
	\qedhere
	\end{itemize}
	\end{proof}

\begin{lem}[simulation of the finitary reduction] \label{simul:lem:simul_flbs}
	If $M\flbs N$, then $\Tay(M)\flrst \Tay(N)$.
\end{lem}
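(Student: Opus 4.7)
The plan is to reduce the multi-step statement to the single-step case, and then proceed by structural induction on the derivation of $M\flb N$.

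Since $\flrst$ is reflexive and transitive by \cref{taylor:lem:flst_properties}, a routine induction on the number of β-steps in $M\flbs N$ reduces the problem to the one-step statement: if $M\flb N$ then $\Tay(M)\flrst\Tay(N)$. I would then proceed by induction on the derivation in \cref{linf:def:flb}.

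For the axiom case $M=(\lambda x.P)Q\flb\subst{P}{Q}=N$, every $s\in\Tay(M)$ is of the form $\rapp{\lambda x.s_0}{\ms t}$ with $s_0\apptay P$ and $\ms t\apptay^! Q$, and each such $s$ has a single-step resource reduction $s\flro\rsubst{s_0}{\ms t}$, hence $s\flrs\rsubst{s_0}{\ms t}$. Setting $S'_s\defeq\rsubst{s_0}{\ms t}$, the sum $\sum_{s\in\Tay(M)} S'_s$ coincides with $\sum_{s_0\apptay P,\ \ms t\apptay^! Q}\rsubst{s_0}{\ms t}$, which by \cref{simul:lem:simul_subst} is exactly $\Tay(\subst{P}{Q})=\Tay(N)$. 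This witnesses $\Tay(M)\flrst\Tay(N)$.

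For the contextual cases, the inductive hypothesis supplies a decomposition at the subterm where the reduction takes place, which I lift through the surrounding context. For instance, in the $\regle{@l_\beta}$ case $M=(P)Q\flb(P')Q=N$, writing $\Tay(P')=\sum_{s\apptay P}S'_s$ with $s\flrs S'_s$, each element $\rapp{s}{\ms t}\in\Tay(M)$ reduces via rule $\regle{@l_r}$ (iterated and packaged through $\regle{\Sigma_r}$) to $\rapp{S'_s}{\ms t}$; summing over all $s\apptay P$ and $\ms t\apptay^! Q$ reconstitutes $\Tay(N)$. The cases $\regle{\lambda_\beta}$ and $\regle{@r_\beta}$ are symmetric, using $\regle{\lambda_r}$, and $\regle{@r_r}$ together with $\regle{!_r}$ for the latter.

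The main point to watch throughout is that the sum of targets must equal $\Tay(N)$ exactly, rather than merely sitting inside it: this is what forces the use of the fine-grained characterisation of $\flrst$ recalled in \cref{taylor:rem:flrst}, and relies crucially on the contextuality of $\flrs$ afforded by rule $\regle{\Sigma_r}$ (cf. \cref{taylor:rem:sigma_r}). This is less an obstacle than a bookkeeping discipline that \cref{simul:lem:simul_subst} makes tight.
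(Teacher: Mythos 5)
Your proposal is correct and follows essentially the same route as the paper: induction on the derivation of a single step $M\flb N$, with the axiom case handled by \cref{simul:lem:simul_subst} and the contextual cases lifted through the constructors using the contextuality afforded by $\regle{\Sigma_r}$, then chaining via the reflexivity and transitivity of $\flrst$ from \cref{taylor:lem:flst_properties}. The only difference is one of detail: the paper writes out the multiset bookkeeping in the $\regle{@r_\beta}$ case explicitly, which your sketch compresses into the appeal to $\regle{@r_r}$ and $\regle{!_r}$.
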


	\begin{proof}
	We first show the result for $M\flb N$, by induction on the corresponding derivation.
	
	\begin{itemize}
	\item Case $\regle{\ax_{\beta}}$. Take $M \bisim (\lambda x.P)Q 
	\mathrel{\beta_0} \subst P Q \bisim N$, then:
		\[
			\Tay(M)
		=	\Tay((\lambda x.P)Q) 
		= 	\adjustlimits \sum_{s\in\Tay(\lambda x.P)}
			\sum_{\ \ms{t}\in\Tay(Q)^!} \rapp{s}\ms{t}
		= 	\adjustlimits \sum_{s\in\Tay(P)} \sum_{\ \ms{t}\in\Tay(Q)^!}
			\rapp{\lambda x.s}\ms{t}.
		\]
		Since $\rapp{\lambda x.s}\ms{t} \flr s\rapp{\ms{t}/x}$, we obtain from \cref{simul:lem:simul_subst}:
		\[
				\Tay(M)
		\flrt 	\adjustlimits \sum_{s\in\Tay(P)} \sum_{\ \ms{t}\in\Tay(Q)^!}
				\rsubst s {\ms t}
		= 		\Tay\left( \subst P Q \right) 
		=		\Tay(N).
		\]
		
	\item Case $\regle{\lambda_\beta}$. Take $M \bisim \lambda x.P \flb \lambda 
	x.P' \bisim N$, with $P \flb P'$. By induction, we have $\Tay(P) \flrst 
	\Tay(P')$, that is $\Tay(P') = \sum_{s\in\Tay(P)} S'_s$ with $s\flrs S'_s$. 
	Then:
		\[	\Tay(M)
		=	\smashoperator{\sum_{s \in \Tay(P)}} \lambda x.s
		\qquad
		\text{and}
		\qquad
			\Tay(N)
		= 	\smashoperator{\sum_{s' \in \Tay(P')}} \lambda x.s'
		= 	\smashoperator{\sum_{s \in \Tay(P)}} \lambda x.S'_s,
		\]
	with $\lambda x.s \flrs \lambda x.S'_s$, so $\Tay(M) \flrst \Tay(N)$.
		
	\item Case $\regle{@l_\beta}$: similar to the previous one.
	
	\item Case $\regle{@r_\beta}$. Take $M \bisim (P)Q \flb (P)Q' \bisim N$, 
	with $Q \flb Q'$. By induction, we have $\Tay(Q) \flrst \Tay(Q')$, that is 
	$\Tay(Q') = \sum_{t\in\Tay(Q)} T'_t$ with $t \flrs T'_t$. Then:
		\[	\Tay(M)
		=	\adjustlimits \sum_{s \in \Tay(P)} \sum_{\ms t \in \Tay(Q)^!}
			\rapp{s} \ms t
		\]
		and:
		\begin{align*}
			\Tay(N)
		& =	\adjustlimits \sum_{s \in \Tay(P)} \sum_{\ms t' \in \Tay(Q')^!}
			\rapp{s} \ms t' \\
		& =	\sum_{s \in \Tay(P)} \sum_{k\in\NN}
			\sum_{t'_1 \in \Tay(Q')} \dots \sum_{t'_k \in \Tay(Q')}
			\rapp{s}[t'_1, \dots, t'_k] \\
		& =	\sum_{s \in \Tay(P)} \sum_{k\in\NN}
			\sum_{t_1 \in \Tay(Q)} \dots \sum_{t_k \in \Tay(Q)}
			\rapp{s}[T'_{t_1}, \dots, T'_{t_k}] \\
		& =	\adjustlimits \sum_{s \in \Tay(P)} \sum_{\ms t \in \Tay(Q)^!}
			\rapp{s} [T'_{t_1}, \dots, T'_{t_{\mscard \ms t}}]
		\end{align*}
		Yet for any $\ms t \in \Tay(Q)$ and for all $t_i \in \ms t$, $t_i \flrs T'_{t_i}$, so $\rapp{s} \ms t \flrs \rapp{s} [T'_{t_1}, \dots, T'_{t_{\mscard \ms t}}]$. This leads to $\Tay(M) \flrst \Tay(N)$.
	\end{itemize}
	
	We conclude in the general case $M\flbs N$ using \cref{taylor:lem:flst_properties}.
	\end{proof}

\subsection{A \enquote{step-by-step} decomposition of the reduction}

Since an infinitary reduction must reduce redexes 
whose depth tends to infinity, 
we want to decompose reductions into an infinite succession of 
finite reductions occuring at lower-bounded depth
(in the following, we name these \emph{min-depth} reductions,
for reductions occuring \enquote{at a minimal depth}). 
As a side consequence, we obtain a result of (weak) standardisation 
for $\linf{001}$.

\begin{defi}[min-depth finitary β-reduction]
	The reduction $\flbg{d}\ \subset\ \linf{001}\times\linf{001}$ is defined for all $d\in\NN$ by the rules:
	\begin{prooftabular}
		\mbox{\begin{prooftree}
			\hypo{ M \flb N }
			\infer1[\ax_{\beta\geq 0}]{ M \flbg{0} N }
		\end{prooftree}}
	&
		\mbox{\begin{prooftree}
			\hypo{ M \flbg{d+1} N }
			\infer1[\lambda_{\beta\geq d+1}]{ \lambda x.M \flbg{d+1} \lambda 
			x.N }
		\end{prooftree}}
	\ptnewline
		\mbox{\begin{prooftree}
			\hypo{ M \flbg{d+1} N }
			\infer1[@l_{\beta\geq d+1}]{ (M)P \flbg{d+1} (N)P }
		\end{prooftree}}
	&
		\mbox{\begin{prooftree}
			\hypo{ M \flbg{d} N }
			\infer1[@r_{\beta\geq d+1}]{ (P)M \flbg{d+1} (P)N }
		\end{prooftree}}
	\end{prooftabular}
\end{defi}

\begin{rem}\label{simul:rem:flbgs}
	It is easy to check that if $M\flbg{d} N$ then $M\flb N$, and $M\flbg{d'} 
	N$ whenever $d\geq d'$.
	Moreover, one can show by induction on $d$ that $\flbgs{d}$ can be defined
	directly by the following rules:
	\begin{prooftabular}
		\mbox{\begin{prooftree}[center=false]
			\hypo{ M \flbs M' }
			\infer1[\ax_{\beta\geq 0}^*]{ M \flbgs{0} M' }
		\end{prooftree}}
	&
		\mbox{\begin{prooftree}[center=false]
			\infer0[\Vv_{\beta\geq d+1}^*]{ x \flbgs{d+1} x }
		\end{prooftree}}
	\ptnewline
		\mbox{\begin{prooftree}
			\hypo{ M \flbgs{d+1} M' }
			\infer1[\lambda_{\beta\geq d+1}^*]{ \lambda x.M' \flbgs{d+1} 
			\lambda x.M' }
		\end{prooftree}}
	&
		\mbox{\begin{prooftree}
			\hypo{ M \flbgs{d+1} M' }
			\hypo{ N \flbgs{d} N' }
			\infer2[@_{\beta\geq d+1}^*]{ (M)N \flbgs{d+1} (M')N' }
		\end{prooftree}}
	\end{prooftabular}
	and the infinitary version to be defined below will follow the same pattern.
\end{rem}

\begin{defi}[min-depth resource reduction]
	The reduction $\flrgo{d}\ \subset\ \lr{(!)}\times \lrsums[(!)]$ is defined 
	for all $d\in\NN$ by the rules:
	\vskip 2ex
	\begin{center}
		\begin{prooftree}
			\hypo{ s \flro S }
			\infer1[\ax_{r\geq 0}]{ s \flrgo{0} S }
		\end{prooftree}
		\quad
		\begin{prooftree}
			\hypo{ s \flrgo{d+1} S }
			\infer1[\lambda_{r\geq d+1}]{ \lambda x.s \flrgo{d+1} \lambda x.S }
		\end{prooftree}
		\quad	
		\begin{prooftree}
			\hypo{ s \flrgo{d+1} S }
			\infer1[@l_{r\geq d+1}]{ \rapp{s}\ms{t} \flrgo{d+1} \rapp{S}\ms{t} }
		\end{prooftree}
		\vspace{1.5ex}
		
		\begin{prooftree}
			\hypo{ \ms{t} \flrgo{d+1} \ms{T} }
			\infer1[@r_{r\geq d+1}]{ \rapp{s}\ms{t} \flrgo{d+1} \rapp{s}\ms{T} }
		\end{prooftree}
		\qquad
		\begin{prooftree}
			\hypo{ s \flrgo{d} S }
			\infer1[!_{r\geq d+1}]{ s\cdot\ms{t} \flrgo{d+1} S\cdot\ms{t} }
		\end{prooftree}
	\end{center}
	\vskip 2ex
    where $d\in\NN$. We also extend $\flrgo{d}$ to 
    $\mathord{\flrg{d}}\subset \lrsums{} \times \lrsums{}$ in the same way 
    as in \cref{taylor:def:flr} by adding a rule $\regle{\Sigma_{r\geq d}}$.
\end{defi}

\begin{lem}[simulation of min-depth finitary reduction]
\label{simul:lem:simul_flbs_mindepth}
	Let $M,N \in\linf{001}$ be terms, and $d\in\NN$. If $M\flbgs{d} N$, then $\Tay(M)\flrgst{d} \Tay(N)$.
\end{lem}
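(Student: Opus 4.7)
The plan is to adapt the proof of \cref{simul:lem:simul_flbs} by tracking the depth parameter $d$ throughout, in two stages: first establish the statement for a single reduction $M \flbg{d} N$, then extend to the reflexive-transitive closure using transitivity of $\flrgst{d}$ (which is inherited from \cref{taylor:lem:flst_properties} restricted to the $\flrg{d}$-level reduction).

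For the single-step version, I would proceed by induction on the derivation of $M \flbg{d} N$. The base case $\regle{\ax_{\beta\geq 0}}$ is an immediate consequence of \cref{simul:lem:simul_flbs}, since only the axiom rule $\regle{\ax_{r\geq 0}}$ applies at depth $0$, so $\flrgo{0}=\flro$ and therefore $\flrgst{0}=\flrst$. The cases $\regle{\lambda_{\beta\geq d+1}}$ and $\regle{@l_{\beta\geq d+1}}$ lift the induction hypothesis directly through the structural rules $\regle{\lambda_{r\geq d+1}}$ and $\regle{@l_{r\geq d+1}}$, mirroring the corresponding cases in \cref{simul:lem:simul_flbs} without any change in structure.

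The main obstacle is the $\regle{@r_{\beta\geq d+1}}$ case, because the depth counter drops from $d+1$ to $d$ when crossing into the argument position. Concretely, $M = (P)Q$ reduces to $N = (P)Q'$ with $Q \flbg{d} Q'$, and the induction hypothesis gives a decomposition $\Tay(Q') = \sum_{q \in \Tay(Q)} Q'_q$ with $q \flrgs{d} Q'_q$. To recover a reduction at depth $d+1$ on $\Tay(M)$, I would use the crucial rule $\regle{!_{r\geq d+1}}$, which promotes any depth-$d$ reduction of a resource term to a depth-$(d+1)$ reduction of any enclosing multiset. Iterating this rule on the components of a multiset $\ms t = [t_1, \dots, t_k] \in \Tay(Q)^!$ yields $\ms t \flrgs{d+1} [Q'_{t_1}, \dots, Q'_{t_k}]$, and one more application of $\regle{@r_{r\geq d+1}}$ gives $\rapp{s}\ms t \flrgs{d+1} \rapp{s}[Q'_{t_1}, \dots, Q'_{t_k}]$. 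Summing over $s \in \Tay(P)$ and $\ms t \in \Tay(Q)^!$ and re-indexing exactly as in the proof of the case $\regle{@r_\beta}$ in \cref{simul:lem:simul_flbs}, one recovers $\Tay(N)$ on the right-hand side.

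Finally, to pass from $\flbg{d}$ to $\flbgs{d}$, I would invoke the transitivity of $\flrgst{d}$ (the same argument as \cref{taylor:lem:flst_properties}(1) works verbatim, since it only uses the fact that $\flrgs{d}$ is reflexive and transitive, and the definitional form of $\flrst$ is purely relational). The reflexive step is trivial since $\Tay(M) \flrgst{d} \Tay(M)$ always holds via empty reductions. No further difficulty is expected beyond the careful bookkeeping of depths in the application case.
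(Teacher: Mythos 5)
Your proposal is correct and takes essentially the same route as the paper, which likewise reduces the base case to \cref{simul:lem:simul_flbs} and treats the remaining cases as depth-annotated analogues of the cases of that lemma, with the rule $\regle{!_{r\geq d+1}}$ absorbing the decrement of the depth counter in argument position. The only immaterial difference is that you factor the argument as a single-step lemma followed by transitivity of $\flrgst{d}$, whereas the paper inducts directly on derivations of $M\flbgs{d}N$ using the characterisation of \cref{simul:rem:flbgs}.
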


	\begin{proof}
		By induction on $\flbgs{d}$. In the case $\regle{\ax_{\beta\geq 0}}$, 
		just apply \cref{simul:lem:simul_flbs}. In the other cases, the proof 
		is analogous to the corresponding cases in \cref{simul:lem:simul_flbs}.
	\end{proof}

\begin{defi}[min-depth infinitary β-reduction]
	The reduction $\flbig{d}$ is defined for all $d\in\NN$ by the rules:
	\begin{prooftabular}
		\mbox{\begin{prooftree}[center=false]
			\hypo{ M \flbi M' }
			\infer1[\ax_{\beta\geq 0}^\infty]{ M \flbig{0} M' }
		\end{prooftree}}
	&
		\mbox{\begin{prooftree}[center=false]
			\infer0[\Vv_{\beta\geq d+1}^\infty]{ x \flbig{d+1} x }
		\end{prooftree}}
	\ptnewline
		\mbox{\begin{prooftree}
			\hypo{ M \flbig{d+1} M' }
			\infer1[\lambda_{\beta\geq d+1}^\infty]{ \lambda x.M' \flbig{d+1} 
			\lambda x.M' }
		\end{prooftree}}
	&
		\mbox{\begin{prooftree}
			\hypo{ M \flbig{d+1} M' }
			\hypo{ N \flbig{d} N' }
			\infer2[@_{\beta\geq d+1}^\infty]{ (M)N \flbig{d+1} (M')N' }
		\end{prooftree}}
	\end{prooftabular}
	where $d\in\NN^*$.
\end{defi}

\begin{lem}\label{simul:lem:flbig}
	Each relation $\flbig{d}$ is reflexive and transitive, and moreover
	such that $\mathord{\flbig{d+1}}\subseteq\mathord{\flbig{d}}\subseteq\mathord{\flbi}$.
\end{lem}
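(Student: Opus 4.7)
The plan is to prove all three claims by nested inductions, with each outer induction on $d$ and each inner one on the finite inductive skeleton of a term or of a derivation. This is licit because the rules defining $\flbig{d+1}$ are purely inductive --- the recursive call on the argument of an application descends to $\flbig{d}$, not to $\flbig{d+1}$ --- and because in $\linf{001}$, viewed as $\nu Y.\mu X$, the skeleton of lambdas and left-applications above any application-argument position is finite by construction.

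First I observe that $\mathord{\flbig{0}} = \mathord{\flbi}$ holds essentially by definition, via rule $\regle{\ax_{\beta\geq 0}^\infty}$ together with reflexivity of $\flbi$. Reflexivity of $\flbig{d}$ then proceeds by induction on $d$: the base case is reflexivity of $\flbi$ from \cref{linf:lem:flbi_transitive}, and for $d+1$ I show $M \flbig{d+1} M$ by induction on the skeleton of $M$, using $\regle{\Vv_{\beta\geq d+1}^\infty}$ in the variable case, $\regle{\lambda_{\beta\geq d+1}^\infty}$ and the inner IH in the $\lambda$-case, and $\regle{@_{\beta\geq d+1}^\infty}$ with the inner IH on the function together with the outer IH (on $d$) on the argument.

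Transitivity is handled in the same way. The base $d = 0$ is transitivity of $\flbi$, again from \cref{linf:lem:flbi_transitive}. For the step, given $M \flbig{d+1} M'$ and $M' \flbig{d+1} M''$, I argue by induction on the skeleton of $M'$: since the rules at level $d+1$ preserve the outer shape of the term, the shape of $M'$ determines the rule applied in both derivations, so the variable case closes trivially, the $\lambda$-case closes by the inner IH, and the application case combines the inner IH on the function with the outer IH (transitivity of $\flbig{d}$) on the argument.

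Finally, since $\mathord{\flbig{0}} = \mathord{\flbi}$, the whole chain of inclusions reduces to $\mathord{\flbig{d+1}} \subseteq \mathord{\flbig{d}}$, which I prove by induction on $d$. In the base case $d = 0$, I show $\mathord{\flbig{1}} \subseteq \mathord{\flbi}$ by induction on the $\flbig{1}$-derivation, relying on the compatibility of $\flbi$ with each constructor: these compatibilities are easy consequences of \cref{linf:def:flbi}, since one may take $M \flbs M$ as the empty reduction and use $\regle{\coI_\beta^\infty}$ to lift an $\flbi$-step into the argument of an application. In the inductive step, $\mathord{\flbig{d+2}} \subseteq \mathord{\flbig{d+1}}$ follows rule-by-rule, using the inner IH on the function subderivation of an application and the outer IH to weaken the $\flbig{d+1}$-step on the argument down to $\flbig{d}$. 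The main, though still elementary, delicate point is precisely this compatibility of $\flbi$ in the base case of the inclusions; everything else is routine bookkeeping on the nested inductions.
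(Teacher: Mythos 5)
Your proposal is correct and follows essentially the same route as the paper: the identity $\mathord{\flbig{0}}=\mathord{\flbi}$, then reflexivity, transitivity, and the inclusion $\mathord{\flbig{d+1}}\subseteq\mathord{\flbig{d}}$ each by an outer induction on $d$ with an inner induction on the finite inductive layer of terms or derivations, the base case of the inclusions ($\mathord{\flbig{1}}\subseteq\mathord{\flbi}$) being handled via the compatibility of $\flbi$ with the term constructors. Your explicit justification of why the inner inductions are well-founded (the recursive call on application arguments descends from $d+1$ to $d$) is a welcome addition to what the paper leaves implicit.
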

	\begin{proof}
	Reflexivity and transitivity of $\flbig{0}$ are derived from that of $\flbi$.
	Reflexivity of each $\flbig{d}$ follows by induction on $d$: in the
	inductive step, we reason by induction on the top-level (inductive
	only) structure of terms.
	Transitivity of each $\flbig{d}$ follows, again by a straightforward induction on $d$,
	and by induction on derivations of $\flbig{d}$.
	The identity $\mathord{\flbig{0}}=\mathord{\flbi}$ is straightforward.
	The inclusion $\mathord{\flbig{1}}\subseteq\mathord{\flbi}$
	is proved by induction on the derivations of $\flbig{1}$,
	using the reflexivity and transitivity of $\flbi$.
	The inclusion $\mathord{\flbig{d+1}}\subseteq\mathord{\flbig{d}}$
	follows by induction on $d$, and then by induction on the derivations
	of $\flbig{d+1}$.
	\end{proof}

\Needspace{2\baselineskip}
\begin{lem}\label{simul:lem:niveau1}
	If $M\flbi M'$, then there exists a term $M_0\in\linf{001}$ 
	such that $M\flbs M_0\flbig{1} M'$.
\end{lem}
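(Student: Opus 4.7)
The plan is to proceed by induction on the derivation of $M \flbi M'$, doing a case analysis on its topmost rule. This induction is well-founded because $\flbi$ is defined as $\nu Y.\mu X.(\dots)$: between coinductive applications of $\regle{\coI_\beta^\infty}$ --- which only appear in the right-argument position of an application --- the derivation has inductive $(\mu X)$ structure, and the shape of $\linf{001}$ prevents the infinite chains of $\lambda$ or of left-applications that would otherwise threaten this induction. The key observation that makes such a decomposition work is that $\flbig{1}$ is itself an inductive relation whose right-of-application premises invoke, but do not recurse into, $\flbi$; the infinitary reductions needed on the right of every application can therefore be taken directly from the given derivation of $M \flbi M'$, with no fresh coinductive construction required.

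In the base case $\regle{\ax_\beta^\infty}$ we have $M \flbs x$ and $M' = x$: I set $M_0 \defeq x$ and conclude $x \flbig{1} x$ by $\regle{\Vv_{\beta\geq 1}^\infty}$. In the case $\regle{\lambda_\beta^\infty}$, with $M \flbs \lambda x.P$, $P \flbi P'$ and $M' = \lambda x.P'$, the induction hypothesis applied to $P \flbi P'$ yields some $P_0$ with $P \flbs P_0 \flbig{1} P'$; I then set $M_0 \defeq \lambda x.P_0$, extend $M \flbs \lambda x.P$ to $M \flbs \lambda x.P_0$ by contextuality of $\flbs$, and apply $\regle{\lambda_{\beta\geq 1}^\infty}$ to $P_0 \flbig{1} P'$ to obtain $M_0 \flbig{1} M'$. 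The case $\regle{@_\beta^\infty}$, with $M \flbs (P)Q$, $P \flbi P'$, $\later Q \flbi Q'$ and $M' = (P')Q'$, proceeds along the same lines: I take $M_0 \defeq (P_0)Q$ where $P_0$ is supplied by the induction hypothesis on $P \flbi P'$, and conclude by $\regle{@_{\beta\geq 1}^\infty}$ applied to $P_0 \flbig{1} P'$ and to a derivation of $Q \flbi Q'$.

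The only mildly subtle point is this last case: rule $\regle{@_{\beta\geq 1}^\infty}$ requires a derivation $Q \flbig{0} Q'$ --- that is, $Q \flbi Q'$ via \cref{simul:lem:flbig} --- whereas the hypothesis only supplies $\later Q \flbi Q'$. But $\regle{\coI_\beta^\infty}$ is the only rule concluding a judgment of the form $\later X \flbi Y$, so inversion on that derivation immediately yields the desired $Q \flbi Q'$. This stripping of the $\later$-guard, combined with the fact that the output of the lemma is produced purely inductively (the right-of-application $\flbi$ derivations being reused verbatim), is the main technical insight; everything else amounts to routine bookkeeping.
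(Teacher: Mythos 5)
Your proof is correct and follows essentially the same route as the paper's: an induction on the inductive layer of the derivation of $M\flbi M'$, with the same choice of $M_0$ in each case ($x$, $\lambda x.P_0$, $(P_0)Q$) and the same use of $\regle{\ax_{\beta\geq 0}^\infty}$ to convert the reused derivation of $Q\flbi Q'$ into the premise $Q\flbig{0} Q'$. Your explicit remarks on stripping the $\later$-guard by inversion and on the contextuality of $\flbs$ only spell out steps the paper leaves implicit.
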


\begin{proof}
	We define $M_0$ and the reduction $M_0\flbig{1} M'$ 
	by reasoning on the inductive layer of the reduction
	$M\flbi M'$.
		\begin{itemize}
		\item Case $\regle{\ax_{\beta}^\infty}$,
			$M\flbs x=M'$. We can set $M_0\defeq x\flbig{1} M'$,
			using the reflexivity of $\flbig{1}$.
		\item Case $\regle{\lambda_{\beta}^\infty}$,
			$M\flbs \lambda x.P$ and $M'=\lambda x.P'$ with $P\flbi P'$.
			The induction hypothesis yields $P_0$
			such that $P\flbs P_0\flbig{1} P'$,
			and we can set $M_0\defeq \lambda x.P_0$
			and obtain $M_0\flbig{1} M'$ by rule $\regle{\lambda_{\beta\geq 
			1}^\infty}$.
		\item Case $\regle{@_{\beta}^\infty}$,
			$M\flbs (P)Q$ and $M'=P'Q'$ with $P\flbi P'$ and $Q\flbi Q'$.
			The induction hypothesis applies to the first reduction, which
			yields $P_0$ such that $P\flbs P_0\flbig{1} P'$.
			Rule  $\regle{\ax_{\beta\geq 0}^\infty}$
			entails $Q\flbig{0} Q'$.
			We can set $M_0\defeq (P_0)Q$, and obtain
			$M_0\flbig{1} M'$ by rule $\regle{@_{\beta\geq 1}^\infty}$.
			\qedhere
		\end{itemize}
\end{proof}

\begin{lem}\label{simul:lem:niveaud1}
	If $M\flbig{d} M'$, then there exists a term $M_d\in\linf{001}$ 
	such that $M\flbgs{d} M_d\flbig{d+1} M'$.
\end{lem}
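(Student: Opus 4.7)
The plan is to prove this by induction on $d$, mimicking the pattern that led from the finitary $\flbs$ to $\flbgs{d}$ in \cref{simul:rem:flbgs}. The base case $d=0$ is exactly \cref{simul:lem:niveau1}, since $\mathord{\flbig{0}}=\mathord{\flbi}$ by \cref{simul:lem:flbig} and $\mathord{\flbgs{0}}=\mathord{\flbs}$ by \cref{simul:rem:flbgs}.

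For the inductive step, assume the result holds at rank $d$ and suppose $M \flbig{d+1} M'$; we reason by a secondary induction on the structure of this derivation. If the last rule is $\regle{\Vv_{\beta\geq d+1}^\infty}$, then $M = M' = x$ and we take $M_{d+1} \defeq x$, concluding by reflexivity of $\flbgs{d+1}$ and by reflexivity of $\flbig{d+2}$ (\cref{simul:lem:flbig}). If the last rule is $\regle{\lambda_{\beta\geq d+1}^\infty}$, then $M = \lambda x.P$ and $M' = \lambda x.P'$ with $P \flbig{d+1} P'$; the secondary induction hypothesis yields $P_{d+1}$ such that $P \flbgs{d+1} P_{d+1} \flbig{d+2} P'$, and we set $M_{d+1} \defeq \lambda x.P_{d+1}$, concluding by $\regle{\lambda_{\beta\geq d+1}^*}$ and $\regle{\lambda_{\beta\geq d+2}^\infty}$.

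The key case is $\regle{@_{\beta\geq d+1}^\infty}$: we have $M = (P)Q$ and $M' = (P')Q'$ with $P \flbig{d+1} P'$ and $Q \flbig{d} Q'$. The secondary hypothesis applied to the first premise gives $P_{d+1}$ with $P \flbgs{d+1} P_{d+1} \flbig{d+2} P'$, while the primary (outer) hypothesis on $d$ applied to the second premise gives $Q_d$ with $Q \flbgs{d} Q_d \flbig{d+1} Q'$. Setting $M_{d+1} \defeq (P_{d+1})Q_d$, the rule $\regle{@_{\beta\geq d+1}^*}$ yields $M \flbgs{d+1} M_{d+1}$, and $\regle{@_{\beta\geq d+2}^\infty}$ yields $M_{d+1} \flbig{d+2} M'$.

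The main technical subtlety is the interplay of the two induction levels in the application case: the depth-$d+1$ left premise is handled by the inner structural induction, whereas the depth-$d$ right premise is handled by the outer induction on $d$; the numerical indices of the combining rules must then line up so that the concatenation lives at depth $d+1$ on the finitary side and at depth $d+2$ on the infinitary side. Beyond this bookkeeping, the proof is routine.
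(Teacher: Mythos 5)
Your proof is correct and follows essentially the same route as the paper: an induction whose base case (depth $0$) is discharged by \cref{simul:lem:niveau1} and whose remaining cases push the construction through the congruence rules. The paper phrases it as a single induction on the derivation of $M\flbig{d}M'$ (which already decreases $d$ at the right premise of applications), whereas you unfold this into an outer induction on $d$ and an inner induction on the derivation — the same argument, just made more explicit.
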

\begin{proof}
	We define $M_d$ and the reductions $M\flbgs{d} M_d\flbig{d+1} M'$
	by induction on the reduction $M\flbig{d} M'$.
	The case of rule $\regle{\ax_{\beta\geq 0}^\infty}$
	is given by the previous Lemma.
	All the other cases are straightforward using the induction hypothesis.
\end{proof}

\begin{lem}\label{simul:lem:niveaux}
	For all $M,N \in\linf{001}$ such that $M\flbi N$,
	there is a sequence of terms $(M_d)_{d\in\NN}$
	such that for all $d\in\NN$:
	\[			M=M_0
	\flbgs{0}	M_1
	\flbgs{1} 	M_2 
	\flbgs{2} 	\dots 
	\flbgs{d-1} M_d 
	\flbig{d}	N.
	\]
\end{lem}

\begin{proof}
	We construct the sequence $(M_d)_{d\in\NN}$ inductively,
	by applying the previous Lemma.
\end{proof}

\begin{rem}[standardisation for $\flbi$]
	The decomposition of \cref{simul:lem:niveaux} can be slightly improved:
	if $M\flbi N$, there exist $M_0,M_1, M_2, \dots \in\linf{001}$ such that, for all $d\in\NN$:
	\[ M=M_0 \flbes{0} M_1 \flbes{1} M_2 \flbes{2} \dots \flbes{d-1} M_d \flbig{d} N \]
	where $\flbe{d}$ is defined as expected.
	This consequence is a weak counterpart to Curry and Feys' standardisation theorem for the λ-calculus \autocite{CurryFeys58}.
	Another, similar, standardisation theorem has been proved for $\linf{111}$ by Endrullis and Polonsky, using coinductive techniques \autocite{EndrullisAl13}.
\end{rem}

\begin{proof}[Sketch of proof]
	Using a classical result \autocite[lemma~11.4.6]{Barendregt84} which can be easily adapted to $\linf{001}$,
	$M\flbs N$ can be decomposed into \emph{head} and \emph{internal} reductions: $M\flhs M' \longrightarrow_i^* N$.
	Using this, one can prove by induction on $N$ that $M \flbes{0} M_1 \flbgs{1} N$.
	Indeed, we can write either
	\[N=\lambda x_1\dots x_m.(\dots((y)Q_1)\dots)Q_n\]
	in case it is a head normal form, or
	\[N=\lambda x_1\dots x_m.(\dots((\lambda z.P)Q_0)\dots)Q_n\]
	if it has a head redex;
	and then by the definition of internal reduction we must have respectively
	\[M'=\lambda x_1\dots x_m.(\dots((y)Q'_1)\dots)Q'_n\]
	or
	\[M'=\lambda x_1\dots x_m.(\dots((\lambda z.P')Q'_0)\dots)Q'_n\]
	with $Q'_i\flbs Q_i$ for each $i$,
	and also $P'\flbs P$ in the second case.
	In the case of a head normal form, we obtain $M'\flbgs{1}N$ so we can set $M_1\defeq M'$ directly.
	In the other case, we use the induction hypothesis on $P'$,
	to obtain $P_1$ such that $P'\flbes{0}P_1\flbgs{1}N$, and then set
	\[M_1\defeq \lambda x_1\dots x_m.(\dots((\lambda z.P_1)Q'_0)\dots)Q'_n\]
	so that $M\flbes{0}M'\flbes{0}M_1\flbgs{1} N$.

	Then one deduces that $M\flbgs{d} N$ implies $M\flbes{d}M_d\flbgs{d+1} N$,
	by induction on the derivation of $M\flbgs{d} N$.
	Standardisation follows by applying this result to the 
	sequence of reductions obtained by \cref{simul:lem:niveaux}.
\end{proof}
We do not detail the proof (nor the definition of $\flbes{d}$) further, because
this standardisation result is not used in the following:
at this point of the paper, the interested reader already has all the tools to
complete the construction.

\subsection{Decomposing the decomposition}

Each finite, min-depth reduction occuring in the decomposition of 
\cref{simul:lem:niveaux} can be simulated by the Taylor expansion. Using this 
fact, we can track the successive reducts of each approximant in the Taylor 
expansion of the original term $M$, providing a decomposition of each 
$\Tay(M_d)$ into finite sums of approximants.

\begin{lem}[additive splitting] \label{simul:lem:add_split}
	Let $\Ss,\Tay \subset \lr{}$ be sets, and $d\in\NN$.
	If $\Ss \flrgst{d} \Tay$ then, whenever we write
	$\Ss=\sum_{i\in I} S_i$ where each $S_i$ is a finite sum,
	there exist finite sums $T_i$ for $i\in I$,
	such that $\Tay = \sum_{i\in I} T_i$ and $\forall i\in I,\ S_i \flrgs{d} T_i$.
\end{lem}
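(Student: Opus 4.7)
The plan is to invoke the characterisation of $\flrgst{d}$ analogous to the one given in \cref{taylor:rem:flrst} for $\flrst$: namely, $\Ss \flrgst{d} \Tay$ holds if and only if one can write
\[ \Tay = \sum_{s \in \Ss} \Tay_s \qquad\text{with}\qquad s \flrgs{d} \Tay_s\ \text{for every } s \in \Ss, \]
where each $\Tay_s$ is a \emph{finite} sum of resource terms. The equivalence is proved exactly as in the remark: the sufficiency is immediate from the definition, and the necessity uses that $\flrgs{d}$ is a restriction of $\flrs$, so the size bound of \cref{taylor:lem:size} forces the set $\{S',\ s\flrs S'\}$ to be finite and hence allows one to regroup reducts along each source term $s$.

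Once this preliminary decomposition is in hand, the construction becomes routine. Given an arbitrary writing $\Ss = \sum_{i \in I} S_i$ with each $S_i$ a finite sum, I set
\[ T_i \defeq \sum_{s \in S_i} \Tay_s \quad\text{for each } i \in I. \]
Each $T_i$ is a finite sum, being a finite union of finite sums. Moreover, by rearranging set-theoretic unions,
\[ \sum_{i \in I} T_i = \sum_{i \in I}\sum_{s \in S_i} \Tay_s = \sum_{s \in \Ss} \Tay_s = \Tay, \]
where the middle equality uses only that the family $(S_i)_{i\in I}$ covers $\Ss$ (repetitions being absorbed in the qualitative, set-theoretic sum).

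It remains to check that $S_i \flrgs{d} T_i$ for each $i \in I$. Since $S_i$ is a finite sum and, for every $s \in S_i$, we have $s \flrgs{d} \Tay_s$, repeatedly applying rule $\regle{\Sigma_{r\geq d}}$ yields $S_i = \sum_{s \in S_i} s \flrgs{d} \sum_{s \in S_i} \Tay_s = T_i$, which concludes the argument. The only step of genuine content is the preliminary characterisation of $\flrgst{d}$: this is where the contextuality of $\flr$ provided by rule $\regle{\Sigma_r}$ (rather than the more restrictive $\regle{\Sigma_r'}$, cf.\ \cref{taylor:rem:sigma_r}) is essential, since we need every distribution of the target $\Tay$ along the sources $s \in \Ss$ to be realisable as an actual family of reductions $s \flrgs{d} \Tay_s$. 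Everything else is essentially bookkeeping about finite unions.
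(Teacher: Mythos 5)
Your proof is correct and follows essentially the same route as the paper's: both obtain the per-term decomposition $\Tay=\sum_{s\in\Ss}T_s$ with $s\flrgs{d}T_s$ from the characterisation in \cref{taylor:rem:flrst}, regroup the $T_s$ along the given partition of $\Ss$, and conclude $S_i\flrgs{d}T_i$ by contextuality of the sum rule (the paper phrases this last step as an induction on the total length of the reductions, which is the formal version of your ``repeatedly applying $\regle{\Sigma_{r\geq d}}$''). Your explicit justification that the remark transfers to the min-depth reduction is a welcome extra that the paper leaves implicit.
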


	\begin{proof}
	For each $i\in I$, write $S_i = \sum_{j\in J_i} s_{i,j}$ with $s_{i,j} \in \lr{}$,
	so that $\Ss = \sum_{i\in I} \sum_{j\in J_i} s_{i,j}$.
	Since $\Ss \flrgst{d} \Tay$ and using \cref{taylor:rem:flrst},
	there are finite sums $T_s$ for $s\in\Ss$
	such that $\Tay = \sum_{s\in\Ss} T_s$ and $\forall s\in\Ss,\ s \flrgs{d} T_{s}$.
	Define, for each $i\in I$, $T_i \defeq \sum_{j\in J_i} T_{s_{i,j}}$.
	It is straightforward to prove that for all $i\in I$, $S_i \flrgs{d} T_i$
	(by induction on the sum of the lengths of reductions $s_{i,j} \flrgs{d} T_{s_{i,j}}$ for $j\in J_i$).
	\end{proof}

\begin{cor} \label{simul:cor:niveaux_split}
	Let $M, N \in \linf{001}$ be terms such that $M \flbi N$,
	and $(M_d)_{d\in\NN}$ given by \cref{simul:lem:niveaux}.
	If $\Tay(M) = \sum_{i\in I} s_i$, then for each $d\in\NN$
	there exist finite sums $(T_{d,i})_{i\in I}$ such that:
	\begin{enumerate}
		\item $\forall i\in I,\ T_{0,i} = s_i$,
		\item $\forall d\in \NN,\ \Tay(M_d) = \sum_{i\in I} T_{d,i}$,
		\item $\forall d\in \NN,\ \forall i\in I,\ T_{d,i} \flrgs{d} T_{d+1,i}$.
	\end{enumerate}
\end{cor}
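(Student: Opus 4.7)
The proof is a clean induction on $d$, chaining together the two previous lemmas: \cref{simul:lem:simul_flbs_mindepth} to convert the $\flbgs{d}$-reduction between consecutive $M_d, M_{d+1}$ into a $\flrgst{d}$-reduction between their Taylor expansions, and \cref{simul:lem:add_split} to refine this single reduction between sets of resource terms into pointwise reductions of finite subsums.

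In more detail: the plan is to construct the families $(T_{d,i})_{i\in I}$ by induction on $d$, maintaining as an invariant that each $T_{d,i}$ is a \emph{finite} sum in $\lrsums{}$ and that $\Tay(M_d)=\sum_{i\in I} T_{d,i}$. For the base case $d=0$, set $T_{0,i}\defeq s_i$: this is a singleton, hence finite, point (1) is immediate, and point (2) at $d=0$ is exactly the hypothesis $\Tay(M)=\sum_{i\in I} s_i$ since $M_0=M$. For the inductive step, assume finite sums $(T_{d,i})_{i\in I}$ have been constructed with $\Tay(M_d)=\sum_{i\in I} T_{d,i}$. From \cref{simul:lem:niveaux} we have $M_d\flbgs{d} M_{d+1}$, and then \cref{simul:lem:simul_flbs_mindepth} yields $\Tay(M_d)\flrgst{d}\Tay(M_{d+1})$. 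Applying \cref{simul:lem:add_split} to this reduction and to the decomposition $\Tay(M_d)=\sum_{i\in I} T_{d,i}$ into finite sums produces finite sums $(T_{d+1,i})_{i\in I}$ such that $\Tay(M_{d+1})=\sum_{i\in I} T_{d+1,i}$ and $T_{d,i}\flrgs{d} T_{d+1,i}$ for every $i\in I$, establishing both points (2) and (3) at rank $d+1$ and preserving the invariant.

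There is no real obstacle: the construction is driven entirely by the previous lemmas. The only point that deserves attention is to verify that the hypothesis of \cref{simul:lem:add_split} — that the chosen decomposition of the left-hand side is into \emph{finite} subsums — is preserved along the induction, which is precisely why we insisted on finiteness as part of the invariant. Everything else, including the fact that the indexing set $I$ remains the same throughout the construction, follows immediately from the shape of the output of \cref{simul:lem:add_split}.
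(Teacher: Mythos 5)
Your proof is correct and follows exactly the paper's argument: the paper also sets $T_{0,i}\defeq s_i$ and constructs the $T_{d,i}$ by induction on $d$, combining \cref{simul:lem:simul_flbs_mindepth} (to get $\Tay(M_d)\flrgst{d}\Tay(M_{d+1})$) with the additive splitting lemma (\cref{simul:lem:add_split}). Your explicit attention to maintaining finiteness of the $T_{d,i}$ as an invariant is a welcome clarification of a point the paper leaves implicit.
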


	\begin{proof}
	For each $i\in I$, set $T_{0,i} \defeq s_i$ and define $T_{d,i}$ by induction on $d$ using the previous lemma and the fact that $\Tay(M_d) \flrgst{d} \Tay(M_{d+1})$, which is a consequence of \cref{simul:lem:simul_flbs_mindepth}.
	\end{proof}

\subsection{Height of a resource term}

We show a few properties of the interplay between the size and the height (wrt.
the 001-depth) of a term, that will play a crucial role in the main proof.

\begin{defi}[height of resource terms]
	The \emph{height} $h^{001}(\cdot)$ of resource terms
	is defined inductively by:
	\begin{align*}
		h^{001}(x)						&\defeq 0 						\\
		h^{001}(\lambda x.s)			&\defeq h^{001}(s) 			\\
		h^{001}(\rapp{s}\ms{t})	&\defeq 
			\max\left( h^{001}(s) , h^{001}(\ms{t}) \right) \\
		h^{001}([t_1,\dotsc,t_n])			&\defeq 1+\max_{1\leq i\leq n} 
		h^{001}(t_i)
	\end{align*}
	The height of a finite sum $S \in \lrsums{}$ is given by
	$h^{001}(S) \defeq \max_{s\in S} h^{001}(s)$
	--- by convention, $h^{001}(0)=0$.
\end{defi}

\begin{lem} \label{simul:lem:h001_size}
	For all $S\in \lrsums $, $h^{001}(S) \leq |S|$.
\end{lem}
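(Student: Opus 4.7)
The plan is to prove the pointwise bound $h^{001}(s) \leq |s|$ for every resource term $s \in \lr{}$ and, jointly, $h^{001}(\ms t) \leq |\ms t|$ for every monomial $\ms t \in \lr{!}$, by straightforward mutual induction on the structure of $s$ and $\ms t$. The extension to sums is then immediate from the definitions of $h^{001}(S)$ and $|S|$ as maxima over $s \in S$.

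Concretely, I would check the four cases as follows. For a variable $x$ we have $h^{001}(x) = 0 \leq 1 = |x|$. For an abstraction, $h^{001}(\lambda x.s) = h^{001}(s) \leq |s| \leq 1 + |s| = |\lambda x.s|$ by the induction hypothesis. For an application $\rapp{s}\ms t$ the induction hypothesis gives $h^{001}(s) \leq |s|$ and $h^{001}(\ms t) \leq |\ms t|$, so $h^{001}(\rapp{s}\ms t) = \max(h^{001}(s), h^{001}(\ms t)) \leq \max(|s|, |\ms t|) \leq |s| + |\ms t| = |\rapp{s}\ms t|$, using that both $|s|$ and $|\ms t|$ are non-negative. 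For a monomial $[t_1,\dotsc,t_n]$ the induction hypothesis yields $h^{001}(t_i) \leq |t_i|$ for each $i$, hence $h^{001}([t_1,\dotsc,t_n]) = 1 + \max_i h^{001}(t_i) \leq 1 + \max_i |t_i| \leq 1 + \sum_i |t_i| = |[t_1,\dotsc,t_n]|$.

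Finally, given a finite sum $S \in \lrsums{}$, we have $h^{001}(S) = \max_{s \in S} h^{001}(s) \leq \max_{s \in S} |s| = |S|$ (with both sides equal to $0$ when $S = 0$).

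There is no real obstacle here: the inequality drops out of the definitions once one observes that the size bookkeeping is always at least as generous as the height bookkeeping (the application case replaces a max by a sum, and the monomial case replaces a max by a sum of non-negative quantities).
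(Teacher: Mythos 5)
Your proof is correct and follows exactly the paper's approach: an immediate structural induction on resource terms (and monomials) establishing the pointwise bound $h^{001}(s)\leq|s|$, followed by taking the maximum over $s\in S$. The paper merely states this without spelling out the cases; your case analysis fills in the same routine details.
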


\begin{proof}
	Show the result for $s\in\lr{}$ by an immediate induction on $s$. Conclude by taking the maximum over $s\in S$.
\end{proof}

\begin{lem} \label{simul:lem:h001_flrgs}
	Let $S\in \lrsums $ be a finite sum of resource terms and $d\in\NN$ such that $d>h^{001}(S)$.
	Then there is no reduction $S\flrg{d} S'$.
\end{lem}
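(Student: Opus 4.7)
The plan is to prove the contrapositive, strengthened to both terms and monomials: for every $u\in\lr{(!)}$ and $U\in\lrsums[(!)]$, if $u\flrgo{d} U$ then $h^{001}(u)\geq d$. The general statement for finite sums then follows directly: whenever $S\flrg{d} S'$ is derived via rule $\regle{\Sigma_{r\geq d}}$, at least one summand $s$ of $S$ satisfies $s\flrgo{d} S''$, whence $h^{001}(S)\geq h^{001}(s)\geq d$, contradicting the assumption $d>h^{001}(S)$.

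I would prove the core claim by induction on the derivation of $u\flrgo{d} U$. The base case $\regle{\ax_{r\geq 0}}$ imposes $d=0$, which is trivially dominated by any height. The structural rules $\regle{\lambda_{r\geq d+1}}$, $\regle{@l_{r\geq d+1}}$ and $\regle{@r_{r\geq d+1}}$ each apply the induction hypothesis to a direct subterm whose 001-height either equals, or is a lower bound for, the height of the whole term: for instance, from $s\flrgo{d+1} S$ the rule $\regle{@l_{r\geq d+1}}$ derives $\rapp{s}\ms t\flrgo{d+1} \rapp{S}\ms t$, and the induction hypothesis $h^{001}(s)\geq d+1$ gives $h^{001}(\rapp{s}\ms t)=\max(h^{001}(s),h^{001}(\ms t))\geq d+1$.

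The crucial case is $\regle{!_{r\geq d+1}}$, where $s\cdot\ms t\flrgo{d+1} S\cdot\ms t$ is derived from $s\flrgo{d} S$. The induction hypothesis gives $h^{001}(s)\geq d$, and the definition $h^{001}([s,t_1,\dots,t_n])=1+\max(h^{001}(s),h^{001}(t_1),\dots,h^{001}(t_n))$ then yields $h^{001}(s\cdot\ms t)\geq 1+h^{001}(s)\geq d+1$. This is precisely where the $+1$ appearing in the 001-height of a multiset compensates exactly for the depth decrement in the rule, reflecting the fact that 001-depth is incremented only when crossing the argument position of an application, i.e., entering a multiset.

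I expect no real obstacle here: the argument is a routine induction, and the only subtlety is the joint formulation over terms and monomials, which is necessary because the rule $\regle{!_{r\geq d+1}}$ passes from a term reduction to a monomial reduction and thus mixes the two syntactic categories in the inductive hypothesis.
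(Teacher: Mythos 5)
Your proposal is correct and follows essentially the same route as the paper's proof: an induction on the derivation of $u\flrgo{d} U$ showing $d\leq h^{001}(u)$, with the rule $\regle{!_{r\geq d+1}}$ as the key case where the $+1$ in the height of a multiset absorbs the depth decrement. Your explicit treatment of the lifting to sums via $\regle{\Sigma_{r\geq d}}$ is a detail the paper leaves implicit, but it matches the intended argument.
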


\begin{proof}
	The result immediately follows from the fact that
	given $d$, $s$ and $S'$, if $s\flrgo{d}S'$ then $d\leq h^{001}(s)$.
	We prove this by induction on the derivation $s\flrgo{d}S'$.
	\begin{itemize}
	\item Case $\regle{\ax_{r\geq 0}}$, $d=0$ and the result is trivial.
	\item Case $\regle{\lambda_{r\geq d+1}}$,
		$s \bisim \lambda x.u$ and $S'\bisim \lambda x.U'$ with $u\flrg{d+1} U'$.
		We conclude directly by the induction hypothesis since 
		$h^{001}(u) = h^{001}(u)$.
	\item Cases $\regle{@l_{r\geq d+1}}$ and $\regle{@r_{r\geq d+1}}$
		are similar to the previous one.
	\item Cases $\regle{!_{r\geq d+1}}$,
		$\ms s=[t] \cdot \ms u$
		and
		$\ms S'=[T'] \cdot \ms u$
		with $t\flrg{d} T'$.
		By induction hypothesis, we have $d\leq h^{001}(t)$,
		hence $d+1\leq 1+h^{001}(t)\leq \max(1+h^{001}(t),h^{001}(\ms 
		u))=h^{001}(\ms s)$.
	\end{itemize}
\end{proof}

\subsection{The diagonal argument}\label{simul:sub:diag}

Finally, we conclude by a sort of diagonal argument: $\Tay(N)$ is shown to be the union of the $T_{i,d_i}$, each of these finite sums being finitely reached from some $s_i\in\Tay(M)$. In that sense, we obtain a pointwise finitary simulation of the infinitary reduction.

The key definition is somehow complementary to the min-depth reduction: 
it is the Taylor expansion at (upper-)\emph{bounded} depth, 
defined hereunder. 
Concretely, $\Taymax{d}(M)$ is the sum of all approximants $s \in \Tay(M)$ 
such that $h^{001}(s) < d$.

\begin{defi}[bounded-depth Taylor expansion]
	For all $d \in \NN$, the relation $\apptaymax{d}$ of \emph{Taylor 
	approximation at depth bounded by $d$} is inductively defined on $\lr{} 
	\times \linf{001}$ by $\mathord{\apptaymax{0}} \defeq \emptyset$ and by the 
	following rules:
	\begin{prooftabular}
		\mbox{\begin{prooftree}
		\hypo{}
		\infer1[\ax_{\apptay < d+1}]{ x \apptaymax{d+1} x }
		\end{prooftree}}
	& 
		\mbox{\begin{prooftree}
		\hypo{ s \apptaymax{d+1} M }
		\infer1[\lambda_{\apptay < d+1}]{ \lambda x.s \apptaymax{d+1} \lambda x.M }
		\end{prooftree}}
	\ptnewline 
		\mbox{\begin{prooftree}
		\hypo{ s \apptaymax{d+1} M }
		\hypo{ \ms{t} \apptaymax{d+1}^! N }
		\infer2[@_{\apptay < d+1}]{ \rapp{s}\ms{t} \apptaymax{d+1} (M)N }
		\end{prooftree}}
	&
		\mbox{\begin{prooftree}
		\hypo{ \left( t_i \apptaymax{d} M \right)_{i=1}^n }
		\infer1[!_{\apptay < d+1}]{ [t_1,\dots,t_n] \apptaymax{d+1}^! M }
		\end{prooftree}}
	\end{prooftabular}
	
	The Taylor expansion of a term $M\in\linf{001}$ at depth bounded by $d$
	is the set
	$\Taymax{d}(M) \defeq \sum_{s\apptaymax{d} M} s$.
	We also write $\Taymaxbang{d}(M) \defeq \sum_{\ms s\apptaymax{d}^! M} \ms s$.
\end{defi}

It should be clear that we have $s\apptaymax{d} M$ iff $s\apptay M$ and $h^{001}(s)<d$,
so that we obtain the following Lemma.

\begin{lem} \label{simul:lem:h001_taylor}
	Let $M\in\linf{001}$ be a term, $S\in \lrsums $ a finite sum of resource terms, and $d\in\NN$. If $S \subset \Tay(M)$ and $h^{001}(S) < d$, then $S \subset \Taymax{d}(M)$.
\end{lem}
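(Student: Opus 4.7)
The plan is to first make precise the equivalence that the authors claim is clear immediately before the Lemma: for every $s \in \lr{}$, $d\in\NN$ and $M\in\linf{001}$, we have $s \apptaymax{d} M$ if and only if both $s\apptay M$ and $h^{001}(s)<d$ hold. Once this is established, the Lemma follows by a one-line argument: given $S\subseteq\Tay(M)$ with $h^{001}(S)<d$, each $s\in S$ satisfies $s\apptay M$ (by hypothesis) and $h^{001}(s)\leq h^{001}(S)<d$ (by the definition of the height of a finite sum), so the equivalence yields $s\apptaymax{d} M$, i.e.\ $s\in \Taymax{d}(M)$. Hence $S\subseteq\Taymax{d}(M)$.

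To prove the equivalence, I would proceed by straightforward induction in both directions, simultaneously handling the $\apptay^!$ case alongside the $\apptay$ case (the rules are mutually inductive). For the forward implication, I would induct on the derivation of $s\apptaymax{d} M$: each of the rules $\regle{\ax_{\apptay<d+1}}$, $\regle{\lambda_{\apptay<d+1}}$, $\regle{@_{\apptay<d+1}}$ and $\regle{!_{\apptay<d+1}}$ is a depth-annotated copy of the corresponding rule for $\apptay$, so the induction hypothesis directly yields $s\apptay M$; the bound on $h^{001}(s)$ follows from the definition of $h^{001}$ and the index $d+1$ being available in the conclusion of each rule (the only case requiring a small computation is $\regle{!_{\apptay<d+1}}$, where $h^{001}(t_i)<d$ for each summand gives $h^{001}([t_1,\dotsc,t_n])=1+\max_i h^{001}(t_i)\leq d<d+1$).

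For the converse direction, I would induct on the derivation of $s\apptay M$ (and $\ms t\apptay^! M$), with the additional hypothesis $h^{001}(s)<d$ (resp.\ $h^{001}(\ms t)<d$). In each case, $h^{001}(s)<d$ forces $d\geq 1$, so we may write $d=d'+1$ and apply the corresponding rule for $\apptaymax{d'+1}$; the only slightly delicate point is again the multiset case, where $h^{001}([t_1,\dotsc,t_n])<d$ means $\max_i h^{001}(t_i)<d-1$, enabling the induction hypothesis to be applied at bound $d-1$ on each $t_i$, so that $\regle{!_{\apptay<d}}$ can be used to conclude. I do not expect any real obstacle in this proof: the main (very mild) subtlety is keeping track of the off-by-one bookkeeping between $d$ and $d+1$ around the coinductive multiset rule, which is precisely where the height is incremented.
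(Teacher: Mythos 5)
Your proposal is correct and matches the paper's intent exactly: the paper simply asserts the equivalence $s\apptaymax{d} M$ iff $s\apptay M$ and $h^{001}(s)<d$ as clear and states the Lemma with no further proof, and your two mutual inductions (together with the one-line derivation of the Lemma from the equivalence) are precisely the intended justification. The only caveat is the degenerate empty-multiset case in the forward direction of the equivalence, whose truth depends on the convention chosen for $h^{001}(1)$; this does not affect the converse direction, which is the only one the Lemma actually uses, so your argument for the statement itself is sound.
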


\begin{lem} \label{simul:lem:taylor_maxdepth_eq}
	Let $M,N \in \linf{001}$ be terms.
	If $M\flbig{d} N$ then $\Taymax{d}(M) = \Taymax{d}(N)$.
\end{lem}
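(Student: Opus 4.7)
The plan is to proceed by a nested induction: outer induction on $d$, inner induction on the derivation of $M \flbig{d} N$. The key observation that makes this work is the interplay between the definitions: the rule $\regle{!_{\apptay<d+1}}$ reduces the depth bound by one in the premises, exactly matching the way rule $\regle{@_{\beta\geq d+1}^\infty}$ reduces the depth by one in the argument position.

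For the base case $d=0$, we have $\Taymax{0}(M)=\Taymax{0}(N)=\emptyset$ by definition, so there is nothing to prove. This is fortunate because $\flbig{0}$ is actually $\flbi$ (by \cref{simul:lem:flbig}), which can make arbitrary changes to the term; but since the Taylor expansion at depth $0$ captures no information, this is irrelevant.

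For the inductive step $d=d'+1$, I would case-split on the last rule of the derivation of $M\flbig{d'+1} N$. The case $\regle{\Vv_{\beta\geq d'+1}^\infty}$ is immediate since $M=N=x$. For $\regle{\lambda_{\beta\geq d'+1}^\infty}$, with $M=\lambda x.P$, $N=\lambda x.P'$ and $P\flbig{d'+1} P'$, the inner induction hypothesis gives $\Taymax{d'+1}(P)=\Taymax{d'+1}(P')$, and the conclusion follows from the rule $\regle{\lambda_{\apptay<d'+1}}$. For $\regle{@_{\beta\geq d'+1}^\infty}$, with $M=(P)Q$, $N=(P')Q'$, $P\flbig{d'+1} P'$ and $Q\flbig{d'} Q'$, the inner IH yields $\Taymax{d'+1}(P)=\Taymax{d'+1}(P')$, while the outer IH (applied at the smaller parameter $d'$) yields $\Taymax{d'}(Q)=\Taymax{d'}(Q')$; this latter equality lifts to $\Taymaxbang{d'+1}(Q)=\Taymaxbang{d'+1}(Q')$ thanks to rule $\regle{!_{\apptay<d'+1}}$, and combining both we conclude via rule $\regle{@_{\apptay<d'+1}}$.

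The main subtlety lies in the need for the outer induction on $d$: it is tempting to try to handle everything by a single induction on the derivation, but this fails in the application case, where the right-hand premise has the strictly smaller index $d'$ rather than $d'+1$. Apart from this point, the argument is essentially a bookkeeping exercise matching the combinatorial structure of $\flbig{d}$ to that of $\apptaymax{d}$, and no genuinely hard reasoning is required.
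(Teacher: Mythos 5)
Your proof is correct and follows essentially the same route as the paper, which performs a single induction on the derivation of $M\flbig{d} N$ with exactly your case analysis (the $\regle{\ax_{\beta\geq 0}^\infty}$ case being your $d=0$ base case). Your only deviation is the claim that a single induction on the derivation "fails" in the application case: it does not, because the rules defining $\flbig{d}$ are all inductive, so the right-hand premise $Q\flbig{d'} Q'$ is a strict subderivation and the induction hypothesis applies to it directly — your outer induction on $d$ is harmless but unnecessary.
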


	\begin{proof}
	We prove the result by induction on $M \flbig{d} N$.
	\begin{itemize}
	\item Case $\regle{\ax_{\beta\geq 0}^\infty}$,
		$\Taymax{0}(M) = 0 = \Taymax{0}(N)$.
	\item Case $\regle{\Vv_{\beta\geq d+1}^\infty}$,
		$N \bisim x\bisim M$ so $\Taymax{d+1}(M) = x = \Taymax{d+1}(N)$.
	\item Case $\regle{\lambda_{\beta\geq d+1}^\infty}$,
		$M \bisim \lambda x.P \flbig{d+1} \lambda x.P' \bisim N$, with $P \flbig{d+1} P'$.
		By induction, $\Taymax{d+1}(P) = \Taymax{d+1}(P')$ so 
		$\Taymax{d+1}(M) = \Taymax{d+1}(N)$ using the rule
		$\regle{\lambda_{\apptay < d+1}}$.
	\item Case $\regle{@_{\beta\geq d+1}^\infty}$,
		$M \bisim (P)Q \flbig{d+1} (P')Q' \bisim N$, with $P \flbig{d+1} P$ and $Q \flbig{d} Q'$.
		By induction, $\Taymax{d+1}(P) = \Taymax{d+1}(P')$ and 
		$\Taymax{d}(Q) = \Taymax{d}(Q')$,
		so $\Taymaxbang{d+1}(Q) = \Taymaxbang{d+1}(Q')$ by the rule $\regle{!_{\apptay < d+1}}$, and 
		finally $\Taymax{d+1}(M) = \Taymax{d+1}(N)$ by the rule $\regle{@_{\apptay < d+1}}$.
	\end{itemize}
	\end{proof}

\begin{thm}[simulation of the infinitary reduction] \label{simul:thm:simul_flbi}
	Let $M,N \in \linf{001}$ be terms. If $M\flbi N$, then $\Tay(M)\flrst \Tay(N)$.
\end{thm}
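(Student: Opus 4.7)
The plan is to combine the step-by-step decomposition $M=M_0\flbgs{0}M_1\flbgs{1}\dots\flbgs{d-1}M_d\flbig{d}N$ provided by \cref{simul:lem:niveaux} with the pointwise tracking of Taylor approximants supplied by \cref{simul:cor:niveaux_split}, and then conclude by a diagonal argument exploiting the fact that resource terms have bounded size, hence bounded height.

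First I write $\Tay(M)=\sum_{i\in I}s_i$ and apply \cref{simul:cor:niveaux_split} to obtain, for each $i\in I$ and $d\in\NN$, finite sums $T_{d,i}$ with $T_{0,i}=s_i$, $\Tay(M_d)=\sum_{i} T_{d,i}$, and $T_{d,i}\flrgs{d} T_{d+1,i}$. Since sizes do not increase under $\flr$ (\cref{taylor:rem:wn}), we have $|T_{d,i}|\le|s_i|$ for all $d$, and hence $h^{001}(T_{d,i})<d_i\defeq|s_i|+1$ by \cref{simul:lem:h001_size}. Then \cref{simul:lem:h001_flrgs} forbids any non-trivial $\flrg{d}$-step out of $T_{d,i}$ as soon as $d\ge d_i$, so the sequence stabilises: $T_{d,i}=T_{d_i,i}\defeq t_i$ for all $d\ge d_i$. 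Composing the finite chain $s_i\flrgs{0}T_{1,i}\flrgs{1}\dots\flrgs{d_i-1}t_i$ yields $s_i\flrs t_i$, hence $\Tay(M)=\sum_i s_i\flrst\sum_i t_i$ directly from the definition of $\flrst$.

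It remains to identify $\sum_i t_i$ with $\Tay(N)$. The inclusion $\sum_i t_i\subseteq\Tay(N)$ is direct: $t_i=T_{d_i,i}\subseteq\Tay(M_{d_i})$ has height $<d_i$, so by \cref{simul:lem:h001_taylor} it lies in $\Taymax{d_i}(M_{d_i})$, which coincides with $\Taymax{d_i}(N)\subseteq\Tay(N)$ by \cref{simul:lem:taylor_maxdepth_eq}. For the reverse inclusion, I pick $u\in\Tay(N)$, set $d\defeq h^{001}(u)+1$, and combine \cref{simul:lem:h001_taylor,simul:lem:taylor_maxdepth_eq} to obtain $u\in\Taymax{d}(N)=\Taymax{d}(M_d)\subseteq\Tay(M_d)=\sum_i T_{d,i}$; thus $u\in T_{d,i}$ for some $i$, and it only remains to propagate $u$ along $T_{d,i}\flrgs{d}T_{d+1,i}\flrgs{d+1}\dots$ up to index $\max(d,d_i)$, where it reaches $t_i$.

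The main obstacle is precisely this propagation step, and it is where the qualitative, contextual formulation of reduction enforced by rule $\regle{\Sigma_r}$ and spelled out in \cref{taylor:rem:sigma_r,taylor:rem:flrst} becomes indispensable. Explicitly, I need the auxiliary fact that if $u\in T_{d',i}$ with $d'>h^{001}(u)$, then $u\in T_{d'+1,i}$. Decomposing $T_{d',i}\flrgs{d'}T_{d'+1,i}$ into individual $\flrg{d'}$-steps, and each such step via rule $\regle{\Sigma_{r\ge d'}}$, the summand equal to $u$ can only be transformed via $u\flrgo{d'}^? V$; but a non-trivial $\flrgo{d'}$-reduction of the single-term sum $u$ is forbidden by \cref{simul:lem:h001_flrgs} since $d'>h^{001}(u)$, so $V=\{u\}$ and $u$ persists. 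Iterating this fact from $d$ up to $\max(d,d_i)$ ensures $u\in t_i$ and completes the proof.
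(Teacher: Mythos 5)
Your proposal is correct and follows essentially the same route as the paper's own proof: the decomposition via \cref{simul:lem:niveaux} and \cref{simul:cor:niveaux_split}, the size/height bound $h^{001}(T_{d,i})<d_i\defeq|s_i|+1$, stabilisation of the $T_{d,i}$ via \cref{simul:lem:h001_flrgs}, and the two inclusions through \cref{simul:lem:h001_taylor,simul:lem:taylor_maxdepth_eq}. Your only addition is to spell out explicitly the persistence argument (that a summand $u$ with $h^{001}(u)<d'$ survives each $\flrg{d'}$-step because rule $\regle{\Sigma_{r\geq d'}}$ can only act on it reflexively), a detail the paper leaves implicit when it invokes \cref{simul:lem:h001_flrgs}.
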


	\begin{proof}
	Suppose $M\flbi N$.
	By \cref{simul:lem:niveaux},
	we obtain terms $M_0, M_1, M_2, \ldots \in\linf{001}$ such that, for all $d\in\NN$:
	\[			M=M_0
	\flbgs{0}	M_1
	\flbgs{1} 	M_2 
	\flbgs{2} 	\dots 
	\flbgs{d-1} M_d 
	\flbig{d}	N.
	\]
	Writing $\Tay(M) = \sum_{i\in I} s_i$,
	\cref{simul:cor:niveaux_split} yields finite sums $T_{d,i}$ such that:
	\begin{enumerate}
		\item $\forall i\in I,\ T_{0,i} = s_i$,
		\item $\forall d\in \NN,\ \Tay(M_d) = \sum_{i\in I} T_{d,i}$,
		\item $\forall d\in \NN,\ \forall i\in I,\ T_{d,i} \flrgs{d} T_{d+1,i}$.
	\end{enumerate}
	For $i\in I$, define $d_i \defeq |s_i|+1$ and $T_i \defeq T_{d_i,i}$. Using 
	\cref{simul:lem:h001_size}, for all $d\in\NN$, $h^{001}(T_{d,i}) \leq 
	|T_{d,i}| \leq |s_i|$. Thus, $h^{001}(T_i) < d_i$.
	
	\begin{itemize}
		\item From \cref{simul:lem:h001_taylor} and \cref{simul:lem:taylor_maxdepth_eq}, we have $T_i \subset \Taymax{d_i}(M_{d_i}) = \Taymax{d_i}(N) \subset \Tay(N)$. Hence, $\sum_{i\in I} T_i \subseteq \Tay(N)$.
		
		\item Take $t\in\Tay(N)$. From \cref{simul:lem:h001_taylor}, $t\in 
		\Taymax{h}(N)$ where $h \defeq h^{001}(t)+1$. With 
		\cref{simul:lem:taylor_maxdepth_eq}, $t \in \Taymax{h}(M_h) \subset 
		\Tay(M_h)$, so $\exists i\in I,\ t\in T_{h,i}$. For all $d \geq h$, 
		$T_{h,i} \flrgs{h} T_{d,i}$ so, by \cref{simul:lem:h001_flrgs}, $t\in 
		T_{d,i}$.
		
		Notice that for all $d\geq d_i$, $T_i \flrgs{d_i} T_{d,i}$ so using 
		again \cref{simul:lem:h001_flrgs}, $T_{d,i} = T_i$.
		
		Thus, if we take $d \geq \max(h,d_i)$, we obtain $t\in T_i$. This leads 
		us to $\Tay(N) \subseteq \sum_{i\in I} T_i$.
	\end{itemize}
	
	Finally, $\Tay(M) = \sum_{i\in I} s_i$, $\Tay(N) = \sum_{i\in I} T_i$, and $\forall i\in I,\ s_i \flrs T_i$. This implies the theorem.
	\end{proof}

\begin{rem}[models of $\linf{001}$] \label{simul:rem:models}
	An important consequence of this simulation result is that any model $\Mm$ of $\lr{}$ is also a model of $\linf{001}$,
	as soon as it makes sense to consider infinite sets of resource terms:
	it suffices to interpret any term $M \in \linf{001}$ by $\llbracket M\rrbracket_{\Mm} \defeq \bigoplus_{s\in\Tay(M)} \llbracket s \rrbracket_{\Mm}$.
	This is in particular the case of the well-known construction of a 
	reflexive object $\Dd$ in the category $\MRel$ \autocite{BucciarelliAl07}.
\end{rem}

\section{Head reduction and normalisation properties} \label{norm}

In this final part, we show several consequences of \cref{simul:thm:simul_flbi}. Most of them are \emph{unsurprising}, which is good news:
we want $\linf{001}$ to be a convenient framework to consider reductions and normal forms of the usual λ-terms.
In particular, head- and β-normalisation can be characterised in a similar fashion as in the finitary λ-calculus,
and we prove infinitary counterparts to well-known results such as the Commutation theorem and the Genericity lemma.

\subsection{Solvability in $\linf{001}$}

The definition of $\linf{001}$ is tightly related to head reduction:
the inductive/coinductive structure of $001$-infinitary terms follows the head structure of terms;
and since $\flbg{0}$ contains head reduction, \cref{simul:lem:niveaux} entails that any $\flbi$-reduction
involves only finitely many head reduction steps.

The good properties of head reduction should thus be preserved when moving from $\Lambda$ to $\linf{001}$.
This is indeed the case, as expressed by \cref{norm:thm:charac_head}:
a 001-infinitary term is head-normalising
if and only if the head reduction strategy terminates.
As a consequence, we will show that the notion of solvability is completely preserved in $\linf{001}$.

\begin{lem}[head forms] \label{norm:lem:head}
	Let $M\in\linf{001}$ be a term, then either
	\[ M \quad\bisim\quad \lambda x_1\dots\lambda x_m.\left(\dots 
	\left(\left((\lambda z.N)P\right)Q_1\right)\dots \right)Q_n \]
	or:
	\[ M \quad\bisim\quad \lambda x_1\dots\lambda x_m.\left(\dots 
	\left((y)Q_1\right)\dots \right)Q_n \]
	where $m, n\in \NN$, $x_1,\dots,x_m,y,z\in\Vv$ and $N,P,Q_1,\dots,Q_n\in\linf{001}$.
	In the first case, $(\lambda y.N)P$ is the \emph{head redex} of $M$. In the second case, $M$ is in \emph{head normal form} (\textsc{hnf}).
	
	Similarly, a resource term $s\in\lr{}$ can always be written $s = \lambda x_1\dots\lambda x_m. \rapp{ \dots \rapp{ \rapp{u}\ms{t_1} } \dots } \ms{t_n}$ where $u$ is either a (head) redex or a variable.
\end{lem}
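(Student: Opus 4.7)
The plan is to prove this by induction on the inductive layer of $M$, that is on the finite tree of rules $\regle{\Vv}$, $\regle{\lambda}$, $\regle{@}$ above any occurrence of the coinductive rule $\regle{\coI}$ in a derivation of ${}\vdash M$. This is well-founded because the definition $\linf{001} = \nu Y. \mu X.( \Vv + \lambda\Vv.X + (X)Y )$ places the coinductive guard only in argument position, while the head spine only traverses abstractions and the left side of applications; hence no coinductive reasoning should be required for this statement, even though we are manipulating possibly infinite terms.

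I would distinguish three cases. If $M \bisim x$, then $M$ is immediately in head normal form with $m = n = 0$ and $y \defeq x$. If $M \bisim \lambda x.M'$, the induction hypothesis applied to $M'$ yields a head form that, prepended with $\lambda x$, gives the desired form for $M$. The interesting case is $M \bisim (M')M''$: the induction hypothesis applies to $M'$ (but not to $M''$, which lives beyond the coinductive guard), so I split on whether the decomposition of $M'$ starts with at least one abstraction. When $M' \bisim \lambda x_1.M_1$, the term $M \bisim (\lambda x_1.M_1)M''$ already has a head redex at the root and is therefore in the first form, with $m = n = 0$ and the head redex being $M$ itself. Otherwise $M'$ has the shape $(\dots((u)R_1)\dots)R_k$, where $u$ is either a head redex or a variable; appending $M''$ as an additional rightmost argument $R_{k+1} \defeq M''$ then preserves the decomposition and yields a head form for $M$, of the same kind as $M'$.

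The resource case is handled by exactly the same case analysis, this time via ordinary structural induction on $s \in \lr{}$, since resource terms form a plain inductive set. I do not foresee any substantial obstacle: the statement asserts only a structural decomposition, and the mixed inductive-coinductive nature of $\linf{001}$ does not interfere because the entire head spine is captured by the $\mu X$ layer. The only subtlety is the case split for $(M')M''$ on whether $M'$ exposes an outer abstraction, which is what dictates whether the decomposition of $M$ acquires a new head redex at the root or inherits the shape of the decomposition of $M'$.
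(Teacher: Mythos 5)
Your proposal is correct and follows essentially the same route as the paper, which proves the lemma by induction on the inductive layer of $M$ and explicitly notes that no coinductive rule needs to be crossed since the guard sits only in argument position. The case analysis you give (including the split in the application case on whether the left subterm exposes an outer abstraction) is exactly the standard finitary argument the paper invokes.
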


	\begin{proof}
	By induction, following the inductive structure of $M$ (we do not need to cross any coinductive rule here, and remain within the first \enquote{coinductive level} of $M$; thus, the proof is exactly the same as in the finitary case).
	\end{proof}

\begin{defi}[head reductions]
	The \emph{head reduction} is the relation $\flh$ defined on $\linf{001}$ so that
	$M\flh N$ if $N$ is obtained by reducing the head redex of $M$.

	Similarly, the \emph{resource head reduction} is the relation $\flrho$
	defined on $\lr{}$ such that $s\flrho T$ if
	$T$ is obtained by reducing the head redex of $s$.
	It is extended to $\flrh$ on $\lrsums{}$ in the same way as $\flr$.
\end{defi}

\begin{nota}[head reduction operators]
	If $M\in\linf{001}$ we set: $H(M) \defeq N$ if $M\flh N$; and $H(M) \defeq M$ if $M$ is in \textsc{hnf}.

	Similarly, if $s\in\lr{}$ we set: $H_r(s) \defeq T$ if $s\flrho T$; and $H(s) \defeq s$ if $s$ is in \textsc{hnf}.
	This operator is extended to $\lrsums{}$ by $H_r\left( \sum_i s_i \right) \defeq \sum_i H_r(s_i)$.
\end{nota}

\begin{lem}[simulation of the head reduction operator] \label{norm:lem:simul_flh}
	Let $M\in\linf{001}$ be a term, then
	\myorjourn{\\}{}
	$H_r(\Tay(M)) = \Tay(H(M))$.
\end{lem}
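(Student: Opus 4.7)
The plan is to invoke Lemma~\ref{norm:lem:head} and reason separately depending on whether $M$ is in head normal form or has a head redex. In both cases, I would reduce the claim to a pointwise statement on approximants $s \apptay M$ and then reassemble the Taylor expansion by summing over $s$.

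If $M\bisim\lambda x_1\dots x_m.(\dots((y)Q_1)\dots)Q_n$ is in \textsc{hnf}, then $H(M)=M$; and by unfolding the inductive rules defining $\apptay$, every $s\apptay M$ necessarily has the form $\lambda x_1\dots x_m.\rapp{\dots\rapp{\rapp{y}\ms t_1}\dots}\ms t_n$ with $\ms t_i\apptay^! Q_i$. Such an $s$ is itself in head normal form (its head is the variable $y$), so $H_r(s)=s$ by definition. Summing over all approximants therefore gives $H_r(\Tay(M))=\Tay(M)=\Tay(H(M))$.

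Suppose instead that $M\bisim\lambda x_1\dots x_m.(\dots(((\lambda z.N)P)Q_1)\dots)Q_n$ has a head redex, so that $H(M)=\lambda x_1\dots x_m.(\dots((\subst NP[z])Q_1)\dots)Q_n$. Again by the approximation rules, each $s\apptay M$ decomposes uniquely as
\[ s=\lambda x_1\dots x_m.\rapp{\dots\rapp{\rapp{\rapp{\lambda z.s_0}\ms t_0}\ms t_1}\dots}\ms t_n \]
with $s_0\apptay N$, $\ms t_0\apptay^! P$ and $\ms t_i\apptay^! Q_i$ for $1\leq i\leq n$; its head redex is $\rapp{\lambda z.s_0}\ms t_0$, hence
\[ H_r(s)=\lambda x_1\dots x_m.\rapp{\dots\rapp{\rapp{\rsubst{s_0}{\ms t_0}[z]}\ms t_1}\dots}\ms t_n, \]
understood via the linear extension of resource constructors.

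To conclude, I would sum these equalities over all choices of $s_0,\ms t_0,\ms t_1,\dots,\ms t_n$, push the inner sum past the outer constructors using the linearity recalled at the start of Section~\ref{taylor}, and finally invoke the simulation of substitution (Lemma~\ref{simul:lem:simul_subst}) to identify $\sum_{s_0\apptay N,\ \ms t_0\apptay^! P}\rsubst{s_0}{\ms t_0}[z]$ with $\Tay(\subst NP[z])$. The resulting expression is exactly $\Tay(H(M))$. The only technical obstacle is bookkeeping through the nested distributivity of sums, but this is routine once the decomposition of approximants is made explicit.
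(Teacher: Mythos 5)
Your proof is correct and follows exactly the route the paper intends: the paper's own proof is the one-line remark that the lemma is a direct consequence of \cref{simul:lem:simul_subst}, and your argument is precisely the spelled-out version of that — decompose $M$ via \cref{norm:lem:head}, observe that the head redex of each approximant mirrors the head redex of $M$, and identify the resulting sum of substitutions with $\Tay\left(\subst N P[z]\right)$ using the substitution simulation lemma. No gap; the bookkeeping you defer is indeed routine.
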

	
	\begin{proof}
	Direct consequence of \cref{simul:lem:simul_subst}.
	\end{proof}

\begin{lem}[termination of the resource head reduction operator] \label{norm:lem:termin_hr}
	Let $S \in \freemod{2}{\lr{}}$ be a sum of resource terms, then there exists $k \in \NN$ such that $H_r^k(S)$ is in \textsc{hnf}.
\end{lem}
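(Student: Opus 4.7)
The plan is to exhibit a measure on finite sums of resource terms that decreases strictly under $H_r$ whenever the sum is not already in head normal form, and conclude by well-foundedness.

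First, I would use \cref{norm:lem:head}: every resource term $s$ is either in \textsc{hnf} (in which case $H_r(s)=s$) or decomposes with a head redex at the root. In the latter case, applying $H_r$ performs one $\flro$-step at the head, so $s\flr H_r(s)$ and \cref{taylor:lem:size} gives $|H_r(s)|<|s|$; equivalently, every summand of $H_r(s)$ has size strictly smaller than $|s|$.

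Second, to every finite sum $S\in\lrsums{}$ I associate the multiset
\[ \|S\|_h \defeq [\,|s|\ \mid\ s\in S,\ s\not\in\textsc{hnf}\,]\in\Mfin(\NN). \]
Since $H_r$ fixes \textsc{hnf}-summands and replaces each non-\textsc{hnf} summand $s$ by a finite sum of terms of size strictly less than $|s|$ (the non-\textsc{hnf} ones contributing strictly smaller entries to the multiset, the \textsc{hnf} ones contributing nothing), we obtain $\|H_r(S)\|_h\prec\|S\|_h$ in the Dershowitz--Manna ordering whenever $S$ is not already in \textsc{hnf}. The only subtle point is that in the qualitative setting sums are sets, so summands may collapse across the different $H_r(s_i)$; but collisions can only make the resulting multiset smaller, which is still a strict decrease.

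Third, the Dershowitz--Manna ordering on $\Mfin(\NN)$ is well-founded (as already used in the proof of \cref{taylor:lem:confl_term_flr}), so the sequence $(\|H_r^k(S)\|_h)_{k\in\NN}$ cannot decrease forever. Hence there exists $k\in\NN$ with $\|H_r^k(S)\|_h=\emptyset$, which is exactly to say that every summand of $H_r^k(S)$ is in \textsc{hnf}, as required.

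I expect no real obstacle: the argument is a direct application of the size lemma, and the multiset machinery is already in play elsewhere in the paper. The only care needed is the bookkeeping to ensure the qualitative collapse of sums does not accidentally disturb the multiset decrease, which is resolved once one observes that collapsing only removes entries.
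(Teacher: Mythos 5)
Your proof is correct and follows essentially the same route as the paper: split off the \textsc{hnf} summands, use \cref{taylor:lem:size} to get a strict size decrease on the head-reduced part, and conclude by well-foundedness of the Dershowitz--Manna ordering. The only cosmetic difference is that your multiset omits the \textsc{hnf} summands while the paper's $\Vert\cdot\Vert$ keeps them (they contribute unchanged entries), which does not affect the argument.
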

	
	\begin{proof}
	Given $S \in \lrsums$, write $S = S' + S_{\text{\sc hnf}}$ where 
	$S_{\text{\sc hnf}}$ contains the terms of $S$ in \textsc{hnf}. By 
	definition of $H_r$, we have $H_r(S) = H_r(S') + S_{\text{\sc hnf}}$ with 
	$|H_r(S')| < |S'|$ from \cref{taylor:lem:size} whenever $S' \neq 0$,
	so that $\Vert H_r(S)\Vert \prec \Vert S\Vert$ in this case
	--- with the notation of \cref{taylor:lem:confl_term_flr}.
	We conclude by the well-foundedness of $\prec$.
	\end{proof}

Now we provide a characterisation of head-normalising infinitary terms based
on their Taylor expansion.
In a finitary setting, this result has been folklore for some time
\autocite{Olimpieri18,Olimpieri20}.

\begin{thm}[characterisation of head-normalising terms] \label{norm:thm:charac_head}
	Let $M\in\linf{001}$ be a term, then the following propositions are equivalent:
	\begin{enumerate}
	\item there exists $N\in\linf{001}$ in \textsc{hnf} such that $M\flbi N$,
	\item there exists $s\in\Tay(M)$ such that $\nf{r}{s} \neq 0$,
	\item there exists $N\in\linf{001}$ in \textsc{hnf} such that $M\flhs N$.
	\end{enumerate}
\end{thm}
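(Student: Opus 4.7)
The plan is to prove the cycle $(3) \Rightarrow (1) \Rightarrow (2) \Rightarrow (3)$. The first implication is immediate, since head reduction is a special case of finitary $\beta$-reduction and $\mathord{\flbs} \subseteq \mathord{\flbi}$ by \cref{linf:lem:flbi_transitive}(2): $M \flhs N$ entails $M \flbs N$, hence $M \flbi N$.

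For $(1) \Rightarrow (2)$, assume $M \flbi N$ with $N$ in \textsc{hnf}. \cref{simul:thm:simul_flbi} yields $\Tay(M) \flrst \Tay(N)$, so by \cref{taylor:rem:flrst} we can write $\Tay(N) = \sum_{s \in \Tay(M)} S'_s$ with $s \flrs S'_s$ for each $s$. By \cref{norm:lem:head}, $N \bisim \lambda x_1 \dots \lambda x_m.(\dots ((y)Q_1) \dots)Q_n$, and the resource term $t \defeq \lambda x_1 \dots \lambda x_m. \rapp{\dots \rapp{\rapp{y}1} \dots} 1$, obtained by taking the $n = 0$ case of rule $\regle{!_{\apptay}}$ in every argument position, satisfies $t \apptay N$ and is itself in resource \textsc{hnf}. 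Hence $\nf r t = t \neq 0$. Picking some $s \in \Tay(M)$ with $t \in S'_s$, confluence of $\flr$ (\cref{taylor:lem:confl_term_flr}) gives $\nf r s = \nf r{S'_s}$, which contains $\nf r t = t \neq 0$, so $\nf r s \neq 0$.

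For $(2) \Rightarrow (3)$, fix $s \in \Tay(M)$ with $\nf r s \neq 0$. \cref{norm:lem:termin_hr} supplies $k \in \NN$ such that $H_r^k(s)$ is in \textsc{hnf}. Since $s \flrs H_r^k(s)$ (each $H_r$ step is a head $\flr$-step), confluence yields $\nf r{H_r^k(s)} = \nf r s \neq 0$, hence $H_r^k(s)$ is a non-empty sum and contains some term $t$ that is in resource \textsc{hnf}. Iterating \cref{norm:lem:simul_flh} gives $H_r^k(s) \subseteq H_r^k(\Tay(M)) = \Tay(H^k(M))$, so $t \in \Tay(H^k(M))$. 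If $H^k(M)$ were not in \textsc{hnf}, then by \cref{norm:lem:head} it would take the shape $\lambda x_1\dots\lambda x_m.(\dots ((\lambda z.N')P)Q_1) \dots)Q_n$, and every resource approximant of $H^k(M)$ would exhibit a corresponding head redex of the form $\rapp{\lambda z.s'}\ms u$, contradicting $t$ being in \textsc{hnf}. Hence $H^k(M)$ is in \textsc{hnf} and $M \flhs H^k(M)$ witnesses (3).

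The main technical obstacle is the $(2) \Rightarrow (3)$ step, which must combine three ingredients: termination of resource head reduction (\cref{norm:lem:termin_hr}) to bound the computation, confluence of $\flr$ to transfer non-triviality of $\nf r s$ to non-triviality of $H_r^k(s)$, and the structural observation on Taylor approximants that an approximant of a non-\textsc{hnf} term must still contain the head redex of that term. The two other implications are relatively routine, relying only on the simulation theorem and the easy construction of an in-\textsc{hnf} approximant of any infinitary \textsc{hnf}.
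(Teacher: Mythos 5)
Your proposal is correct and follows essentially the same route as the paper: the same cycle of implications, with $(1)\Rightarrow(2)$ via \cref{simul:thm:simul_flbi} and the approximant $\lambda x_1\dots\lambda x_m.\rapp{\dots\rapp{\rapp{y}1}\dots}1$, and $(2)\Rightarrow(3)$ via \cref{norm:lem:termin_hr}, confluence, and \cref{norm:lem:simul_flh}. The only cosmetic difference is that you invoke uniqueness of normal forms where the paper exhibits an explicit common reduct $U$ of $t_0+T$ and $H_r^k(s)$, which amounts to the same thing.
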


	\begin{proof}
	Suppose (1), that is $M \flbi N \bisim \lambda x_1\dots\lambda 
	x_m.\left(\dots \left((y)N_1\right)\dots \right)N_n$. In particular, 
	$\Tay(N)$ contains $t_0 \bisim \lambda x_1\dots\lambda x_m. \rapp{ \dots 
	\rapp{ \rapp{y} 1 } \dots } 1$, which is normal.
	Using \cref{simul:thm:simul_flbi}, there 
	exists $s\in\Tay(M)$ and $T\in \Tay(N)$ such that $s\flrs t_0+T$ which 
	proves (2).
	
	Now, suppose that (2) holds, that is $s\flrs t_0+T$ with $t_0$ in normal 
	form. According to \cref{norm:lem:termin_hr}, there is a $k\in\NN$ such 
	that $H_r^k(s)$ is in \textsc{hnf}. Thus, using confluence, there exists a 
	$U\in \lrsums $ such that:
	$t_0+T\flrs U$ and $H_r^k(s)\flrs U$.
	Since $t_0$ is in normal form, $t_0\in U$. Thus, $H_r^k(s) \neq 0$, so there exists a term
		\[
		\lambda x_1\dots\lambda x_m. \rapp{ \dots \rapp{ \rapp{y}\ms{t_1} } \dots } \ms{t_n}
		\ \in\ H_r^k(s) 
		\ \in\ H_r^k(\Tay(M)) 
		\ =\ \Tay(H^k(M)),
		\]
	by \cref{norm:lem:simul_flh}. As a consequence, $H^k(M)$ has shape $\lambda x_1\dots\lambda x_m.\left(\dots \left((y)M_1\right)\dots \right)M_n$, which shows (3).
	
	Finally, (1) is as immediate consequence of (3).
	\end{proof}

A first notable consequence of the previous result is the equivalence of head-normalisation and solvability. In the finitary λ-calculus, this is a well-known theorem \autocite{Wadsworth76}. The following proof, based on the Taylor expansion and inspired by \autocite{Olimpieri20}, is much simpler than the original one.

\begin{defi}[solvability]
	A term $M\in\linf{001}$ is said to be \emph{solvable} in $\Lambda$
	(resp.  in $\linf{001}$) if there exist $x_1,\dots,x_m\in\Vv$ and
	$N_1,\dots,N_n\in\Lambda$ (resp. $\linf{001}$) such that
	\begin{flalign*}
		&& (\dots((\lambda x_1 \dots \lambda x_m. M) N_1)\dots) N_n
		&  \flbs \lambda x.x
		&  \text{(resp. $\flbi$).}
	\end{flalign*}
	Otherwise, $M$ is \emph{unsolvable}.
\end{defi}

\begin{cor}[characterisation of solvable terms]
	Let $M\in\linf{001}$ be a term, then the following propositions are equivalent:
	\begin{enumerate}
	\item $M$ is solvable in $\linf{001}$,
	\item $M$ is head-normalising,
	\item $M$ is solvable in $\Lambda$.
	\end{enumerate}
\end{cor}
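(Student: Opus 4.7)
The plan is to prove the cycle $(3) \Rightarrow (1) \Rightarrow (2) \Rightarrow (3)$. The implication $(3) \Rightarrow (1)$ is immediate: any finite λ-term belongs to $\linf{001}$, and $\mathord{\flbs} \subseteq \mathord{\flbi}$ by \cref{linf:lem:flbi_transitive}, so any witness of solvability in $\Lambda$ is \emph{a fortiori} a witness of solvability in $\linf{001}$.

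For $(1) \Rightarrow (2)$, I would start from a witness $M' \defeq (\dots((\lambda x_1 \dots \lambda x_m.M)N_1)\dots)N_n \flbi \lambda x.x$ of solvability in $\linf{001}$. Since $\lambda x.x$ is in \textsc{hnf}, $M'$ is head-normalisable, so by \cref{norm:thm:charac_head} there exists $s' \in \Tay(M')$ with $\nf{r}{s'} \neq 0$. Unfolding the Taylor approximation relation $\apptay$ along the outer structure of $M'$ forces $s'$ to have the shape $\rapp{\dots\rapp{\rapp{\lambda x_1\dots\lambda x_m.u}{\ms{t_1}}}\dots}{\ms{t_n}}$ for some $u \apptay M$ and $\ms{t_i} \apptay^! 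N_i$. Assume, for a contradiction, that $M$ is not head-normalisable; then \cref{norm:thm:charac_head} applied to $M$ yields $\nf{r}{u}=0$, \emph{i.e.}~$u \flrs 0$. Using the contextual closure of $\flrs$ and the fact that all constructors extend by linearity to $\lrsums$ (so in particular $\lambda x.0 = 0$ and $\rapp{0}{\ms t}=0$), one propagates this reduction outwards and obtains $s' \flrs 0$, contradicting $\nf{r}{s'} \neq 0$.

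For $(2) \Rightarrow (3)$, \cref{norm:thm:charac_head} provides a finite head reduction $M \flhs N$ with $N \bisim \lambda x_1\dots\lambda x_m.(y)Q_1\dots Q_n$. I would then apply the classical trick due to Wadsworth: if $y \notin \{x_1,\dots,x_m\}$, abstract over $y$ first; then feed the finite λ-term $\lambda w_1\dots\lambda w_n.\lambda z.z \in \Lambda$ to the argument slot corresponding to $y$, with any finite λ-terms (\emph{e.g.}~$\lambda z.z$) in the remaining slots. The resulting term finitarily β-reduces to $\lambda z.z$ in a bounded number of head steps, regardless of how complex the possibly infinite $Q_i$ may be, because they are absorbed by the constant function above. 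All arguments used belong to $\Lambda$, so $M$ is indeed solvable in $\Lambda$.

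I expect the main obstacle to be the \enquote{zero-propagation} step in $(1) \Rightarrow (2)$: one must verify that reducing $u$ to $0$ at the innermost position of $s'$ forces the whole term to reduce to $0$ via contextual steps. This relies on the contextuality of $\flrs$, as guaranteed by rule $\regle{\Sigma_r}$ and discussed at length in \cref{taylor:rem:sigma_r}, together with the absorption of $0$ through every constructor. The other two implications are essentially formal: $(3) \Rightarrow (1)$ is a plain inclusion, and $(2) \Rightarrow (3)$ is a verbatim transposition of the classical finite construction, which survives in this setting because solvability probes only the head structure of $M$.
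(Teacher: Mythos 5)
Your proof is correct and follows essentially the same route as the paper's: the same cycle $(3)\Rightarrow(1)\Rightarrow(2)\Rightarrow(3)$, using \cref{norm:thm:charac_head} to extract a non-vanishing approximant $u\in\Tay(M)$ from an approximant of the applied term for $(1)\Rightarrow(2)$, and the Wadsworth-style construction with the $n$-ary eraser $\lambda w_1\dots\lambda w_n.\lambda z.z$ (the paper's $K^nI$) for $(2)\Rightarrow(3)$. Your explicit zero-propagation argument by contradiction merely spells out the step the paper leaves implicit when it asserts that $\nf{r}{u}\neq 0$ must hold.
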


	\begin{proof}
	Suppose (1), \emph{i.e.} there exists $x_1,\dots,x_m\in\Vv$ and $N_1,\dots,N_n\in\linf{001}$ such that 
	\[(\dots((\lambda x_1 \dots \lambda x_m. M) N_1)\dots) N_n \flbi \lambda x.x\,,\]
	which is in \textsc{hnf}.
	Then, according to \cref{norm:thm:charac_head}, there is an
	\[ s\in\Tay\left( (\dots((\lambda x_1 \dots \lambda x_m. M) N_1)\dots) N_n \right)\]
	such that $\nf{r}{s} \neq 0$.
	This resource term has shape 
	\[s \bisim \rapp{\dots\rapp{ \rapp{ \lambda x_1 \dots \lambda x_m.u } \ms{t_1} }\dots} \ms{t_n}\]
	with $u\in\Tay(M)$ and $\ms{t_i}\in\Tay(N_i)$.
	We must have $\nf{r}{u} \neq 0$, which leads to (2), 
	again with \cref{norm:thm:charac_head}.
	
	Now, suppose (2). \Cref{norm:thm:charac_head} gives
	$M\flhs \lambda x_1\dots\lambda x_m.\left(\dots \left((y)M_1\right)\dots \right)M_n$.
	Then:
	\begin{itemize}
	\item if $y \bisim x_i$, then $(M)(K^nI)^{(m)} \flbi I$,
	\item otherwise, $\left((\lambda y.M) K^nI \right) I^{(m)} \flbi I$,
	\end{itemize}
	using \cref{linf:nota:puissances} and the usual terms $I \defeq \lambda x.x$ and $K \defeq \lambda x.\lambda y.x$. This shows (3).
	
	The implication from (3) to (1) is direct, by \cref{linf:lem:flbi_transitive}.
	\end{proof}

\subsection{Normalisation, confluence and the Commutation Theorem}

We shall now address the key properties of normalisation and confluence in 
$\linf{001}$. It is known since Kennaway \emph{et al.}'s seminal paper 
\autocite{KennawayAl97} that, even though the infinitary λ-calculi are not 
strongly normalising (in any version $\linf{abc}$, there is no strongly 
convergent reduction from the term $\Omega$ to a normal form), the so-called 
β$\bot$-reduction is normalising and confluent in $\linf{001}$, $\linf{101}$ 
and $\linf{111}$. This is a confluence \enquote{up to a set of 
\emph{meaningless} terms}, which are forced to reduce to a constant $\bot$ 
(this technique was introduced by \autocite{Berarducci96} and 
\autocite{KennawayAl97,KennawayAl96}; for a summary, see 
\autocite[§~6.3]{BarendregtManzonetto}). In the case of $\linf{001}$, the 
meaningless terms are the unsolvable ones.

In this part, we use the Taylor expansion and a new version of Ehrhard and Regnier's Commutation theorem to give a simple presentation of normalisation, confluence, and a few other noteworthy corollaries.

First, we have to add the constant $\bot$ to our language, and to update the definition of the reductions and of the Taylor expansion correspondingly.

\begin{defi}[λ$\bot$-terms]
	Given a set of variables $\Vv$, the set $\lbinf{001}$ of 001-infinitary λ$\bot$-terms is defined by:
	\[ \lbinf{001} \defeq 
	\nu Y.\mu X.\left( \Vv + \lambda\Vv.X + (X)Y + \bot \right). \]
\end{defi}

\begin{defi}[β$\bot$-reduction]
	The binary relation $\bot_0$ is defined on $\lbinf{001}$ by:
	\[\bot_0 \defeq \{ (M,\bot),\ \text{$M$ is unsolvable}\}
	\cup \{(\lambda x.\bot, \bot),\ x\in\Vv\}
	\cup \{((\bot)M, \bot),\ M\in\lbinf{001}\}.\]
	
	The β$\bot$-reduction $\flbb$ is the contextual closure of $\beta_0\cup \bot_0$. The 
	infinitary β$\bot$-reduction $\flbbi$ is the 001-strongly convergent 
	closure of $\flbb$.
\end{defi}

Recall that, as underlined in \cref{linf:rem:nf},
a \emph{β$\bot$-normal form} is a term that cannot be reduced
through $\flbb$ (and not through $\flbbi$, which is reflexive),
and that the β$\bot$-normal forms of $M$ are the β$\bot$-normal terms $N$
such that $M \flbbi N$.

\begin{defi}[Taylor expansion for $\bot$]
	The Tayor expansion is extended to $\lbinf{001}$ by defining $\apptay$ exactly as in \cref{taylor:def:taylor}. This means that there is no approximant of $\bot$, and thereby $\Tay(\bot) = 0$.
\end{defi}

\begin{rem}\label{norm:rem:zero}
Observe that $\Tay(M)=0$ iff
$M\bisim\bot$ or $M\bisim\lambda x.M'$ or $M\bisim (M')N$ 
with $\Tay(M')=0$.
In particular, if $\Tay(M)=0$ and $M\not=\bot$ then $M$ contains a subterm of the form 
$\lambda x.\bot$ or $(\bot)N$.
\end{rem}

The following result is an extension of \cref{simul:thm:simul_flbi}, ensuring that adding the constant $\bot$ does not break all our previous work.

\begin{cor}[simulation of $\flbbi$] \label{norm:lem:simul_bbot}
	Let $M,N \in \lbinf{001}$ be λ$\bot$-terms. If $M\flbbi N$, then 
	$\Tay(M)\flrst \Tay(N)$.
\end{cor}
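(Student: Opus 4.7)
The plan is to follow the proof of \cref{simul:thm:simul_flbi} step by step, extending each ingredient to $\flbbi$. The three new axioms in $\bot_0$ all produce $\bot$, and each one is handled using the fact that $\Tay(\bot)=0$ together with \cref{norm:thm:charac_head}, which is available at this point of the paper.

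First I would establish a finitary analogue of \cref{simul:lem:simul_flbs}: if $M\flbbs N$ then $\Tay(M)\flrst \Tay(N)$. The proof proceeds by induction on $M\flbb N$. The contextual cases are treated exactly as in \cref{simul:lem:simul_flbs}, since the Taylor expansion commutes with the term constructors by linearity. The $\beta_0$-axiom case is unchanged. For the three new axioms of $\bot_0$, the cases $\lambda x.\bot\to\bot$ and $(\bot)M\to\bot$ follow by reflexivity of $\flrst$, since by \cref{norm:rem:zero} both sides have Taylor expansion $0$; and for the axiom $M\to\bot$ with $M$ unsolvable, \cref{norm:thm:charac_head} ensures that $\nf{r}{s}=0$ for every $s\in\Tay(M)$, so that $s\flrs 0$ for each such $s$ and thus $\Tay(M)\flrst 0 = \Tay(\bot)$.

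Next I would lift each remaining ingredient of the proof of \cref{simul:thm:simul_flbi} to the β⊥ setting. The min-depth β⊥-reduction is defined by the obvious analogues of the rules of \cref{simul:rem:flbgs}, and its simulation counterpart is obtained as in \cref{simul:lem:simul_flbs_mindepth} using the previous step. The decomposition lemmas (\cref{simul:lem:niveau1}, \cref{simul:lem:niveaud1}, \cref{simul:lem:niveaux}) only rely on the inductive-coinductive structure of the reduction, which is identical for $\flbbi$ and $\flbi$; only the axiom used differs. The additive splitting (\cref{simul:lem:add_split} and \cref{simul:cor:niveaux_split}) is a purely resource-calculus statement, and the height lemmas (\cref{simul:lem:h001_taylor} and \cref{simul:lem:taylor_maxdepth_eq}) transfer verbatim once $\Taymax{d}$ is extended to $\lbinf{001}$ by using no approximation rule for $\bot$.

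The diagonal argument of \cref{simul:thm:simul_flbi} then applies verbatim, yielding $\Tay(M)\flrst \Tay(N)$. The only genuinely new ingredient is the unsolvable base case of the finitary simulation, which relies crucially on the characterisation of head-normalisable terms provided by \cref{norm:thm:charac_head}; everything else is a routine transfer of the already established machinery.
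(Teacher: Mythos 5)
Your proposal is correct and follows essentially the same route as the paper: extend the finitary simulation lemma to the three $\bot_0$ axioms (using $\Tay(\bot)=0$ for the two structural ones and \cref{norm:thm:charac_head} to get $\Tay(M)\flrst 0$ for unsolvable $M$), then rerun the machinery of \cref{simul} unchanged. The paper's own proof is just a more compressed version of exactly this argument, ending with ``the rest of the proof is analogous to \cref{simul}.''
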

	
	\begin{proof}
	If $M$ is unsolvable then $M \mathrel{\bot_0} \bot$, and 
	there is no $N$ in \textsc{hnf} such that $M\flbs N$. From 
	\cref{norm:thm:charac_head}, it follows that $\forall s\in\Tay(M),\ 
	\nf{r}{s}=0$, that is to say $\Tay(M) \flrst \Tay(\bot)$. If $M\bisim 
	\lambda x.\bot$ or $M\bisim (\bot)M'$, then $\Tay(M) = \Tay(\bot) = 0$. 
	This extends \cref{simul:lem:simul_flbs} to the β$\bot$-reduction: if 
	$M\flbbs N$, then $\Tay(M)\flrst \Tay(N)$. The rest of the proof is 
	analogous to \cref{simul}. 
	\end{proof}

The next definition concerns Böhm trees. Based on an idea by Böhm
\autocite{Bohm68} and formally defined by Barendregt \autocite{Barendregt77},
Böhm trees were introduced as a notion of infinite normal form for the usual
λ-calculus, giving account of the (potentially) infinite behaviour of λ-terms.
They rely on a coinductive definition (probably the first one in the study
of the λ-calculus), and are the normal forms of $\linf{001}$.

\begin{defi}[Böhm tree]
	The \emph{Böhm tree} of a term $M\in\linf{001}$ is the λ$\bot$-term 
	$\BT{M}$ defined coinductively as follows:
	\begin{itemize}
	\item if $M$ is solvable and $M \flhs \lambda x_1\dots\lambda x_m.\left(\dots \left((y)M_1\right)\dots \right)M_n$, then:
		\[ \BT{M} \defeq \lambda x_1\dots\lambda x_m.\left(\dots \left((y)\BT{M_1}\right)\dots \right)\BT{M_n}, \]
	\item if $M$ is unsolvable, then $\BT{M} \defeq \bot$.
	\end{itemize}
	This definition is extended to $\lbinf{001}$
	by setting $\BT{\bot} \defeq \bot$.
\end{defi}

Notice again that every coinductive call to $\BT{-}$ occurs in the right side of an application, that is to say under a rule $\regle{\coI}$ carrying the $\later$ modality.

\begin{lem} \label{norm:lem:bt_nf}
	Let $M\in\lbinf{001}$ be a term
	\begin{enumerate}
	\item $\BT{M}$ is in β$\bot$-normal form.
	\item If $M$ is in β$\bot$-normal form then $\BT{M} \bisim M$.
	\end{enumerate}
\end{lem}
	
	\begin{proof}
		\begin{enumerate}
		\item By induction on the definition of $\flbb$,
		we show that for any λ$\bot$-terms $M$ and $N$, 
		if $M\flbb N$ then $M$ is not a Böhm tree:
		in the base case, it is sufficient to observe that 
		a Böhm tree is never a $\beta$-redex nor a $\bot$-redex;
		the contextuality cases are straightforward.
		\item By coinduction on the definition of $\BT{M}$,
		using the fact that a β$\bot$-normal term is either solvable
		or equal to $\bot$.
		\qedhere 
		\end{enumerate}
	\end{proof}

\begin{lem}[weak β$\bot$-normalisation] \label{norm:lem:bbot_wn}
	Let $M\in\lbinf{001}$ be a term, then $M\flbbi \BT{M}$.
	Furthermore, if $M \in \linf{001}$ and $\BT{M}\in\linf{001}$,
	then $M\flbi\BT{M}$.
\end{lem}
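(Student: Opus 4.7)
The plan is to proceed by nested coinduction and induction, following the coinductive definition of $\BT{M}$: the outer coinduction mirrors the $\lbinf{001}$ structure of $\BT{M}$, and within each coinductive step I reason by induction on the finite head prefix (the leading $\lambda$'s and left-hand side of applications) of the head normal form of $M$.

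If $M$ is unsolvable, then $\BT{M}=\bot$ and $M\mathrel{\bot_0}\bot$ holds directly by definition of $\bot_0$, so $M\flbb\bot$ and $M\flbbi\bot$ via the leaf axiom rule for $\flbbi$ on $\lbinf{001}$. If $M$ is solvable, then $M\flhs N\defeq\lambda x_1\dotsm\lambda x_m.(\dotsm((y)M_1)\dotsm)M_n$, which is in particular an $\flbbs$-reduction. The coinductive hypothesis yields, for each $i\in\{1,\dotsc,n\}$, a derivation $M_i\flbbi\BT{M_i}$; rule $\regle{\coI_\beta^\infty}$ then guards them as $\later M_i\flbbi\BT{M_i}$. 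I build $M\flbbi\BT{M}$ by an inner induction on the finite head prefix of $N$: the topmost application of rule $\regle{\lambda_\beta^\infty}$ absorbs the head reduction $M\flbbs N$; each subsequent $\regle{\lambda_\beta^\infty}$ is fed by a reflexive $\flbbs$-premise; then $n$ applications of $\regle{@_\beta^\infty}$ proceed along the left spine, each consuming one guarded derivation $\later M_i\flbbi\BT{M_i}$ on its right premise; and the induction bottoms out at the leaf $y\flbbi y$ via rule $\regle{\ax_\beta^\infty}$.

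For the second statement, observe that if $\BT{M}\in\linf{001}$ then no occurrence of $\bot$ appears in $\BT{M}$: $M$ must therefore be solvable, and each $\BT{M_i}$, being a subterm of $\BT{M}$, lies in $\linf{001}$ as well. The same construction then goes through verbatim with $\flbs$ and $\flbi$ in place of $\flbbs$ and $\flbbi$, since no $\bot_0$-step is ever invoked.

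The only real point to verify is that the construction yields a legal inductive–coinductive derivation in the sense discussed after \cref{linf:def:formsys}: each coinductive call $M_i\flbbi\BT{M_i}$ must sit under a $\regle{\coI_\beta^\infty}$-guard in the output. This is automatic here, because the $M_i$ appear in argument position in the head normal form of $M$, matching exactly the coinductive step both in rule $\regle{@_\beta^\infty}$ and in the grammar $\nu Y.\mu X.(\Vv+\lambda\Vv.X+(X)Y)$ of $\linf{001}$. I do not expect any deeper obstacle beyond this bookkeeping.
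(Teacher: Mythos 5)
Your proposal is correct and follows essentially the same route as the paper: case split on solvability, reduce to the head normal form by $\flhs$, apply $\regle{\lambda_\beta^\infty}$ $m$ times and $\regle{@_\beta^\infty}$ $n$ times, and recurse coinductively on the $M_i$ under the $\regle{\coI}$ guard, with the second claim following because the unsolvable case never arises when $\BT{M}\in\linf{001}$. The only nitpick is that the head reduction $M\flbbs N$ is absorbed by the $\flbbs$-premise of whichever rule is applied first (which is $\regle{@_\beta^\infty}$ or $\regle{\ax_\beta^\infty}$ when $m=0$, not necessarily $\regle{\lambda_\beta^\infty}$), but this does not affect the argument.
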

	
	\begin{proof}
	We build a derivation of $M\flbbi \BT{M}$ coinductively.
	If $M$ is either unsolvable or $\bot$,
	we have $M\flbbr \bot=\BT{M}$ by definition.
	Otherwise, we have 
	\[M \flhs \lambda x_1\dots\lambda x_m.\left(\dots \left((y)M_1\right)\dots \right)M_n\,.\]
	In this case, we apply $m$ times rule $\regle{\lambda_\beta^\infty}$,
	then $n$ times rule $\regle{@_\beta^\infty}$,
	and proceed coinductively to build derivations of 
	$M_i\flbi \BT{M_i}$ for $1\le i\le m$.

	If $\BT{M}\in\linf{001}$, then the first case of the construction never
	occurs, and we obtain $M\flbi\BT{M}$ instead.
	\end{proof}

The following two technical lemmas, already well-known
in a finitary setting \autocite[Facts 4.17 and 4.15]{Vaux19},
will be useful to show the unicity of β$\bot$-normal forms.

\begin{lem} \label{norm:lem:taylor_nf}
	Let $M\in\lbinf{001}$ be a term in β$\bot$-normal form, then $\Tay(M)$ is in normal form.
\end{lem}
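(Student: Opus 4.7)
The plan is a structural induction on approximants, once we have a clear picture of what a β$\bot$-normal form in $\lbinf{001}$ looks like. First I would establish a characterisation lemma: if $M\in\lbinf{001}$ is in β$\bot$-normal form, then either $M\bisim\bot$, in which case $\Tay(M)=0$ is vacuously in normal form, or $M\bisim\lambda x_1\dots\lambda x_m.(\dots(y)M_1\dots)M_n$ with $y\in\Vv$ and each $M_i$ again a β$\bot$-normal form. To see this, I would adapt \cref{norm:lem:head} to $\lbinf{001}$ (its proof does not rely on the absence of $\bot$), obtaining the above head shape up to the possibility of a head redex or a $\bot$-constant in the head spine. Being in β$\bot$-normal form rules out the head redex (no $\beta_0$ reduction), rules out $\lambda x.\bot$ in the spine (since $\lambda x.\bot\mathrel{\bot_0}\bot$ propagates through the contextual closure, so any $\lambda x_1\dots\lambda x_m.\bot$ with $m>0$ reduces), rules out $(\bot)Q$ in the spine (so $y\ne\bot$), and ensures by the subterm property that each $M_i$ is itself a β$\bot$-normal form.

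Next I would observe a parallel characterisation for resource terms: a resource term $s$ is in normal form iff it contains no subterm of the form $\rapp{\lambda x.u}\ms t$, equivalently iff $s\bisim\lambda x_1\dots\lambda x_m.\rapp{\dots\rapp{y}\ms t_1\dots}\ms t_n$ with $y\in\Vv$ and every element of every $\ms t_i$ in normal form. Then, given $s\apptay M$ with $M$ in β$\bot$-normal form, I would proceed by induction on the (finite) derivation of $s\apptay M$. Since $\apptay$ is defined inductively following the term constructors, the derivation forces $s$ to mirror the head shape of $M$: in the nontrivial case where $M\bisim\lambda x_1\dots\lambda x_m.(\dots(y)M_1\dots)M_n$, we necessarily have $s\bisim\lambda x_1\dots\lambda x_m.\rapp{\dots\rapp{y}\ms t_1\dots}\ms t_n$ with $\ms t_i\apptay^! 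M_i$, hence $t\apptay M_i$ for every $t\in\ms t_i$. Applying the induction hypothesis to each such $t$, which is structurally smaller than $s$, yields that every $t\in\ms t_i$ is in normal form, and thus so is $s$. Taking the union over all $s\apptay M$ gives $\Tay(M)$ in normal form.

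The main obstacle is really the characterisation step: one has to be careful to propagate the β$\bot$-normal-form condition through \emph{all three} clauses of $\bot_0$, and in particular to handle the subtle case where $\bot$ occurs under a sequence of $\lambda$-abstractions, which forces $M\bisim\bot$ rather than $M\bisim\lambda \vec x.\bot$. Once this is settled, the induction on approximants is essentially mechanical because $\apptay$ strictly preserves head structure, and the leading variable $y$ in the spine blocks any resource redex at the head of $s$ --- so the only redexes of $s$ would have to lie strictly inside the $\ms t_i$'s, which the induction hypothesis excludes.
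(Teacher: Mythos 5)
Your proof is correct and rests on the same underlying observation as the paper's, namely that the approximation relation $\apptay$ strictly mirrors the term structure, so a resource redex $\rapp{\lambda x.u}\ms t$ inside some $s\apptay M$ can only sit over a $\beta$-redex of $M$; the paper compresses this into a one-line contrapositive (\enquote{if some $s\in\Tay(M)$ contains a redex then so does $M$}), whereas you unfold it positively through head-form characterisations and an induction on the derivation of $s\apptay M$. Your extra machinery --- the characterisation of β$\bot$-normal forms and the careful treatment of the three clauses of $\bot_0$ --- is sound but not actually needed here, since resource reduction has no $\bot$-rule and only the absence of $\beta$-redexes in $M$ is relevant.
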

	
	\begin{proof}
	By contraposition, if some $s\in\Tay(M)$ contains a redex then so does $M$.
	\end{proof}

\begin{lem}[injectivity, almost] \label{norm:lem:taylor_quasi_inj}
	Let $M,N\in\lbinf{001}$ be terms. If $\Tay(M) = \Tay(N)$, and if neither $M$ nor $N$
	contain a subterm of the form $\lambda x.\bot$ or $(\bot)M'$, then 
	$M\bisim N$.
\end{lem}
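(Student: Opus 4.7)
The plan is to proceed by nested coinduction and induction on the derivations of ${}\vdash M$ and ${}\vdash N$, following the mixed structure of $\lbinf{001}$, and to reason by cases on the top-level shape of $M$. The central observation is that under the hypothesis, the top-level constructor of $M$ is determined by any element of $\Tay(M)$, so equal Taylor expansions force matching head structures.

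First I dispatch the degenerate case: if $\Tay(M) = 0$, then \cref{norm:rem:zero} combined with the hypothesis forces $M \bisim \bot$ (since $M$ contains neither $\lambda x.\bot$ nor $(\bot)M'$); symmetrically $N \bisim \bot$, and we are done. Otherwise $\Tay(M) \neq 0$, and the same reasoning ensures that neither $M$ nor $N$ is $\bot$. I then split on the inductive head of $M$. If $M \bisim x$, then $\Tay(M) = \{x\}$ is a singleton containing a variable; since every element of the Taylor expansion of an abstraction (resp.\ application) is itself an abstraction (resp.\ application), $N$ must also be $x$. If $M \bisim \lambda x.M'$, all approximants of $M$ are abstractions, so $N \bisim \lambda x.N'$ (choosing bound names coherently), and stripping the top $\lambda x$ yields $\Tay(M') = \Tay(N')$; the (inductive) induction hypothesis gives $M' \bisim N'$. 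If $M \bisim (P)Q$, all approximants of $M$ are applications, so $N \bisim (P')Q'$. Using the empty multiset $1 \in \Tay(Q)^! \cap \Tay(Q')^!$, the approximants of shape $\rapp{s}1$ give $\Tay(P) = \Tay(P')$, and the induction hypothesis yields $P \bisim P'$. Since the hypothesis on $M$ prohibits $P \bisim \bot$, we have $\Tay(P) \neq 0$ by the degenerate case; picking any $s_0 \in \Tay(P)$, approximants of shape $\rapp{s_0}[t]$ then yield $\Tay(Q) = \Tay(Q')$. The derivation of $Q \bisim Q'$ is built coinductively, under the guard of rule $\regle{\coI}$ materialised by the $\later$ modality.

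The main delicate point is that the top-level constructor must be extractable from the Taylor expansion, which rests both on the distinctness of resource-term shapes (variable, abstraction, application) and on the non-emptiness of $\Tay(P)$ in the application subcase; the latter is precisely why the hypothesis on $\lambda x.\bot$ and $(\bot)M'$ subterms is needed. Hereditarity of this hypothesis is immediate — any immediate subterm of $M$ again contains none of the forbidden shapes — so the (co)induction hypotheses remain applicable to $M'$, $P$, $P'$, $Q$ and $Q'$ without further effort.
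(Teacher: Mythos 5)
Your proof is correct and follows essentially the same route as the paper: a nested induction/coinduction on the structure of $M$, using \cref{norm:rem:zero} to handle $\bot$ and the non-emptiness of $\Tay(P)$, matching top-level constructors via the shapes of approximants, and guarding the recursive call on the argument of an application with the coinductive rule. The only cosmetic difference is that you recover $\Tay(P)=\Tay(P')$ and $\Tay(Q)=\Tay(Q')$ by picking explicit witnesses ($\rapp{s}1$ and $\rapp{s_0}[t]$) where the paper invokes the injectivity of $(s,\ms t)\mapsto\rapp{s}{\ms t}$ together with $\Tay(M)\neq 0$; these are interchangeable.
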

	
	\begin{proof}
	By nested induction and coinduction on the structure of $M$:
		\begin{itemize}
		\item If $M\bisim \bot$, then $\Tay(N)=\Tay(M)=0$, hence $N\bisim\bot$ 
			by \cref{norm:rem:zero}.
		\item Likewise, if $M\bisim x$, then $\Tay(N)=\Tay(M)=\{x\}$ so $x 
		\apptay N$, and thus $N\bisim x$.
		\item If $M\bisim \lambda x.M'$, then $\Tay(N)=\Tay(M)=\lambda x.\Tay(M')$.
			By assumption, $M'\nbisim \bot$ so, by \cref{norm:rem:zero},
			there exists $s \in \Tay(M')$ and then $\lambda x.s \apptay N$.
			Thus, $\exists N'\in\lbinf{001},\ N\bisim \lambda x.N'$.
			Furthermore, we must have $\Tay(M')=\Tay(N')$ so,
			by induction hypothesis, $M'\bisim N'$ and finally $M \bisim N$.
		\item If $M\bisim (M')M''$, then $\Tay(N)=\Tay(M)=\rapp{\Tay(M')}\Tay(M'')^!$.
			By assumption, $M' \nbisim \bot$ so, by \cref{norm:rem:zero},
			there exist $s \in \Tay(M')$ and $\ms t \in \Tay(M'')^!$,
			and then $\rapp{s}\ms{t} \apptay N$.
			Thus, there exist $N',N''\in\lbinf{001}$ such that $N\bisim (N')N''$.
			The fact that $\Tay(M)\not=0$ together with the injectivity of 
			$(s,\ms{t})\mapsto \rapp{s}{\ms{t}}$
			ensure that $\Tay(M')=\Tay(N')$ and $\Tay(M'')^! = \Tay(N'')^!$,
			hence $\Tay(M'') = \Tay(N'')$.
			We deduce $M'\bisim N'$ by induction
			and proceed coinductively to prove $M''\bisim N''$.
		\qedhere
		\end{itemize}
	\end{proof}

\begin{rem}
	The previous lemma does establish the injectivity of $\Tay(-)$ when
	restricted to $\linf{001}$.
\end{rem}

Now we have all the necessary material available,
we can state the Commutation theorem, as well as some corollaries. 
Contrary to its original formulation \autocite[see][p.~193]{EhrhardRegnier06},
no specific definition of $\Tay(\BT{M})$ is needed here
thanks to the extension of the Taylor expansion to infinitary λ$\bot$-terms.

\begin{thm}[Commutation theorem] \label{norm:thm:commut}
	For all term $M\in\lbinf{001}$, $\nft{r}{\Tay(M)} = \Tay(\BT{M})$.
\end{thm}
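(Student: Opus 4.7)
The plan combines the simulation of $\flbbi$ by the resource reduction with the confluence and normalisation of the latter. First, I would apply \cref{norm:lem:bbot_wn} to get $M \flbbi \BT{M}$, and then \cref{norm:lem:simul_bbot} to transport this reduction to the Taylor expansion: $\Tay(M) \flrst \Tay(\BT{M})$. Since $\BT{M}$ is a β$\bot$-normal form by \cref{norm:lem:bt_nf}, \cref{norm:lem:taylor_nf} ensures that every element of $\Tay(\BT{M})$ is a $\flr$-normal resource term.

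Next, I would unfold the definition of $\flrst$ using \cref{taylor:rem:flrst}, which provides a decomposition $\Tay(\BT{M}) = \sum_{s \in \Tay(M)} S'_s$ with $s \flrs S'_s$ for every $s \in \Tay(M)$. Since, as sets of resource terms, each $S'_s$ is a subset of $\Tay(\BT{M})$, it consists entirely of normal forms, and is therefore itself $\flr$-normal. Combining this with $s \flrs S'_s$ and the confluence of $\flr$ (\cref{taylor:lem:confl_term_flr}), unicity of normal forms forces $S'_s = \nf{r}{s}$: indeed, from $s\flrs S'_s$ and $s\flrs\nf{r}{s}$ one obtains some common reduct $U$, and both endpoints being in normal form, $S'_s = U = \nf{r}{s}$. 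Putting everything together:
\[
\nft{r}{\Tay(M)}
= \sum_{s \in \Tay(M)} \nf{r}{s}
= \sum_{s \in \Tay(M)} S'_s
= \Tay(\BT{M}).
\]

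No serious obstacle is expected here: the hard work lies in the infinitary simulation \cref{simul:thm:simul_flbi} and its β$\bot$ extension \cref{norm:lem:simul_bbot}, both of which are already established. The only subtlety is the extraction of pointwise equalities $S'_s = \nf{r}{s}$ from the aggregate reduction $\Tay(M)\flrst\Tay(\BT M)$; this is enabled precisely by the additive characterisation of \cref{taylor:rem:flrst}, which in turn leans on the contextual formulation of rule $\regle{\Sigma_r}$ justified in \cref{taylor:rem:sigma_r}. In other words, the technical choice made earlier in the paper pays off cleanly at this point.
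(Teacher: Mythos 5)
Your proof is correct and follows essentially the same route as the paper: weak β$\bot$-normalisation gives $M\flbbi\BT{M}$, the simulation result transports this to $\Tay(M)\flrst\Tay(\BT{M})$, and the latter is a normal form. The only difference is that you make explicit the final identification of $\Tay(\BT{M})$ with $\nft{r}{\Tay(M)}$ via the additive decomposition of $\flrst$ and confluence, a step the paper leaves implicit; your spelled-out version is a welcome clarification rather than a divergence.
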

	
	\begin{proof}
	From \cref{norm:lem:bbot_wn}, we know that $M\flbbi \BT{M}$.
	Using the simulation theorem (\cref{norm:lem:simul_bbot}), 
	we deduce that $\Tay(M)\flrst \Tay(\BT{(M)})$, 
	which itself is in normal form because $\BT{M}$ is, 
	using \cref{norm:lem:taylor_nf,norm:lem:bt_nf}.
	This is the desired result (\emph{via} \cref{taylor:nota:nft}).
	\end{proof}

 From the Commutation theorem we can deduce the following two results, originally proved in \autocite{KennawayAl97} and later reformulated as a particular case of confluence modulo any set of \emph{strongly meaningless terms} \autocite{Czajka20,BarendregtManzonetto}. 

\begin{cor}[unicity of β$\bot$-normal forms] \label{norm:cor:bbot_unf}
	Let $M\in\lbinf{001}$ be a term,
	then $\BT{M}$ is its unique β$\bot$-normal form. 
	Furthermore, if $M \in \linf{001}$ and $\BT{M}\in\linf{001}$, 
	then the latter is the unique β-normal form of $M$.
\end{cor}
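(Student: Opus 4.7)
The plan is to bootstrap from the fact that $\BT{M}$ is both reachable and in β$\bot$-normal form, using the Taylor expansion as a bridge to rule out any other normal form. Existence is already at hand: \cref{norm:lem:bbot_wn} gives $M\flbbi\BT{M}$, and \cref{norm:lem:bt_nf} ensures $\BT{M}$ is in β$\bot$-normal form.

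For uniqueness, I would take an arbitrary $N\in\lbinf{001}$ in β$\bot$-normal form with $M\flbbi N$ and apply the simulation \cref{norm:lem:simul_bbot} to obtain $\Tay(M)\flrst\Tay(N)$. By \cref{taylor:rem:flrst} this decomposes as $\Tay(N)=\sum_{s\in\Tay(M)} T_s$ with $s\flrs T_s$. Since $N$ is β$\bot$-normal, \cref{norm:lem:taylor_nf} ensures that every element of $\Tay(N)$ is a resource-normal term, hence so is each finite sum $T_s$; combined with confluence and weak normalisation of $\flr$ (\cref{taylor:lem:confl_term_flr}), this forces $T_s=\nf{r}{s}$, whence $\Tay(N)=\nft{r}{\Tay(M)}=\Tay(\BT{M})$ by the Commutation theorem (\cref{norm:thm:commut}). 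I would then close by invoking quasi-injectivity (\cref{norm:lem:taylor_quasi_inj}): its side condition is met on the side of $N$ because any subterm of the form $\lambda x.\bot$ or $(\bot)M'$ would be a $\bot_0$-redex contradicting normality, and on the side of $\BT{M}$ because the coinductive definition introduces $\bot$ only as a whole Böhm subtree, never immediately beneath a binder or in function position. This yields $N\bisim\BT{M}$.

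The second part I would reduce to the first. If $\BT{M}\in\linf{001}$, the strengthening in \cref{norm:lem:bbot_wn} already provides $M\flbi\BT{M}$, and $\BT{M}$ is plainly β-normal. Conversely, let $N\in\linf{001}$ be β-normal with $M\flbi N$. Then $N$ is in head normal form, hence solvable; moreover every subterm of $N$ is itself β-normal and therefore solvable, and $N$ contains no $\bot$. Thus no $\bot_0$- or $\beta_0$-redex appears anywhere in $N$, making it β$\bot$-normal. Since $\mathord{\flbi}\subseteq\mathord{\flbbi}$, the first part applies and gives $N\bisim\BT{M}$. I expect the main technical hurdle to be the middle step in the β$\bot$ case, namely extracting pointwise uniqueness of resource normal forms from the set-level reduction $\Tay(M)\flrst\Tay(N)$; this crucially relies on the contextuality of $\flrs$ emphasised in \cref{taylor:rem:sigma_r}.
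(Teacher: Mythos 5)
Your proof is correct and follows essentially the same route as the paper's: simulate $M\flbbi N$ to get $\Tay(N)=\nft{r}{\Tay(M)}=\Tay(\BT{M})$ exactly as in the proof of the Commutation theorem, then conclude by quasi-injectivity (\cref{norm:lem:taylor_quasi_inj}), whose side conditions hold because both $N$ and $\BT{M}$ are β$\bot$-normal. Your explicit handling of the second claim --- showing that a β-normal $N\in\linf{001}$ is automatically β$\bot$-normal because all its subterms are head normal forms and it contains no $\bot$ --- fills in a step the paper leaves implicit, and it is sound.
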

	
	\begin{proof}
	Suppose there is an $N\in\lbinf{001}$ in β$\bot$-normal form such that 
	$M\flbbi N$. Then by \cref{norm:lem:bt_nf,norm:thm:commut},
	$\Tay(N)
	= \Tay(\BT{N})
	= \nft{r}{\Tay(N)}
	= \nft{r}{\Tay(M)} 
	= \Tay(\BT{M})$.
	Since neither $\BT{M}$ nor $N$
	(that are in β$\bot$-normal form)
	can contain a subterm of the form $\lambda x.\bot$ or $(\bot)P$,
	we can apply \cref{norm:lem:taylor_quasi_inj} and obtain $N\bisim \BT{M}$.
	\end{proof}

\begin{rem}
	For terms $M,N \in \linf{001}$, write $M \eqtay N$ whenever
	$\nft{r}{\Tay(M)} = \nft{r}{\Tay(N)}$, and 
	$M \eqbohm N$ whenever $\BT{M}\bisim\BT{N}$.
	\cref{norm:thm:commut} entails that $M \eqbohm N$ implies $M \eqtay N$.
	The proof of the previous corollary can be adapted to obtain the reverse
	implication:
	if $M\eqtay N$ then $\Tay(\BT{M})=\Tay(\BT{N})$
	by \cref{norm:thm:commut}, and then \cref{norm:lem:taylor_quasi_inj}
	entails $\BT{M}=\BT{N}$.

	This means in particular that all the models mentioned in
	\cref{simul:rem:models} are \emph{sensible}, \emph{i.e.} they equate all
	unsolvable terms.
\end{rem}

\begin{cor}[confluence of the β$\bot$-reduction] \label{norm:cor:bbot_confl}
	The reduction $\flbbi$ is confluent.
\end{cor}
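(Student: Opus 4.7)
The plan is to exhibit the Böhm tree $\BT{M}$ as a common $\flbbi$-reduct, leveraging essentially everything built up in this section.

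Suppose $M \flbbi N_1$ and $M \flbbi N_2$. Applying the simulation of β$\bot$-reduction (\cref{norm:lem:simul_bbot}) to each side yields $\Tay(M) \flrst \Tay(N_j)$ for $j \in \{1,2\}$. Since the normal form of a set of resource terms is computed pointwise and $\flr$ is confluent on each individual term (\cref{taylor:lem:confl_term_flr}), this implies $\nft{r}{\Tay(M)} = \nft{r}{\Tay(N_j)}$, i.e.\ $M \eqtay N_j$.

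By the Commutation theorem (\cref{norm:thm:commut}), it then follows that $\Tay(\BT{M}) = \Tay(\BT{N_j})$; and since Böhm trees are in β$\bot$-normal form (\cref{norm:lem:bt_nf}), they contain no subterm of shape $\lambda x.\bot$ nor $(\bot)M'$, so \cref{norm:lem:taylor_quasi_inj} applies and delivers $\BT{M} \bisim \BT{N_j}$. This step is precisely the reverse implication sketched in the remark following \cref{norm:cor:bbot_unf}. Finally, weak β$\bot$-normalisation (\cref{norm:lem:bbot_wn}) yields $N_j \flbbi \BT{N_j} \bisim \BT{M}$ for each $j$, closing the confluence diagram with common reduct $\BT{M}$.

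The only minor obstacle is a matter of types rather than logic: \cref{norm:lem:bbot_wn}, the definition of $\BT{-}$ and \cref{norm:thm:commut} are stated on $\linf{001}$, whereas $N_1$ and $N_2$ may live in $\lbinf{001}$. This gap is filled exactly as \cref{norm:lem:simul_bbot} extends \cref{simul:thm:simul_flbi}: any term of the form $\bot$, $\lambda x.\bot$ or $(\bot)M'$ is unsolvable and $\bot$-reduces to $\bot$, so has Böhm tree $\bot$; all other cases are handled verbatim as in $\linf{001}$.
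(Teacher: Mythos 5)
Your proof is correct and follows essentially the same route as the paper: the common reduct is the Böhm tree, reached via weak β$\bot$-normalisation (\cref{norm:lem:bbot_wn}), and the two Böhm trees are identified through the Taylor expansion. The only difference is presentational — the paper simply cites the unicity of β$\bot$-normal forms (\cref{norm:cor:bbot_unf}) to close the diagram, whereas you re-derive that step inline via simulation, the Commutation theorem and \cref{norm:lem:taylor_quasi_inj}; your explicit handling of the $\lbinf{001}$ versus $\linf{001}$ typing of $N_1,N_2$ is a point the paper glosses over.
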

	
	\begin{proof}
	Given $M,N,N' \in \lbinf{001}$:
	\begin{center}
	\begin{tikzpicture}
	\path	( 0, 0)	node (m)	{ $M$ }
	++		( 2, 1)	node (n)	{ $N$ }
	++		( 0,-2)	node (p)	{ $N'$ }
	++		( 5, 0)	node (btp)	{ $\BT{N'}$ }
	++		( 0, 2)	node (btn)	{ $\BT{N}$ } ;
	\draw[->] (m) -- (n) 
		node [pos=0.9, below] { \scriptsize $\beta\bot$ }
		node [pos=0.8, above] { \scriptsize $\infty$ }		;
	\draw[->] (m) -- (p) 
		node [pos=0.8, below] { \scriptsize $\beta\bot$ }
		node [pos=0.9, above] { \scriptsize $\infty$ }		;
	\draw[->] (n) -- (btn) 
		node [pos=0.9, below] { \scriptsize $\beta\bot$ }
		node [pos=0.9, above] { \scriptsize $\infty$ }
		node [pos=0.45, below] { \small (Lem. \ref{norm:lem:bbot_wn}) } ;
	\draw[->] (p) -- (btp) 
		node [pos=0.9, below] { \scriptsize $\beta\bot$ }
		node [pos=0.9, above] { \scriptsize $\infty$ }
		node [pos=0.5, below] { \small (Lem. \ref{norm:lem:bbot_wn}) } ;
	\draw[double,double distance=3pt] (btn) -- (btp)
		node [pos=0.45, right] { \small (Cor. \ref{norm:cor:bbot_unf}) } ;
	\end{tikzpicture}\vspace*{-2\baselineskip}
	\end{center}
	\end{proof}

Another consequence is the following characterisation of normalising terms, which again is an infinitary counterpart to some folklore finitary result. Whereas the finitary case relies on \emph{positive} resource terms (terms with no occurrence of the empty multiset~$1$), we have to refine this concept by considering \emph{$d$-positive} terms, that is terms with no occurrence of $1$ at depth smaller than $d$.

\Needspace{2cm}
\begin{defi}[$d$-positive resource terms]
	Given an integer $d\in\NN$, the set $\lrpl{d}{}$ of \emph{$d$-positive} resource terms is defined inductively as follows:
	\begin{itemize}
	\item $\lrpl{0}{} \defeq \lr{}$,
	\item if $d\geq 1$, $\lrpl{d}{} \defeq \Vv \ |\ \lambda\Vv.\lrpl{d}{} \ |\ \rapp{\lrpl{d}{}}\lrpl{d-1}{!}$ with $\lrpl{d}{!} \defeq \Mfin(\lrpl{d}{}) \setminus \{1\}$.
	\end{itemize}
\end{defi}

\begin{cor}[characterisation of normalising terms] \label{norm:thm:charac_norm}
	Let $M\in\linf{001}$ be a term,
	then the following propositions are equivalent:
	\begin{enumerate}
	\item there exists $N\in\linf{001}$ in β-normal form such that $M\flbi N$,
	\item for any $d\in\NN$, there exists $s\in\Tay(M)$ such that $\nf{r}{s}$ contains a $d$-positive term.
	\end{enumerate}
\end{cor}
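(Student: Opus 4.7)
The plan is to go through the Böhm tree $\BT{M}$, using the Commutation theorem (\cref{norm:thm:commut}) as the bridge. Both implications reduce to a statement about $\Tay(\BT{M})$: by \cref{norm:thm:commut}, condition (2) is equivalent, via $\nft{r}{\Tay(M)} = \Tay(\BT{M})$, to saying that $\Tay(\BT{M})$ contains a $d$-positive term for every $d\in\NN$. For (1) $\Rightarrow$ (2) I would first identify $N \bisim \BT{M}$ using \cref{norm:cor:bbot_unf} (since $N$ is a β$\bot$-normal reduct of $M$), so $\BT{M} \in \linf{001}$. Then I would show that every normal form $K \in \linf{001}$ admits a $d$-positive approximant, by induction on $d$. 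Since $K \bisim \lambda x_1 \dots x_m.(y)K_1 \dots K_n$ by \cref{norm:lem:head} (as $K \nbisim \bot$), the base case $d = 0$ is trivial (take approximants with empty multiset arguments), and for $d \geq 1$ the induction hypothesis applied to each $K_i$ (itself a normal form in $\linf{001}$) yields $(d-1)$-positive approximants $t_i$, from which $\lambda x_1 \dots x_m.\rapp{\dots\rapp{y}{[t_1]}\dots}{[t_n]}$ is the desired $d$-positive approximant of $K$.

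For (2) $\Rightarrow$ (1): by \cref{norm:thm:commut}, $\Tay(\BT{M})$ contains a $d$-positive element for every $d$, and I would then deduce $\BT{M} \in \linf{001}$, from which \cref{norm:lem:bbot_wn} and \cref{norm:lem:bt_nf} provide $M \flbi \BT{M}$ with $\BT{M}$ normal. The key lemma is a converse to the previous construction: if $s \apptay T$ with $s$ $d$-positive and $d \geq 1$, then $T$ contains no $\bot$-subterm at 001-depth $\leq d$ (the case of depth $0$ being handled automatically by $\Tay(\bot) = 0$, so $T \nbisim \bot$ as soon as $s$ exists). I would prove this by induction on the derivation of $s \apptay T$; the only non-trivial case is $s \bisim \rapp{s'}[t_1,\dots,t_k] \apptay (T')T''$, where $d$-positivity of $s$ forces $k \geq 1$ (hence $T'' \nbisim \bot$) and each $t_i$ to be $(d-1)$-positive, and the induction hypothesis handles $T'$ (no $\bot$ at 001-depth $\leq d$) and $T''$ (no $\bot$ at 001-depth $\leq d-1$ of $T''$, which becomes $\leq d$ in $T$ thanks to the extra 001-level at the argument position). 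Applying this to $T \bisim \BT{M}$ with arbitrarily large $d$ gives $\BT{M} \in \linf{001}$.

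The main obstacle I expect is keeping the interaction between $d$-positivity and 001-depth precise in the above lemma: at an application argument, the 001-depth increases by one while the required level of positivity of the multiset decreases by one, and both directions of the equivalence hinge on this shift aligning correctly. Beyond that, the inductions are routine and essentially reuse the structural results (commutation, unicity of β$\bot$-normal forms, shape of head normal forms) already available in the paper.
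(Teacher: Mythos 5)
Your proof is correct, and it rests on the same combinatorial insight as the paper's --- the shift between $d$-positivity of an approximant and the 001-depth of the approximated term across argument positions of applications --- but it factors the argument differently. The paper keeps the dynamics in view throughout: for (1)~$\Rightarrow$~(2) it builds, by induction on $d$, an explicit chain of resource reductions $s \flrs \lambda \vec{x}.\rapp{\dots\rapp{y}{[s_1]}\dots}{[s_n]} + S \flrs \dots$ by invoking \cref{simul:thm:simul_flbi} at each level, and for (2)~$\Rightarrow$~(1) it runs a nested induction/coinduction on $\BT{M}$ that propagates hypothesis~(2) to the subterms $M_i$, again via the simulation theorem. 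You instead route everything through \cref{norm:thm:commut} once, reducing both implications to purely \emph{static} statements about $\apptay$ and $\BT{M}$: that a $\bot$-free normal form admits $d$-positive approximants for every $d$ (your induction on $d$ over the head-normal-form shape), and conversely that a $d$-positive approximant forbids $\bot$ at 001-depth $\leq d$ (your induction on the derivation of $s \apptay T$). This buys a cleaner separation of concerns --- no repeated appeal to the simulation theorem, and two reusable lemmas about the approximation relation --- at the cost of having to argue explicitly that every occurrence of $\bot$ in a term of $\lbinf{001}$ sits at some finite 001-depth (true, since the inductive layers are finite) and that the approximation rules force an approximant to cover the entire 001-depth-0 layer of the approximated term, which is exactly where your ``alignment'' worry lives; both points do go through. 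One small step you should make explicit in (1)~$\Rightarrow$~(2): to apply \cref{norm:cor:bbot_unf} you need $N$ to be a $\beta\bot$-normal form, which holds because a $\beta$-normal form in $\linf{001}$ contains no $\bot$ and all its subterms are head normal forms, hence solvable.
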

	
	\begin{proof}
	Suppose (1), that is to say $\BT{M}\in\linf{001}$ by 
	\cref{norm:cor:bbot_unf}. 
	In particular, $M$ is solvable.
	By induction on $d\in\NN$:
	\begin{itemize}
	\item Case $d=0$. By solvability of $M$ and 
	\cref{norm:thm:charac_head} there is an $s\in\Tay(M)$ such that $\nf{r}{s} 
	\neq 0$, \emph{i.e.} it contains a (0-positive) term.
	\item Case $d\geq 1$. Since $M$ is solvable, $M\flbs \lambda 
	x_1\dots\lambda x_m.\left(\dots \left((y)M_1\right)\dots \right)M_n$ and 
	$\BT{M} \bisim \lambda x_1\dots\lambda x_m.\left(\dots 
	\left((y)\BT{M_1}\right)\dots \right)\BT{M_n}$, with $M_i\flbi \BT{M_i}$ 
	and $\BT{M_i}\in\linf{001}$. By induction, for every $i$ there is an 
	$s_i\in\Tay(M_i)$ such that $\nf{r}{s_i}$ contains a $(d-1)$-positive 
	$t_i$. Then by \cref{simul:thm:simul_flbi} there are $s\in\Tay(M)$ and 
	$S,T\in \lrsums $ such that:
		\begin{align*}
		s \quad & \flrs\quad \lambda x_1\dots\lambda x_m. \rapp{ \dots \rapp{ \rapp{y}[s_1] } \dots } [s_n] + S \\
		 \quad & \flrs\quad \lambda x_1\dots\lambda x_m. \rapp{ \dots \rapp{ \rapp{y}[t_1] } \dots } [t_n] + T
		\end{align*}
	where $\lambda x_1\dots\lambda x_m. \rapp{ \dots \rapp{ \rapp{y}[t_1] } \dots } [t_n]$ is in normal form and $d$-positive.
	\end{itemize}
	
	Conversely, we suppose (2) and establish $\BT{M}\in\linf{001}$
	by nested induction and coinduction on $\BT{M}\in\lbinf{001}$.
	First note that (2) ensures that $\nft{r}{\Tay(M)}\not=0$.
	Then  \cref{norm:thm:charac_head} entails that $M$ is solvable, so
	\[M\flhs \lambda x_1\dots\lambda x_m.\left(\dots \left((y)M_1\right)\dots \right)M_n\,.\]
	Now fix $d\in\NN$:
	(2) ensures that we can find $s\in\Tay(M)$ and $t\in\nf{r}{s}$
	so that $t$ is $(d+1)$-positive.
	By \cref{simul:thm:simul_flbi}, $t\in\Tay(\BT{M})$ so it has the shape 
	\[t = \lambda x_1\dots\lambda x_m. \rapp{ \dots \rapp{ \rapp{y}\ms{t_1} } \dots } \ms{t_n}\]
	where $\ms{t_i}\in\lrpl{d}{!}$, 
	\emph{i.e.} each $\ms{t_i}$ contains a normal and $d$-positive $t_{i,1}$. 
	By \cref{simul:thm:simul_flbi} again, there are $s_i\in\Tay(M_i)$ such that 
	$t_{i,1} \in\nf{r}{s_i}$.
	We have thus proved that (2) is valid for each $M_i$.
	We proceed coinductively to establish $\BT{M_i}\in\linf{001}$.
	\end{proof}

If the finitary case, normalisation is also equivalent to the termination of the \emph{left-parallel} reduction strategy, which plays the same role as the head strategy in \cref{norm:thm:charac_head} \autocite[Thm.~4.10]{Olimpieri20}. In our setting, there is of course no finite reduction strategy reaching the normal form of a term. A characterisation of the 001-normalising terms, called \emph{hereditarily head-normalising} (\textsc{hhn}) in the literature, has been shown by Vial by means of infinitary non-idempotent intersection types \autocite{Vial17,Vial21}, thus answering to the so-called \enquote{Klop's problem}. However, there is no hope for an effective characterisation, since \textsc{hhn} terms are not recursively enumerable \autocite{Tatsuta08}.

\subsection{Infinitary contexts and the Genericity Lemma}

To conclude this paper, we use the previous results to extend to $\linf{001}$
a classical result in λ-calculus, the Genericity lemma
\autocite[Prop.~14.3.24]{Barendregt84}.
A similar extension has been proved using completely different techniques 
\autocites[§~5.3]{KennawayAl96}[Thm.~20]{Salibra00}.
The intuition behind this lemma is that
an unsolvable subterm of a normalising term
cannot contribute to its normal form (it is \emph{generic}).
This justifies that unsolvables are taken as a class of meaningless terms —
in fact, the unsolvables are the largest non-trivial set of (formally defined) 
meaningless terms \autocite{SeveriDeVries11,BarendregtManzonetto}.

\begin{defi}[context] \label{norm:def:context}
	The set $\context{\linf{001}}{\hole}$ of 001-infinitary \emph{contexts} is defined by:
	\[ \context{\linf{001}}{\hole} = \nu Y.\mu X.\left( \Vv + \lambda\Vv.X + (X)Y + \hole \right)\]
	where $\hole$ is a constant called the \enquote{hole} (contexts are \emph{not} quotiented by α-equivalence).
	
	Given a context $C\in\context{\linf{001}}{\hole}$ and a term $M\in\linf{001}$, we denote as $\context{C}{M}$ the term obtained by substituting $M$ for each occurrence of $\hole$ in $C$ --- like $C[M/\hole]$, but possibly capturing the free variables of $M$.
\end{defi}

\begin{defi}[resource context]
	The set $\rcontext{\lr{}}{\hole}$ of \emph{resource contexts} is defined, as in \cref{norm:def:context}, by adding the constant $\hole$ to $\lr{}$ (again, without quotienting by α-equivalence).
	
	Given a resource context $c\in\rcontext{\lr{}}{\hole}$ and a resource monomial $\ms{t}\in\lr{!}$, we denote as $\rcontext{c}{\ms{t}}$ the sum of resource terms obtained by substituting each occurrence of $\hole$ in $c$ with exactly one element of $\ms{t}$, or $0$ if the cardinality of $\ms{t}$ does not match the number of occurrences of $\hole$ --- again, like $c\rapp{\ms{t}/\hole}$, but possibly capturing the free variables of $\ms{t}$.
\end{defi}

The Taylor expansion is extended to $\Tay : \context{\linf{001}}{\hole} \to 
\Pp(\rcontext{\lr{}}{\hole})$ by setting $\Tay(\hole) \defeq \{\hole\}$.

\begin{lem} \label{norm:lem:taylor_context}
	Let $C\in\context{\linf{001}}{\hole}$ be a context and $M\in\linf{001}$ be a term. Then:
	\[ \Tay\left( \context{C}{M} \right) = \left\{ \rcontext{c}{\,\ms{t}\,},\ c\in\Tay(C),\ \ms{t}\in\Tay(M)^! \right\}. \]
\end{lem}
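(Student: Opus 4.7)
The plan is to mimic the proof of \cref{simul:lem:simul_subst}, observing that a context $C$ with its hole $\hole$ behaves exactly like a term with a distinguished variable: $\context{C}{M}$ is nothing but $C[M/\hole]$, and the Taylor expansion already treats $\hole$ just as it treats variables, since $\Tay(\hole) = \{\hole\}$. Thus the statement to prove is formally analogous to the simulation of substitution, and the argument should transpose. I would proceed by double inclusion.

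For the inclusion $\supseteq$, I would show by induction on the derivation of $c \apptay C$ that for every $\ms{t} \apptay^! M$, $\rcontext{c}{\ms{t}} \apptay \context{C}{M}$. The case $c \bisim \hole \apptay \hole$ with $\ms{t} = [u]$ yields $\rcontext{c}{\ms{t}} \bisim u \apptay M \bisim \context{C}{M}$; the case $c \bisim x$ gives $\rcontext{c}{1} \bisim x \apptay \context{C}{M}$; the abstraction case applies $\regle{\lambda_\apptay}$ pointwise; the application case $c \bisim \rapp{c'}[c''_1,\dots,c''_m] \apptay (C')C''$ requires splitting $\ms{t}$ into a product $\ms t_0 \cdot \ms t_1 \cdots \ms t_m$ and applying the induction hypothesis on each factor, before concluding by $\regle{@_\apptay}$ and $\regle{!_\apptay}$.

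For the inclusion $\subseteq$, I would proceed by induction on the derivation of $u \apptay \context{C}{M}$, doing a case analysis on the top-level structure of $C$:
\begin{itemize}
\item if $C \bisim \hole$, then $\context{C}{M} \bisim M$ and $u \apptay M$, so we set $c \defeq \hole$ and $\ms{t} \defeq [u]$;
\item if $C \bisim x$, then $\context{C}{M} \bisim x$ and $u \bisim x$, so we set $c \defeq x$ and $\ms{t} \defeq 1$;
\item if $C \bisim \lambda x.C'$, then $u \bisim \lambda x.u'$ with $u' \apptay \context{C'}{M}$, and we apply the induction hypothesis;
\item if $C \bisim (C')C''$, then $u \bisim \rapp{u'}\ms{u''}$ with $u' \apptay \context{C'}{M}$ and $\ms{u''} \bisim [u''_1,\dots,u''_m] \apptay^! \context{C''}{M}$; the induction hypothesis applied to $u'$ and to each $u''_i$ yields approximants $c', c''_1,\dots,c''_m$ and monomials $\ms t_0, \ms t_1,\dots,\ms t_m$ which we combine into $c \defeq \rapp{c'}[c''_1,\dots,c''_m]$ and $\ms t \defeq \ms t_0 \cdot \ms t_1 \cdots \ms t_m$.
\end{itemize}

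The main subtlety, exactly as in \cref{simul:lem:simul_subst}, lies in the application case: one has to split the multiset $\ms{t}$ across the possibly many occurrences of $\hole$ spread between the two subcontexts, and observe that the expression $\rcontext{c}{\ms{t}} = c\rapp{\ms t/\hole}$ from \cref{taylor:def:rsubst} precisely enumerates all such splittings via the sum over permutations. No coinduction is needed here since $\apptay$ is inductive; the outer context $C$ is traversed only as far as the (finite) approximant $c$ requires.
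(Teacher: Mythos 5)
Your proposal is correct and is essentially the paper's argument: the paper proves this lemma simply by declaring it a \enquote{direct consequence} of \cref{simul:lem:simul_subst}, since a context is a term with $\hole$ playing the role of the substituted variable and $\Tay(\hole)=\{\hole\}$. You merely spell out that reduction by replaying the double-inclusion induction of the substitution lemma with $\hole$ in place of $x$, which is exactly what the paper leaves implicit.
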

	
	\begin{proof}
	Direct consequence of \cref{simul:lem:simul_subst}.
	\end{proof}

\begin{lem}[characterisation of $\Tay$ by the $d$-positive elements] \label{norm:lem:charac_taylor_dpos}
	Let $M,N\in\linf{001}$ be terms. If for any $d\in\NN$ there exists a 
	$d$-positive $s_d\in\Tay(M)\cap\Tay(N)$, then $M\bisim N$.
\end{lem}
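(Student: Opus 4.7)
The plan is to proceed by nested coinduction and induction, following the structure $\linf{001} = \nu Y.\mu X.(\Vv + \lambda\Vv.X + (X)Y)$. I would build a derivation of $M \bisim N$ coinductively, and within each coinductive step reason by induction on the structure of a $1$-positive common approximant $s_1 \in \Tay(M) \cap \Tay(N)$, which exists by the hypothesis applied with $d = 1$. The invariant preserved along the coinduction is the hypothesis of the lemma itself: for every pair of subterms currently under comparison and every $d \in \NN$, there exists a $d$-positive common approximant in their Taylor expansions.

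Case analysis on $s_1$ then determines the top-level inductive structure shared by $M$ and $N$. If $s_1 = y$, inversion of $\apptay$ forces $M \bisim y \bisim N$. If $s_1 = \lambda y.s'_1$, then $M \bisim \lambda y.M'$ and $N \bisim \lambda y.N'$; moreover, any $d$-positive common approximant of $M$ and $N$ must have the form $\lambda y.s'_d$ with $s'_d$ itself $d$-positive and belonging to $\Tay(M') \cap \Tay(N')$, so the inner induction hypothesis applied to the smaller $1$-positive approximant $s'_1$ yields $M' \bisim N'$. If $s_1 = \rapp{s'_1}\ms{t_1}$, then $M \bisim (M_1)M_2$ and $N \bisim (N_1)N_2$; an entirely analogous extraction gives, from each $d$-positive common approximant of $M$ and $N$, a $d$-positive common approximant of $M_1$ and $N_1$, so the induction hypothesis applied to $s'_1$ yields $M_1 \bisim N_1$.

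For the right argument $M_2, N_2$ we enter a fresh coinductive step, guarded by $\regle{\coI}$, and appeal to the coinduction hypothesis. To satisfy its premises, I need a $d$-positive common approximant of $M_2$ and $N_2$ for every $d \in \NN$. This is where the only genuinely informative ingredient of the proof lies: I take the $(d+1)$-positive common approximant $s_{d+1}$ of $M$ and $N$, decompose it as $\rapp{s''}\ms{u}$, and observe that $\ms{u}$ is by definition a $d$-positive multiset, hence non-empty, with all its elements $d$-positive. Any such element is then a $d$-positive common approximant of $M_2$ and $N_2$, as required. This shift from $d+1$ to $d$ is precisely why the statement asks for approximants of \emph{arbitrary} positivity level rather than just some fixed bound: only then is the hypothesis strong enough to be propagated across the coinductive step. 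Apart from this observation, the argument is a direct unfolding of the definitions of $\apptay$ and of $d$-positivity.
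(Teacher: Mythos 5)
Your proof is correct and follows essentially the same route as the paper's: a coinduction whose inductive layer performs a case analysis forced by a common approximant, with the key step being exactly the paper's observation that a $(d+1)$-positive approximant $\rapp{s''}\ms{u}$ has a non-empty multiset $\ms{u}$ of $d$-positive elements, which propagates the hypothesis to the application's argument across the coinductive guard. The only (inessential) difference is that you phrase the inner induction on the structure of the approximant $s_1$ rather than on the structure of $M$; since $s_1\apptay M$ forces the two to share their top-level inductive shape, the arguments coincide.
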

	
	\begin{proof}
	Under the hypothesis, we establish $M\bisim N$ by nested induction and
	coinduction on the structure of $M$.
	\begin{itemize}
	\item Case $M\bisim x$. For $d=0$, $\exists s_0 \in \Tay(M) \cap \Tay(N)$. 
	Since $s_0 \in \Tay(M)$, $s_0\bisim x$ so $N\bisim x$ too.
	\item Case $M\bisim \lambda x.M'$. Suppose $\forall d\in\NN,\ \exists s_d 
	\in \Tay(M) \cap \Tay(N)$. Since $s_d\in\Tay(M)$, $s_d\bisim \lambda 
	x.s'_d$ for some $d$-positive $s'_d$. $s_d\in\Tay(N)$, whence $N\bisim 
	\lambda x.N'$ for some $N'$, and $s'_d\in\Tay(N')$. By induction, $M'\bisim 
	N'$, so $M\bisim N$.
	\item Case $M\bisim (M')M''$. Suppose $\forall d\in\NN,\ \exists s_d \in 
	\Tay(M) \cap \Tay(N)$. Since $s_d\in\Tay(M)$, $s_d\bisim 
	\rapp{t_d}\ms{u}_d$ for some $t_d\in\lrpl{d}{}$ and 
	$\ms{u}_d\in\lrpl{(d-1)}{!}$. Furthermore $s_d\in\Tay(N)$, whence $N\bisim 
	(N')N''$ for some $N'$ and $N''$ such that $t_d\in\Tay(N')$ and 
	$\ms{u}_d\in\Tay(N'')^!$. Since $\forall d\in\NN,\ t_d 
	\in \Tay(M')\cap\Tay(N')$ the induction hypothesis gives $M'\bisim N'$.
	Moreover, $\forall d\in\NN$, by $(d+1)$-positivity of $s_{d+1}$,
	$\ms{u}_{d+1}$ must contain at least one element $u_{d+1,1}$, which is
	$d$-positive and such that $u_{d+1,1}\in \Tay(M'')\cap\Tay(N'')$.
	Thus we can proceed to establish $M''\bisim N''$ coinductively.
	\qedhere
	\end{itemize}
	\end{proof}

Using the previous work, we can state and show the infinitary Genericity lemma 
— without any further hypotheses than in the finitary setting. Our proof is a 
refinement of the (finitary) proof by Barbarossa and Manzonetto 
\autocite[Thm.~5.3]{BarbarossaManzonetto20}. As stressed by the authors, the 
key feature of the Taylor expansion here is that a resource term cannot erase 
any of its subterms (without being itself reduced to zero). However, in the 
infinitary setting, a term is in general not characterised by a single element 
of its Taylor expansion, which motivates the above characterisation by 
$d$-positive elements.

\begin{thm}[Genericity lemma] \label{norm:thm:genericity}
	Let $M\in\linf{001}$ be an unsolvable term
	and $\context{C}{\hole}$ be a context in $\linf{001}$. 
	If $\context{C}{M}$ has a β-normal form $C^*$, 
	then for any term $N\in\linf{001}$, $\context{C}{N} \flbi C^*$.
\end{thm}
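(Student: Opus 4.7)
The plan is to reduce the goal to an equality of Böhm trees and then invoke \cref{norm:lem:bbot_wn}. Since $C^*$ is a $\flbi$-normal form of $\context{C}{M}$ in $\linf{001}$, \cref{norm:cor:bbot_unf} yields $C^* = \BT{\context{C}{M}}$. It thus suffices to establish $\BT{\context{C}{N}} = C^*$ --- noting in particular that this requires $\BT{\context{C}{N}} \in \linf{001}$ --- after which \cref{norm:lem:bbot_wn} gives $\context{C}{N} \flbi \BT{\context{C}{N}} = C^*$.

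The key step is a computation of $\nft{r}{\Tay(\context{C}{M})}$ that uses the unsolvability of $M$ to eliminate most summands. By \cref{norm:lem:taylor_context}, $\Tay(\context{C}{M})$ is the union of the sums $\rcontext{c}{\ms t}$ for $c \in \Tay(C)$ and $\ms t \in \Tay(M)^!$, where $\rcontext{c}{\ms t} = 0$ unless $|\ms t|$ equals the number of holes of $c$. By \cref{norm:thm:charac_head}, unsolvability gives $\nf{r}{t} = 0$ for every $t \in \Tay(M)$. For any $c$ with $k \geq 1$ holes and any matching $\ms t = [t_1, \dotsc, t_k]$, each summand of $\rcontext{c}{\ms t}$ contains some $t_i$ as a subterm at one of the hole positions; lifting the reduction $t_i \flrs 0$ through the surrounding resource context (via the contextual rules of $\flr$) and using the linearity of the constructors $\lambda x.-$, $\rapp{-}{-}$ and $-\cdot\ms u$ collapses the entire summand to $0$. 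Hence $\nf{r}{\rcontext{c}{\ms t}} = 0$ as soon as $\ms t \neq 1$, so
\[
\nft{r}{\Tay(\context{C}{M})} = \nft{r}{\Ss_C}
\quad\text{where}\quad
\Ss_C \defeq \{c \in \Tay(C) : c \text{ is hole-free}\}.
\]
Since every hole-free $c \in \Tay(C)$ also lies in $\Tay(\context{C}{N})$ (apply \cref{norm:lem:taylor_context} with $\ms t = 1$), this yields the crucial inclusion $\nft{r}{\Tay(\context{C}{M})} \subseteq \nft{r}{\Tay(\context{C}{N})}$, which by \cref{norm:thm:commut} translates to $\Tay(C^*) \subseteq \Tay(\BT{\context{C}{N}})$.

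To conclude, I rely on the $d$-positive characterisation of normal forms. Since $C^* \in \linf{001}$ is in normal form, \cref{norm:thm:charac_norm} (applied to $C^*$ itself) combined with \cref{norm:lem:taylor_nf} provides, for each $d \in \NN$, a $d$-positive $s_d \in \Tay(C^*)$; the inclusion above then gives $s_d \in \Tay(\BT{\context{C}{N}}) = \nft{r}{\Tay(\context{C}{N})}$. Reading \cref{norm:thm:charac_norm} in the converse direction then shows that $\context{C}{N}$ is $\flbi$-normalisable, so $\BT{\context{C}{N}} \in \linf{001}$. The family $(s_d)_{d \in \NN}$ now satisfies the hypothesis of \cref{norm:lem:charac_taylor_dpos} for $C^*$ and $\BT{\context{C}{N}}$, yielding $C^* = \BT{\context{C}{N}}$ and concluding the proof.

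The main obstacle is the annihilation claim of the second paragraph: one must argue rigorously that a resource summand containing a subterm with normal form $0$ itself reduces to $0$, despite the interaction between the contextual rules, the sum-level rule $\regle{\Sigma_r}$, and the possibility that a hole of $c$ sits inside a multiset. This is handled by a step-by-step lifting of $t_i \flrs 0$ through the resource context surrounding each hole position, invoking the normalisation and confluence of $\flrs$ (\cref{taylor:lem:confl_term_flr}) to justify reducing each $t_i$ before the rest of the context, and exploiting the linearity of every constructor to collapse any summand as soon as $0$ appears at a resource-term position.
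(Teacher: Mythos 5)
Your proposal is correct and follows essentially the same route as the paper's proof: both arguments show that unsolvability of $M$ annihilates every approximant of $\context{C}{M}$ built from a $c\in\Tay(C)$ containing the hole, so that the surviving ($d$-positive) normal approximants come from hole-free $c$'s shared with $\Tay(\context{C}{N})$, and both conclude via \cref{norm:thm:charac_norm}, \cref{norm:lem:charac_taylor_dpos} and \cref{norm:lem:bbot_wn}. The only (inessential) difference is presentational: you compute $\nft{r}{\Tay(\context{C}{M})}$ globally and derive an inclusion of normalized Taylor expansions, whereas the paper fixes each $d$-positive witness $t_d$ and rules out holes in the corresponding $c$ by a confluence argument against $\rcontext{c}{[0,\dotsc,0]}=0$.
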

	
	\begin{proof}
	Suppose $\context{C}{M} \flbi C^*$ in β-normal form. Then:
	\begin{flalign*}
	&& \forall d\in\NN,\ \exists s\in\Tay(\context{C}{M}),\
		& \exists t_d\in\lrpl{d}{},\ t_d\in\nf{r}{s}
		& \text{by \cref{norm:thm:charac_norm}} \\
	\text{hence}\\
	&& \forall d\in\NN,\ \exists c\in\Tay(C),\ \exists\ms{m}\in\Tay(M)^!,\
		& \exists t_d\in\lrpl{d}{},\ t_d\in\nf{r}{\rcontext{c}{\ms{m}}}
		& \text{by \cref{norm:lem:taylor_context}.}
	\end{flalign*}
	
	Write $\ms m = [m_1,\ldots,m_n]$ with $n \defeq \deg_{\hole}(c)$.
	By unsolvability of $M$ and \cref{norm:thm:charac_head},
	for each $1\le i\le n$, $m_i \flrs 0$, so by confluence (\cref{taylor:lem:confl_term_flr}) there is a $T_d\in \lrsums $ for each $d\in\NN$ such that:
	\begin{center}
	\begin{tikzpicture}
	\path (0, 0)	node (cm)	{ $\rcontext{c}{\ms{m}}$ }
	+ (2.5,-1.5)	node (c0)	{
		$\rcontext{c}{[\underbrace{0,\ldots,0}_{\text{$n$ times}}]}$
		}
	+ (0,-3)	node (t)	{ $t_d+T_d$ } ;
	\draw[->] (cm) -- (c0) 
		node [pos=0.95, left=2pt] { \scriptsize $r$ }
		node [pos=0.95, above] { \scriptsize $*$ } ;
	\draw[->] (cm) -- (t) 
		node [pos=0.9, left] { \scriptsize $r$ }
		node [pos=0.9, right] { \scriptsize $*$ } ;
	\draw[->] (c0) -- (t) 
		node [pos=0.9, right=2pt] { \scriptsize $r$ }
		node [pos=0.9, above] { \scriptsize $*$ } ;
	\end{tikzpicture}
	\end{center}
	
	If $\hole$ appeared in $c$, then $n = \deg_{\hole}(c) \geq 1$ and 
	$\rcontext{c}{[0,\ldots,0]} = 0$, which is impossible. Thus, there is no 
	occurrence of $\hole$ in $c$, and $\rcontext{c}{1}$ is (the α-equivalence class of) $c$.
	
	Now, take any $N \in \linf{001}$.
	By \cref{norm:lem:taylor_context}, $\rcontext{c}{1} \apptay \context{C}{N}$.
	Since $\rcontext{c}{1} \flrs t_d + T_d$ we have
	$t_d \in \nft{r}{\Tay(\context{C}{N})} = \Tay\left(\BT{\context{C}{N}}\right)$
	by \cref{norm:thm:commut}.
	Similarly, since we had taken $t_d \in \nf{r}{s}$ for some $s \apptay \context{C}{M}$,
	$t_d \in \Tay\left(\BT{\context{C}{M}}\right) = \Tay(C^*)$,
	recalling that $\BT{\context{C}{M}}=C^*$ by \cref{norm:cor:bbot_unf}.
	
	There is a $d$-positive $t_d\in\Tay(\BT{\context{C}{N}})$ for any 
	$d\in\NN$, so we can apply \cref{norm:thm:charac_norm} and deduce that 
	$\BT{\context{C}{N}}\in\linf{001}$. Since in addition we have 
	$t_d\in\Tay(C^*)$, we obtain $\BT{\context{C}{N}}\bisim C^*$ by 
	\cref{norm:lem:charac_taylor_dpos} and $\context{C}{N} \flbi C^*$ by 
	\cref{norm:lem:bbot_wn}.
	\end{proof}

\section{Conclusion}

\paragraph{Summary.}

As the main result of this paper, we showed that the resource reduction of Taylor expansions simulates the infinitary β-reduction of $\linf{001}$ terms (\cref{simul:thm:simul_flbi}). This could be expected from Ehrhard and Regnier's Commutation Theorem, which tightly relates normalisation of the Taylor expansion and normal forms of $\lbinf{001}$ (\emph{aka.} Böhm trees), but remains remarkable in that it enables to simulate an infinitary dynamics with a finitary one.

Using this fact, we were able to give simple proofs of well-known properties of 
$\lbinf{001}$ like confluence (\cref{norm:cor:bbot_confl}), weak normalisation 
(\cref{norm:lem:bbot_wn}), unicity of normal forms (\cref{norm:cor:bbot_unf}). 
We also extended to infinitary terms several λ-calculus results like the 
Commutation Theorem (\cref{norm:thm:commut}), the characterisations of head- 
and β-normalisation through Taylor expansion 
(\cref{norm:thm:charac_head,norm:thm:charac_norm}),
and we provided a new proof of 
the infinitary Genericity Lemma (\cref{norm:thm:genericity}).

As we already underlined, we believe that these results suggest that $\lbinf{001}$ is a reasonable extension of $\Lambda$ to consider when adressing head-normalisation and Taylor expansion issues. In particular, we were able to express the Commutation Theorem without any technical patch for the treatment of Böhm trees and reduction towards them.

\paragraph{Further work.}

The question naturally arises whether the converse of 
\cref{simul:thm:simul_flbi} is also true, 
that is whether $M \flbi N$ whenever $\Tay(M) \flrst \Tay(N)$. 
Similar issues have been successfully addressed in the setting of the algebraic
λ-calculus \autocite{Kerinec19,KerinecVaux23}, 
which suggests such a conservativity result is within reach.

It is in fact possible to show that for ordinary λ-terms
$M,N\in\Lambda$, $\Tay(M) \flrst \Tay(N)$ implies $M\flbs N$.
In the infinitary setting, however, the conjecture fails:
we were able to design terms $A, \bar{A} \in \linf{001}$
such that $\Tay(A) \flrst \Tay(\bar{A})$,
and such that there exists no reduction $A \flbi \bar{A}$.
These results are the subject of a separate paper 
\autocite{CerdaVaux23_acc}.

\medskip

One could also ask whether the Taylor expansion can be further extended
to the $\linf{101}$ and $\linf{111}$ infinitary calculi,
looking for a counterpart to the Commutation Theorem
involving Lévy-Longo and Berarducci trees.
We believe this can be done with only minor adaptions
in the case of Lévy-Longo trees,
but not in the case of Berarducci trees.

Indeed, recall that Böhm trees can be seen as maximal directed sets
of finite λ$\bot$-terms in β$\bot$-normal form,
\ie in \textsc{hnf} \autocite[Sec.~14.3]{Barendregt84}.
The crucial observation by \autocite{EhrhardRegnier08}
is that such terms are isomorphic to affine resource terms in normal form,
the isomorphism mapping the elements of $\BT{M}$
to the affine elements of $\Tay(\BT{M})$.
\begin{itemize}
\item It is easy to design a resource calculus extending this property
	to Lévy-Longo trees (and the corresponding β$\bot$-normal forms, 
	namely \emph{weak head} normal forms):
	one should add a constructor approximating an abstraction with unknown body
	as follows
	\begin{align*}
		\lr{} & \defeq \Vv \ |\ \lambda\Vv.\bullet \ |\ \lambda\Vv.\lr{} \ |\ 
		\rapp{\lr{}}\lr{!} \\
		\lr{!} &\defeq \Mfin(\lr{})\quad
	\end{align*}
	so that we could set $\Tay(\lambda x.M) \defeq \lambda x.\Tay(M) + \lambda x.\bullet$ 
	in order to take into account the possibility to encounter 
	an infinite chain of abstractions.
\item On the other hand, the notion of β$\bot$-normal form
	corresponding to Berarducci trees (\emph{top} normal forms)
	does not immediately enjoy such a property
	because there is no \enquote{top-level} syntactic characterisation
	of top normal forms:
	$(M)N$ is in \textsc{tnf} if $M$ does not reduce to an abstraction,
	which can only be checked by reducing $M$ at an unknown depth.
\end{itemize}
Thus, designing a Taylor approximation for the $\linf{111}$ calculus,
if possible, seems to require more advanced techniques.

\medskip

Finally, we have limited our study to a qualitative setting only:
as explained in \cref{taylor:rem:coef}, it is not difficult 
to extend the definition of Taylor expansion with appropriate coefficients;
but as explained in \cref{taylor:rem:quantitative}, a quantitative version
of our simulation result seems out of reach, if only because the reduction of
infinite weighted sums of resource terms is not well defined in general.
Nonetheless, we conjecture that the Commutation theorem also holds in a
quantitative setting.

Indeed, in their seminal results \autocite{EhrhardRegnier06,EhrhardRegnier08},
Ehrhard and Regnier exploited a uniformity property to show that
the normalization (rather than an arbitrary reduction) of a sum 
of resource terms obtained by Taylor expansion does not generate sums of
coefficients: each term occurring in the normal form is generated by a single
term of the original sum.
It is then possible to deduce the quantitative Commutation theorem from the
qualitative one: this was essentially the path followed by Ehrhard and Regnier,
and revisited by Olimpieri and the second author \autocite{OlimpieriVaux22}.
In the latter work, the qualitative Commutation theorem was established
quite straightforwardly, by proving that Taylor expansion commutes 
with a variant of hereditary head reduction (the reduction strategy 
underlying the definition of Böhm trees).
Moreover, by contrast with arbitrary reduction of resource terms,
the latter reduction strategy does enjoy the uniformity property.
We do believe that this alternative approach can be adapted to the infinitary
setting, in order to deal with quantitative Taylor expansion:
we leave this for future work.

\printbibliography

\end{document}